\documentclass[11pt,reqno]{amsart}
\allowdisplaybreaks[4]
\usepackage{graphicx}  
\usepackage{epsfig}
\usepackage{amssymb}
\usepackage{amsmath}
\usepackage{cite}
\usepackage{tikz}
\usepackage{xy}
\usepackage{booktabs}  
\usepackage{array}
\usepackage{arydshln}
 
\newtheorem{theorem}{Theorem}

\newtheorem{proposition}{Proposition}
\newtheorem{corollary}{Corollary}

\newtheorem{lemma}{Lemma}
\newtheorem{remark}{Remark}
\newtheorem{example}{Example}
\newcommand{\be}{\begin{equation}}
\newcommand{\ee}{\end{equation}}
\newcommand{\bea}{\begin{eqnarray}}
\newcommand{\eea}{\end{eqnarray}}
\newcommand{\ba}{\begin{array}}
\newcommand{\ea}{\end{array}}
\newcommand{\bean}{\begin{eqnarray*}}
\newcommand{\eean}{\end{eqnarray*}}

\newcommand{\La}{\Lambda}

\newcommand{\pa}{\partial}

\begin{document}

\title{The  transformations of the mToda hierarchy in tau functions}
\author{Wenchuang Guan$^1$, Shen Wang$^2$, Bailin Zhang$^1$,   Jipeng Cheng$^{3*}$}
\dedicatory { $^1$ School of Mathematics, China University of
Mining and Technology, \\
Xuzhou, Jiangsu 221116, P.\ R.\ China\\
 $^2$School of Mathematics and Statistics, Xuzhou University of Technology, \\ Xuzhou, Jiangsu 221018, P.\ R.\ China\\
 $^3$ School of Mathematical Sciences, Huaqiao University, \\
 Quanzhou, Fujian 362021, P.\ R.\ China}
\thanks{$^*$Corresponding author. Email: chengjp@hqu.edu.cn \& chengjipeng1983@163.com.}
\begin{abstract}
In this paper, we investigate the modified Toda (mToda) hierarchy, which can be regarded as the 2-component first modified Kadomtsev-Petviashvili (mKP) hierarchy. We first investigate the  connection between the Toda and mToda  tau  functions. Based on this, we construct the transformations for the mToda  tau functions  and Lax operators. Furthermore, we present the mToda squared eigenfunction symmetries  and derive the Adler-Shiota-van Moerbeke (ASvM) formula, which plays a crucial role by connecting the actions of the additional symmetries on the wave functions with the Sato--B\"{a}cklund transformations of the tau functions. Finally, by establishing the equivalence between the actions of vertex operators on the mToda tau functions and the multi-step mToda transformations, we derive the mToda addition formulas, also known as the generalized Fay identities.\\
\textbf{Keywords}:   modified Toda hierarchy;  tau functions; bilinear equations; Darboux transformation; symmetry; addition formulas.\\
\textbf{MSC 2020}: 35Q51, 35Q53, 37K10, 37K40\\
\textbf{PACS}: 02.30.Ik\\
\end{abstract}
\maketitle

\tableofcontents
\section{Introduction}
The Kadomtsev-Petviashvili (KP)  hierarchy is fundamentally significant in the theory of integrable systems, with broad applications spanning multiple disciplines such as partial differential equations\cite{DJKM}, the Gromov-Witten theory \cite{Witten1990}, topological recursion \cite{Alexandrov2026,Alexandrov20261,Wang20251},  Hurwitz numbers \cite{Alexandrov2023}, infinite-dimensional Lie algebras \cite{Kac2023,Jimbo,Bourgine2024}, and so on \cite{Kupershmidt1985,Kupershmidt1995,Harnad2021,Cao2026,Krichever2021,Zabrodin225}.  The multi-component generalizations are very important in  KP theory \cite{Kac2023,Lui2024,Ueno1984,Takebe2025,Savchenko2026,Savchenko20251,Savchenko20261}. Here we are interested in  the modified Toda (mToda) hierarchy, which  can be regarded as the 2-component first  modified Kadomtsev-Petviashvili  (mKP)  hierarchy \cite{Rui2024}.
 \subsection{The mToda hierarchy} The mToda hierarchy is defined by the following bilinear equation:
\begin{align}\label{HirotamToda}
&\oint_{C_{R}}\frac{dz}{2\pi {\rm i}  }z^{n-n'-1}\tau_{0, n}(\mathbf{t}-[z^{-1}]_1)
\cdot\tau_{1, n'}(\mathbf{t}'+[z^{-1}]_1)e^{\xi(\mathbf{t}^{(1)}-\mathbf{t}^{(1)'},z)}\nonumber\\
&\quad\quad\quad
+\oint_{C_{r}}\frac{dz}{2\pi{\rm i}  }z^{n-n'}\tau_{0, n+1}(\mathbf{t}-[z]_2)
\cdot\tau_{1,n'-1}(\mathbf{t}'+[z]_2)e^{\xi(\mathbf{t}^{(2)}-\mathbf{t}^{(2)'},z^{-1})}=\tau_{1,n}
(\mathbf{t})\tau_{0,n'}(\mathbf{t'}),
\end{align}
where     $\mathbf{t}=(\mathbf{t}^{({1})},\mathbf{t}^{(2)})$, $\mathbf{t}^{(a)}=(t^{(a)}_1,t_2^{(a)}, \ _{\cdots})$,  $[z^{-1}]=(z^{-1},z^{-2}/2, \ _{\cdots})$, $\mathbf{t}-[z^{-1}]_a=
(\mathbf{t}^{({1})},\mathbf{t}^{(2)})
|_{{\mathbf{t}^{(a)}\rightarrow}\mathbf{t}^{(a)}-[z^{-1}]}$, and $\xi(\mathbf{t}^{(a)},z)=\sum_{j\geq1}t^{(a)}_jz^j$ for $a=1,2$.
$C_R$
  denotes the anticlockwise circular contour
$|z|=R$ for sufficiently large $R$, and $C_r$ denotes the anticlockwise circular contour
$|z|=r$ for sufficiently small $r$.
   Here $(\tau_{0,n}(\mathbf{t}),\tau_{1,n}
(\mathbf{t}))$ is called the mToda tau pair.

If we introduce the mToda wave functions $\Psi_a(n,\mathbf{t},z)$ and the mToda adjoint wave functions $\Psi_a^{*}(n,\mathbf{t},z)$ $(a=1,2)$,
\begin{align}
		&\Psi_1(n,\mathbf{t}, z)=\frac{ \tau_{0,n}(\mathbf{t}-[z^{-1}]_1) }{ \tau_{1,n}(\mathbf{t})  }z^{n}e^{\xi(\mathbf{t}^{(1)}, z)},  \quad
		\Psi_2(n,\mathbf{t}, z)= \frac{ \tau_{0,n+1}(\mathbf{t}-[z]_2) }{ \tau_{1,n}(\mathbf{t})  }z^{n}e^{\xi(\mathbf{t}^{(2)},z^{-1})}, \label{mToadwave1}\\
		&\Psi_1^{*}(n,\mathbf{t}, z)=\frac{ \tau_{1,n}(\mathbf{t}+[z^{-1}]_1) }{ \tau_{0,n}(\mathbf{t})  }z^{-n}e^{-\xi(\mathbf{t}^{(1)}, z)}, \quad
		\Psi_2^{*}(n,\mathbf{t}, z)=\frac{ \tau_{1,n-1}(\mathbf{t}+[z]_2) }{ \tau_{0,n}(\mathbf{t})  }z^{-n+1}e^{-\xi(\mathbf{t}^{(2)},z^{-1})},\label{mToadwave2}
	\end{align}
 then the mToda bilinear equation \eqref{HirotamToda} is equivalent to:
\begin{align}
\oint_{C_R}\frac{dz}{2\pi {\rm i}z  }\Psi_1(n,\mathbf{t},z)\Psi_1^*(n',\mathbf{t}',z)+\oint_{C_r}\frac{dz}{2\pi {\rm i}z }\Psi_2(n,\mathbf{t},z)\Psi_2^*(n',\mathbf{t}',z)=1.\label{mTodaWavbi}
\end{align}

If we  define the mToda wave operators
\begin{eqnarray}
&&S_1(n,\mathbf{t},\La)=\sum_{i\geq 0}\frac{p_i(-\widetilde{\partial}^{(1)})\tau_{0,n}(\mathbf{t})}
{\tau_{1,n}(\mathbf{t})}\Lambda^{-i},
\ \ S_2(n,\mathbf{t},\La)=\sum_{i\geq 0}\frac{p_i(-\widetilde{\partial}^{(2)})\tau_{0,n+1}(\mathbf{t})}
{\tau_{1,n}(\mathbf{t})}\Lambda^{i}, \label{mTodaderess1}
\end{eqnarray}
where $\widetilde{\partial}^{(a)}=(\partial_{t_1^{(a)}},
\partial_{t_2^{(a)}}/2, \ _{\cdots}\ ,\partial_{t_k^{(a)}}/k, \ ...)$, $\Lambda$ is the shift operator satisfying $\Lambda(f(n))=f(n+1)$, and $p_i(\mathbf{t})$ is the Schur polynomial defined by $e^{\xi(\mathbf{t},z)}=\sum_{i=0}^{+\infty}p_i(\mathbf{t})z^i$  and $p_i(\mathbf{t})=0$ if  $i<0$, then the wave operators $S_1$ and $S_2$ satisfy the following evolution equations
\begin{align}
&\pa_{t_k^{(1)}}{S}_1=-( S_1\La^kS_1^{-1})_{\Delta,\leq0}{S}_1,\quad \pa_{t_k^{(1)}}{S}_2=( S_1\La^kS_1^{-1})_{\Delta,\geq1}{S}_2,\label{mTodaSx1} \\
&\pa_{t_k^{(2)}}{S}_1=( S_2\La^{-k}S_2^{-1})_{\Delta^{*},\geq1}{S}_1,\quad
\pa_{t_k^{(2)}}{S}_2=-( S_2\La^{-k}S_2^{-1})_{\Delta^{*},\leq0}{S}_2,\label{mTodaSx2}
\end{align}
where  $\Delta=\Lambda-1$, $\Delta^*=\Lambda^{-1}-1$, and $\sum_{i}(a_iP^i)_{P, \geq k}=\sum_{i\geq k}a_iP^i$ with $P=\Delta$ or $\Delta^*$.
Further, the mToda wave functions $\Psi_a(n,\mathbf{t},z)$ and the mToda adjoint wave functions $\Psi_a^*(n,\mathbf{t},z)$   can be expressed   in the following manner
\begin{align*}
&\Psi_1(n,\mathbf{t},z)=S_1(n,\mathbf{t},\La)
e^{\xi({\mathbf{t}}^{(1)},\La)}(z^n), \quad\Psi_2(n,\mathbf{t},z)=S_2(n,\mathbf{t},\La)
e^{\xi({\mathbf{t}}^{(2)},\La^{-1})}(z^n), \\
&\Psi_1^*(n,\mathbf{t},z)=-\iota_{\La}\Delta^{-1}(S_1^{-1})^*e^{-\xi({
\mathbf{t}}^{(1)},\La^{-1})}(z^{-n}),\quad
\Psi_2^*(n,\mathbf{t},z)=\iota_{\La^{-1}}\Delta^{-1}(S_2^{-1})^*e^{-\xi({
\mathbf{t}}^{(2)},\La)}(z^{-n}),
\end{align*}
where $\iota_{\La}\Delta^{-1}=-\sum_{i\geq 0}\Lambda^i$ and $\iota_{\La^{-1}}\Delta^{-1}=\sum_{i\geq 1}\Lambda^{-i}$.
If we define  the mToda Lax operators $L_1=S_1\cdot \La\cdot S_1^{-1}$, $L_2=S_2\cdot\La^{-1}\cdot S_2^{-1}$, then the mToda Lax operators $L_1$ and $L_2$ have the following forms
\begin{align}
L_1=e^{{\beta_1}(n,\mathbf{t})}\Lambda+\sum_{i=0}^{\infty}u_i(n,\mathbf{t})
\La^{-i},\quad
L_2
=e^{{\beta_2}(n,\mathbf{t})}
\Lambda^{-1}+\sum_{i=0}^{\infty}\bar{u}_i(n,\mathbf{t})\La^{i}.\label{Laxoperadj}
\end{align}
The  mToda Lax operators $L_1$ and $L_2$ satisfy the following   Lax equations
\begin{eqnarray}\label{2mkplax}
\pa_{t_k^{(a)}}L_b=[B_k^{(a)},L_b], \quad B_k^{(1)}=( L_1^k)_{\Delta,\geq1},\quad
		B_k^{(2)}=( L_2^{k})_{\Delta^{*},\geq1} \quad a,b=1,2.
\end{eqnarray}
 The mToda hierarchy contains the mToda lattice equation, also known as the (2+1)--dimensional derivative Toda equation \cite{Hirota2004,Cao2008}:
	\begin{eqnarray*}
		\pa_{{t}^{(1)}_1}\pa_{{t}^{(2)}_1}\varphi(n)+	\big(e^{\varphi(n+1)-\varphi(n)}-e^{\varphi(n)
-\varphi(n-1)}\big)\pa_{{t}^{(1)}_1}\varphi(n)=0.
	\end{eqnarray*}
If we add the constraint below on the mToda Lax operators
\begin{align*}
L^*_2(\Lambda-\Lambda^{-1})=(\Lambda-\Lambda^{-1})L_1,
\end{align*}
then we can get the large BKP hierarchy, also called the B-Toda hierarchy \cite{Guan2025,Krichever2023},  which only depends on $\hat{\mathbf{t}}_k=({t}^{(1)}_k-{t}^{(2)}_k )/2$.
\subsection{The Toda hierarchy}
In this subsection,   we    review the fundamental aspects of the Toda hierarchy \cite{Takebe2025,Savchenko2026,Ueno1984}.
By the Miura transformations, the mToda hierarchy can be related to the Toda hierarchy (for more details  refer to \cite{Rui2024}). The Toda Lax equations are defined by
\begin{align}
&\pa_{t_k^{(a)}}\mathcal{L}_b=[ \mathcal{B}_k^{(a)} ,\mathcal{L}_b], \quad \mathcal{B}_k^{(1)}=(\mathcal{L}^k_1)_{\geq 0},\quad \mathcal{B}_k^{(2)}=(\mathcal{L}_2^k)_{< 0},\quad a,b=1,2,\label{TodaLax}
\end{align}
where  $(\sum_{i}a_i\Lambda^i)_{\geq k}=\sum_{i\geq k}a_i\Lambda^i$, $(\sum_{i}a_i\Lambda^i)_{< k}=\sum_{i< k}a_i\Lambda^i$, and the Toda Lax operators $ \mathcal{L}_1$ and $ \mathcal{L}_2$ have the following forms
\begin{align*}
 \mathcal{L}_1=\Lambda +\sum_{i=0}^{\infty}v_i(n,\mathbf{t})\Lambda^{-i},\quad
{\mathcal{L}_2}= e^{\alpha(n,\mathbf{t})}\Lambda^{-1} +\sum_{i=0}^{\infty}  \overline{v}_i(n,\mathbf{t})\Lambda^{i}.
 \end{align*}
For the Toda hierarchy, there exists a  tau function $\tau^{{\rm Toda}}_n(\mathbf{t})$   \cite{takasaki2018jpa},
such that the whole Toda hierarchy   can be expressed by the bilinear equation below:
\begin{align}
&\oint_{C_{R}}\frac{dz}{2\pi {\rm i} }z^{n-n'}\tau^{{\rm Toda}}_n(\mathbf{t}-[z^{-1}]_1)\tau^{{\rm Toda}}_{n'}(\mathbf{t}'+[z^{-1}]_1)
e^{\xi({\mathbf{t}}^{(1)}-{\mathbf{t}}^{(1)\prime},z)}\nonumber\\
\qquad\qquad&=
\oint_{C_{r}}\frac{dz}{2\pi {\rm i} }z^{n-n'}\tau^{{\rm Toda}}_{n+1}(\mathbf{t}-[z]_2)\tau^{{\rm Toda}}_{n'-1}(\mathbf{t}'+[z]_2)
e^{\xi({\mathbf{t}}^{(2)}-{\mathbf{t}}^{(2)\prime},z^{-1})}.\label{todabilitau}
\end{align}
On this basis, we can define the Toda wave functions $\psi_a(n,\mathbf{t},z)$ and Toda adjoint wave functions $\psi^*_a(n,\mathbf{t},z)$ for $(a=1,2)$   by the Toda tau function $\tau^{{\rm Toda}}_n(\mathbf{t})$ \cite{Ueno1984}:
\begin{align}
&\psi_1(n,\mathbf{t},z)=\frac{\tau^{{\rm Toda}}_n
(\mathbf{t}-[z^{-1}]_1)}{\tau^{{\rm Toda}}_n(\mathbf{t})}e^{\xi({\mathbf{t}}^{(1)},z)}z^n,\quad
{\psi_2}(n,\mathbf{t},z)=\frac{\tau^{{\rm Toda}}_{n+1}(\mathbf{t}-[z]_2)}{\tau^{{\rm Toda}}_n(\mathbf{t})}
e^{\xi({\mathbf{t}}^{(2)},z^{-1})}z^{n},\label{Todawave1}
\\
&{\psi_1}^*(n,\mathbf{t},z)=\frac{\tau^{{\rm Toda}}_{n+1}(\mathbf{t}+[z^{-1}]_1)}{\tau^{{\rm Toda}}_{n+1}(\mathbf{t})}e^{-\xi({\mathbf{t}}^{(1)},z)}z^{-n},\quad
{\psi_2}^*(n,\mathbf{t},z)=\frac{\tau^{{\rm Toda}}_{n}(\mathbf{t}+[z]_2)}{\tau^{{\rm Toda}}_{n+1}(\mathbf{t})}e^{-\xi({\mathbf{t}}^{(2)},z^{-1})}z^{-n},\label{Todawave2}
\end{align}
which satisfy
\begin{align}
&\mathcal{L}_1(\psi_1)=z\psi_1,\quad\mathcal{L}_2(\psi_2)=z^{-1}\psi_2,\quad
\mathcal{L}^*_1(\psi^*_1)=z\psi^*_1,
\quad\mathcal{L}^*_2(\psi^*_2)=z^{-1}\psi^*_2,\label{TLtau}\\
&\pa_{t^{(a)}_k}\big(\psi_b\big)
=(\mathcal{B}_k^{(a)})\big(\psi_b\big),\quad
\pa_{t^{(a)}_k}\big(\psi_b^{*}\big)
=-(\mathcal{B}_k^{(a)})^{*}\big(\psi_b^*\big),\quad  a,b=1,2.\label{todaawavetau}
\end{align}
More generally, the Toda wave functions $\psi_a(n,\mathbf{t},z)$ and Toda adjoint wave functions $\psi^*_a(n,\mathbf{t},z)$ are the special case of the Toda eigenfunction $q_n$ and  the Toda adjoint eigenfunction $r_n$,   defined by
\begin{align}
&\pa_{t^{(a)}_k}q_n=\mathcal{B}_k^{(a)}(q_n),\quad
\pa_{t^{(a)}_k}r_n=-(\mathcal{B}_k^{(a)})^*(r_n),\quad a=1,2
,\label{Todaeig2}
\end{align}
 satisfying the following spectral representation  \cite{Lui2024}
 \begin{align}
		q_n( \mathbf{t})=&\oint_{C_R}\frac{{\rm d}z}{2\pi {\rm i}}
		z^{-2}q_{n'}(\mathbf{t}'+[z^{-1}]_1)
\psi_1^{*}(n'-1,\mathbf{t}',z)
\psi_1(n,\mathbf{t},z)\nonumber\\
		&+\oint_{C_r}\frac{{\rm d}z}{2\pi {\rm i}}z^{-1}q_{n'-1}(\mathbf{t}'+[z]_2)\psi_2^{*}(n'-1,\mathbf{t}',z)
\psi_2(n,\mathbf{t},z),\label{qToda}\\
r_n(\mathbf{t})=&\oint_{C_R}\frac{{\rm d}z}{2\pi {\rm i}}z^{-2}r_{n'-1}( \mathbf{t}'-[z^{-1}]_1)\psi_1(n',\mathbf{t}',z)
\psi_1^{*}(n,\mathbf{t},z)\nonumber\\
&+\oint_{C_r}\frac{{\rm d}z}{2\pi {\rm i}}z^{-1}r_{n'}(\mathbf{t}'-[z]_2)\psi_2(n',\mathbf{t}',z)
\psi_2^{*}(n,\mathbf{t},z).\label{rToda}
		\end{align}

\subsection{Major results} In this paper, there are the following main results. First, the relations between the mToda tau functions and the Toda tau function are given by the following theorem.
\begin{theorem}\label{Todaeigadjo}
 (I)
 Given the mToda tau pair  $(\tau_{0,n} ,\tau_{1,n} )$, $\tau_{0,n} $ and $\tau_{1,n} $  are two Toda tau functions, satisfying the Toda bilinear equation \eqref{todabilitau}.
If we define
 \begin{align*}
q_n =\tau_{1,n} /\tau_{0,n} ,\quad
r_n =\tau_{0,n+1} /\tau_{1,n+1} ,
\end{align*}
then $q_n $  is the  Toda eigenfunction  associated with $\tau_{0,n} $, while $r_n $
 acts as the Toda adjoint eigenfunction relative to  $\tau_{1,n} $.

(II) Conversely,
  given  the Toda eigenfunction $q_n $ and the Toda adjoint eigenfunction $r_n $ with respect to the  Toda tau function $\tau^{{\rm Toda}}_n $,
 \begin{align*}
\tau_n^{[1]} :=q_n \tau^{{\rm Toda}}_n ,\quad \tau_n^{[-1]} :=r_{n-1} \tau^{{\rm Toda}}_{n} ,
 \end{align*}
can be regarded as new Toda tau functions.
Furthermore,  the tau pair \begin{align*}
 (\tau_{0,n} ,\tau_{1,n} )&:=(\tau^{{\rm Toda}}_{n} , \tau_n^{[1]} ) \quad {\rm or} \quad ( \tau_n^{[-1]} ,\tau^{{\rm Toda}}_{n} ),
  \end{align*}satisfies the mToda bilinear equation \eqref{HirotamToda}.

\end{theorem}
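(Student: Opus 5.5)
The plan is to work entirely at the level of bilinear identities. The mToda bilinear equation \eqref{HirotamToda} is a two-component, one-step-modified analogue of \eqref{todabilitau}, so the strategy is to specialise or shift its discrete and continuous arguments until each required identity appears.

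For part (I), the first task is to show that $\tau_{0,n}$ and $\tau_{1,n}$ are themselves Toda tau functions. I would follow the standard mKP argument, in which the components of an mKP tau sequence are KP tau functions. Concretely, I would write \eqref{HirotamToda} twice, once for the pair $(n,\mathbf{t})$, $(n',\mathbf{t}')$ and once with the first argument shifted in some direction. I would then take a suitable combination, using a shift of $\mathbf{t}$ by $[\lambda^{-1}]_1$ and extracting the residue, so that the right-hand side $\tau_{1,n}(\mathbf{t})\tau_{0,n'}(\mathbf{t}')$ cancels. The identity left over is bilinear in $\tau_{0}\cdot\tau_{0}$ alone, or in $\tau_1\cdot\tau_1$ alone.

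An alternative, which I expect to be cleaner, uses the Sato-type characterisation: $\tau_{0,n}$ and $\tau_{1,n}$ are related by a Toda Darboux transformation. In that case it suffices to prove the following.
\begin{itemize}
\item[(a)] The pair satisfies the ``mixed'' Toda bilinear identity \eqref{HirotamToda}.
\item[(b)] Each of $\tau_{0,n}$, $\tau_{1,n}$ satisfies \eqref{todabilitau}.
\end{itemize}

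To prove (b) I would set $\mathbf{t}'=\mathbf{t}-[\lambda^{-1}]_1$ (and $n'=n$) in \eqref{HirotamToda}. By residue calculus this yields a three-term Fay-type identity relating $\tau_0$ and $\tau_1$. I would then use the usual fact that a family of such identities, with generic shifts, is equivalent to the full bilinear equation. I expect this equivalence step, in its two-component discrete form, to be the main obstacle.

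Once (b) holds, the eigenfunction statement follows from the spectral characterisation \eqref{qToda}. Dividing \eqref{HirotamToda} by $\tau_{0,n}(\mathbf{t})\,\tau_{0,n'}(\mathbf{t}')$ and rewriting the integrand through \eqref{Todawave1}--\eqref{Todawave2} for $\tau^{\rm Toda}=\tau_0$ should give exactly \eqref{qToda} with $q_n=\tau_{1,n}/\tau_{0,n}$. Here the factor $z^{-2}$ and the shifts $n'-1$ have to be matched, possibly after relabelling $n'\to n'-1$. Similarly, dividing by $\tau_{1,n}(\mathbf{t})\,\tau_{1,n'}(\mathbf{t}')$ and reading \eqref{HirotamToda} as an identity in the primed variables should give \eqref{rToda} for $r_n=\tau_{0,n+1}/\tau_{1,n+1}$ with respect to $\tau_1$.

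For part (II) I would run the same computation backwards. Suppose \eqref{qToda} holds for $q_n$ relative to $\tau^{\rm Toda}_n$. Multiplying through by $\tau^{\rm Toda}_n(\mathbf{t})\,\tau^{\rm Toda}_{n'}(\mathbf{t}')$ and substituting \eqref{Todawave1}--\eqref{Todawave2} then gives precisely \eqref{HirotamToda} for the pair $(\tau^{\rm Toda}_n,\,q_n\tau^{\rm Toda}_n)$. The adjoint case \eqref{rToda} gives \eqref{HirotamToda} for the pair $(r_{n-1}\tau^{\rm Toda}_n,\,\tau^{\rm Toda}_n)$. Finally, applying part (I) to these mToda pairs shows that $\tau^{[1]}_n$ and $\tau^{[-1]}_n$ are Toda tau functions. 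Consequently the only genuinely nontrivial step is the one in (I) showing that each component of an mToda pair satisfies \eqref{todabilitau}.
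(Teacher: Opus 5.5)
Your eigenfunction identifications are correct. Dividing \eqref{HirotamToda} by $\tau_{0,n}(\mathbf{t})\tau_{0,n'}(\mathbf{t}')$ gives \eqref{qToda} directly; no relabelling is needed, since the $z^{-2}$ and the shift $n'-1$ come out automatically. Dividing by $\tau_{1,n}(\mathbf{t})\tau_{1,n'}(\mathbf{t}')$ and swapping the roles of primed and unprimed variables gives \eqref{rToda}. Running this backwards is also how the paper handles (II).

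The gap is the step you yourself flag: you never prove that $\tau_{0,n}$ and $\tau_{1,n}$ individually satisfy \eqref{todabilitau}. Everything else depends on it. Reading the identity as \eqref{qToda} only gives an eigenfunction if the $\psi^{[0]}_a$ are genuine Toda wave functions obeying \eqref{todaawavetau}, and (II) invokes (I).

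Your ``alternative'' is circular: (a) is the hypothesis and (b) is the conclusion. The specialisation you propose for (b) is also degenerate. With $n'=n$ and $\mathbf{t}'=\mathbf{t}-[\lambda^{-1}]_1$, the $C_r$-integrand is holomorphic at $0$, so that integral vanishes. The $C_R$-integral picks out the value of $\tau_{0,n}(\mathbf{t}-[z^{-1}]_1)\tau_{1,n}(\mathbf{t}-[\lambda^{-1}]_1+[z^{-1}]_1)$ at $z=\lambda$, namely $\tau_{0,n}(\mathbf{t}-[\lambda^{-1}]_1)\tau_{1,n}(\mathbf{t})$. That is identically the right-hand side, so you get $0=0$.

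Nontrivial three-term identities arise only for $n-n'=1$ with a $\mathbf{t}^{(1)}$-shift, or for $n=n'$ with a $\mathbf{t}^{(2)}$-shift; these are the identities behind Lemma \ref{DiffFerence}. Even with them in hand, appealing to an unproved ``Fay identities imply the bilinear identity'' principle for a mixed $\tau_0$--$\tau_1$ relation would not isolate a single component.

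Your first idea, combining two copies of the bilinear identity so that the right-hand side cancels, is the right one. However, the shift must be in the discrete variable, not in $\mathbf{t}$, and the normalisation matters.

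The paper divides by $\tau_{1,n}(\mathbf{t})\tau_{0,n'}(\mathbf{t}')$ to obtain \eqref{mTodaWavbi}, whose right-hand side is the constant $1$, and then applies $\Delta$ in $n$. The right-hand side disappears. By Lemma \ref{DiffFerence}, $\Delta\Psi_1(n,\mathbf{t},z)$ and $\Delta\Psi_2(n,\mathbf{t},z)$ become $\tau_{1,n}(\mathbf{t}-[z^{-1}]_1)z^{n+1}e^{\xi(\mathbf{t}^{(1)},z)}$ and $-\tau_{1,n+1}(\mathbf{t}-[z]_2)z^{n}e^{\xi(\mathbf{t}^{(2)},z^{-1})}$. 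Both carry the same $z$-independent factor $\tau_{0,n+1}(\mathbf{t})/(\tau_{1,n}(\mathbf{t})\tau_{1,n+1}(\mathbf{t}))$. Meanwhile $\Psi^*_a(n',\mathbf{t}',z)$ already has only shifted $\tau_1$'s in its numerator, over $\tau_{0,n'}(\mathbf{t}')$.

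Stripping the $z$-independent factors leaves exactly \eqref{todabilitau} for $\tau_{1,n}$. Applying $\Delta$ in $n'$ instead gives \eqref{todabilitau} for $\tau_{0,n}$. With this step supplied, the rest of your argument matches the paper. That includes differentiating the spectral representation to obtain \eqref{Todaeig2}, which is routine once the $\psi^{[i]}_a$ are known to be Toda wave functions.
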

\begin{remark}
  Although the first part of the theorem is stated in \cite{Rui2024}, the second part is  still left unfinished. Here, we establish a complete proof of the full theorem. We say that $q_n (\mathbf{t})$ and $r_n (\mathbf{t})$ are the Toda eigenfunction and adjoint eigenfunction associated with $\tau_n^{\mathrm{Toda}}(\mathbf{t}) $, meaning  that they satisfy \eqref{Todaeig2}, and
the Toda Lax operators  $\mathcal{L}_1$ and $\mathcal{L}_2$ are expressed in terms of the tau function
 $\tau_n^{\mathrm{Toda}}(\mathbf{t})$ by
\begin{align}
\mathcal{L}_1&=\sum_{l=0}^{+\infty}\left(\sum_{i,j=0,i+j=l}^{+\infty}
\frac{p_i(-\widetilde{\partial}^{(1)})\tau^{{\rm Toda}}_n(\mathbf{t})p_j(\widetilde{\partial}^{(1)})\tau^{{\rm Toda}}_{n+2-l}(\mathbf{t})}{\tau^{{\rm Toda}}_n(\mathbf{t})\tau^{{\rm Toda}}_{n+2-l}(\mathbf{t})}\right)\Lambda^{1-l},\label{TodaLax1}\\
\mathcal{L}_2&=\sum_{l=0}^{+\infty}\left(\sum_{i,j=0,i+j=l}^{+\infty}
\frac{p_i(-\widetilde{\partial}^{(2)})\tau^{{\rm Toda}}_{n+1}(\mathbf{t})p_j(\widetilde{\partial}^{(2)})\tau^{{\rm Toda}}_{n+l-1}(\mathbf{t})}{\tau^{{\rm Toda}}_n(\mathbf{t})\tau^{{\rm Toda}}_{n+ l}(\mathbf{t})}\right)\Lambda^{l-1}.\label{TodaLax2}
\end{align}

\end{remark}
Let $\widetilde{q}_{n} $ and $\widetilde{r}_{n} $ be the mToda  eigenfunction and the mToda adjoint eigenfunction, respectively, defined by
	\begin{align}
		 &\pa_{t_k^{(1)}}\widetilde{q}_{n}
=B_k^{(1)}(\widetilde{q}_{n} ),\quad \pa_{t_k^{(2)}}\widetilde{q}_{n}
=B_k^{(2)}(\widetilde{q}_{n} ),\label{eigenfunction}\\ &\pa_{t_k^{(1)}}\widetilde{r}_{n}
=-(\iota_{\La^{-1}}\Delta^{-1}B_k^{(1)*}\Delta)(\widetilde{r}_{n} ), \quad \pa_{t_k^{(2)}}\widetilde{r}_{n}
=-(\iota_{\La}\Delta^{-1}B_k^{(2)*}\Delta)(\widetilde{r}_{n} ),\label{adjointeigenfunction}
	\end{align}
 then we have the following theorem for the mToda  tau pair.
\begin{theorem}\label{mTodaeigentran}
(I) Given the mToda tau pair $(\tau_{0, n} ,\tau_{1,n} )$, the mToda  eigenfunction $\widetilde{q}_{n} $ and the mToda adjoint eigenfunction $\widetilde{r}_{n} $, if we define  $\tau^{[1]}_{0,n} $ and $\tau^{[1]}_{1,n} $ by the   following ways:

\begin{itemize}
  \item $\textbf{Case 1:}$
 \begin{align*}\tau_{0,n}^{[1]} =\widetilde{q}_n \tau_{1,n}, \quad \tau_{1,n}^{[1]} =\dfrac{ (\widetilde{q}_{n}-\widetilde{q}_{n+1} )\tau_{1,n} \tau_{1,n+1} } {\tau_{0,n+1} },\end{align*}
  \item $\textbf{Case 2:}$
 \begin{align*}\tau_{0,n}^{[1]} =\dfrac{ (\widetilde{r}_{n}-\widetilde{r}_{n-1} )\tau_{0,n}  \tau_{0,n-1} }{\tau_{1,n-1} }, \quad \tau_{1,n}^{[1]} =\widetilde{r}_n \tau_{0,n},\end{align*}
\end{itemize}
 then   $(\tau_{0, n}^{[1]} ,\tau_{1,n}^{[1]} )$ is still a new  mToda tau pair. Specifically, it satisfies the mToda bilinear equation \eqref{HirotamToda}.

 (II) If we denote   $ L_1$ and $L_2$  as the mToda  Lax operators with respect to the tau pair $(\tau_{0,n} ,\tau_{1,n} )$ by \eqref{mTodaderess1}, and  denote $L_a^{[1]}=T_b\cdot L_a\cdot T_b^{-1}$ for $(a,b=1,2)$
with the operators \begin{align*}
T_1=\frac{\widetilde{q}_n\widetilde{q}_{n+1}}{\widetilde{q}_n-\widetilde{q}_{n+1}}
\cdot\Delta\cdot\widetilde{q}_n^{-1},\quad T_2=\widetilde{r}_n^{-1}\cdot\Delta^{-1}
\cdot (\widetilde{r}_{n+1}-\widetilde{r}_{n}),
\end{align*}then  $L_1^{[1]}$ and $L_2^{[1]}$ are new   mToda   Lax operators.
 Here $\Delta^{-1}$ in $T_1^{-1}$ or  $T_2$    are expanded   in the form    $\sum_{ -\infty\ll i }a_i(n)\Lambda^{i}$ for $L_1^{[1]}=T_b\cdot L_1\cdot T_b^{-1}$,   while for $L_2^{[1]}=T_b\cdot L_2\cdot T_b^{-1}$,  they are expressed in the form $\sum_{i\ll +\infty}a_i(n)\Lambda^i$.

(III) Further given another  mToda eigenfunction  $\widetilde{f}_n$ and   mToda adjoint eigenfunction $\widetilde{g}_n$, $\widetilde{f}^{[1]}_n=T_b(\widetilde{f}_n)$  is the  mToda eigenfunction and $\widetilde{g}^{[1]}_n = \big(\Delta^{-1}(T_b^*)^{-1}\Delta\big)(\widetilde{g}_n)$  is the  mToda adjoint eigenfunction corresponding to $(\tau^{[1]}_{0,n}, \tau^{[1]}_{1,n})$ in $\textbf{Case b}$ for $b=1,2$.

\end{theorem}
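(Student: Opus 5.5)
Our plan is to reduce Theorem~\ref{mTodaeigentran} to Theorem~\ref{Todaeigadjo} by translating mToda eigenfunctions into Toda (adjoint) eigenfunctions. By Theorem~\ref{Todaeigadjo}(I), $\tau_{0,n}$ and $\tau_{1,n}$ are Toda tau functions, with $q_n=\tau_{1,n}/\tau_{0,n}$ a Toda eigenfunction for $\tau_{0,n}$ and $r_n=\tau_{0,n+1}/\tau_{1,n+1}$ a Toda adjoint eigenfunction for $\tau_{1,n}$. The mToda wave operators are gauge transforms of the Toda ones: $S_1$ differs from the Toda wave operator of $\tau_{0,n}$ by left multiplication by $q_n^{-1}$, since $\tau_{0,n}/\tau_{1,n}=q_n^{-1}$. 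Consequently $L_1=q_n^{-1}\mathcal{L}_1^{(0)}q_n$, where $\mathcal{L}_1^{(0)}$ is the Toda Lax operator of $\tau_{0,n}$, and similarly for $L_2$.

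The first step is a lemma: $\widetilde{q}_n$ is an mToda eigenfunction if and only if $\widetilde{q}_n\tau_{1,n}/\tau_{0,n}=q_n\widetilde{q}_n$ is a Toda eigenfunction of $\tau_{0,n}$. Moreover, $\Delta(\widetilde{q}_n)=\widetilde{q}_{n+1}-\widetilde{q}_n$, suitably multiplied by $\tau$-ratios, is a Toda eigenfunction of $\tau_{1,n}$. Dually, $\widetilde{r}_n$ corresponds to a Toda adjoint eigenfunction of $\tau_{0,n}$ (essentially $\widetilde{r}_n\tau_{0,n}/\tau_{1,n-1}$), and $\Delta\widetilde r$ to one of $\tau_{1,n}$. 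To prove it, we would check that the projections $(\cdot)_{\Delta,\geq1}$ conjugated by $q_n$ correspond to $(\cdot)_{\geq0}$ minus the constant term fixed by $q_n$ being an eigenfunction. Concretely, $B^{(1)}_k=q_n^{-1}\mathcal{B}_k^{(1)}q_n-q_n^{-1}\mathcal{B}_k^{(1)}(q_n)$, and likewise for $B^{(2)}_k$ with $(\cdot)_{<0}$. The flows $\pa\widetilde q=B\widetilde q$ then become $\pa(q\widetilde q)=\mathcal{B}(q\widetilde q)$ by the Leibniz rule together with $\pa q=\mathcal{B}q$. The adjoint statement comes from conjugating by $\Delta$, using $\Delta^{-1}B^*\Delta$ in \eqref{adjointeigenfunction}.

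For Part (I), Case 1, set $\tau^{[1]}_{0,n}=\widetilde q_n\tau_{1,n}=(q_n\widetilde q_n)\tau_{0,n}$. This is a Toda tau function by Theorem~\ref{Todaeigadjo}(II) applied to $\tau_{0,n}$ with eigenfunction $q\widetilde q$. The pair $(\tau_{1,n},\tau^{[1]}_{0,n})$ is then an mToda pair once we recognize $\tau_{1,n}$ as $q$-transform related. The cleaner route is to view $\tau^{[1]}_{1,n}$ as obtained from $\tau^{[1]}_{0,n}$ by a further Toda eigenfunction. We would compute, via the Toda Darboux transformation on $\tau^{[1]}_{0,n}$, that the transformed eigenfunction induced by $q_n$ equals the ratio $\tau^{[1]}_{1,n}/\tau^{[1]}_{0,n}=(\widetilde q_n-\widetilde q_{n+1})\tau_{1,n+1}/(\widetilde q_n\tau_{0,n+1})$. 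This is a Wronskian-type expression $\mathcal{W}(q\widetilde q,q)/(q\widetilde q)$ in Toda form, and the standard Toda Darboux identity makes it an eigenfunction. Theorem~\ref{Todaeigadjo}(II) then gives the mToda bilinear equation for $(\tau^{[1]}_{0,n},\tau^{[1]}_{1,n})$. Case 2 is dual, using adjoint eigenfunctions and $\tau^{[-1]}$.

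For Part (II), we would compute the wave operators $S_a^{[1]}$ of the new pair from \eqref{mTodaderess1}. We would check $S_1^{[1]}=c\,T_1S_1\Lambda^{-1}$ (respectively a suitable normalization for $S_2$), so that $L^{[1]}_a=T_bL_aT_b^{-1}$. Alternatively, we would verify directly that $T_b$ preserves the Lax form \eqref{Laxoperadj} and that $\pa T_b=B^{[1]}T_b-T_bB$, where $B^{[1]}=(T_bL^kT_b^{-1})_{\Delta,\geq1}$; the kernel of $T_1$ being spanned by $\widetilde q$ gives the required projection identity. Part (III) then follows formally: $\pa(T\widetilde f)=(\pa T)\widetilde f+TB\widetilde f=B^{[1]}T\widetilde f$, and dually for $\Delta^{-1}(T^*)^{-1}\Delta\widetilde g$. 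The main obstacle is the projection identity $(TBT^{-1})_{\Delta,\geq1}$ versus the dressing, especially for the mixed flows ($t^{(2)}$ acting via $T_1$). There, the different expansions of $\Delta^{-1}$ matter, and the tau-level check in Part (I) must match the operator-level formula.
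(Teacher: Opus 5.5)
Your proposal is correct and follows essentially the paper's own route. Part (I) goes through the Miura dictionary ($f_n=q_n\widetilde q_n$ is a Toda eigenfunction of $\tau_{0,n}$), then the Toda Darboux map $T_D(f)(q)$ reproducing $\tau^{[1]}_{1,n}/\tau^{[1]}_{0,n}$, then Theorem~\ref{Todaeigadjo}(II). Part (II) identifies the new dressing operators as $T_1S_1\Lambda^{-1}$ and $-T_1S_2$, so the Lax equations, mixed flows included, are inherited from Part (I) and the projection identity you flagged as the main obstacle never needs a separate check. Part (III) is the same formal computation as in the paper.

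One slip to fix for Case 2: your dual dictionary is swapped. $G_n=\Delta(\widetilde r_n)\tau_{0,n}/\tau_{1,n}$ is a Toda adjoint eigenfunction of $\tau_{0,n}$, while $\widetilde r_{n+1}\tau_{0,n+1}/\tau_{1,n+1}$ is one of $\tau_{1,n}$. With this correction, take $\tau^{[1]}_{0,n}=G_{n-1}\tau_{0,n}$ and use $T_I=G_{n-1}^{-1}\Delta^{-1}G_n$, which sends $q_n$ to $\widetilde r_n/G_{n-1}$. This gives Case 2 exactly, as the literal dual of your Case 1 argument.
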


If   $T_2$ is applied to the mToda tau pair
 $(\tau_{0, n},\tau_{1,n})$, followed by    $T_1$,
then we have the following corollary.

\begin{corollary}\label{sepmToda}
Given the mToda tau pair $(\tau_{0, n},\tau_{1,n})$, the mToda eigenfunction $\widetilde{q}_n$, the mToda adjoint eigenfunction $\widetilde{r}_n$,
 \begin{align*}
  \Big(\tau^{\{1\}}_{0,n},\tau^{\{1\}}_{1,n}\Big):=
 \left(\Omega\Big(\widetilde{q}_n,
 \Delta(\widetilde{r}_n)\Big)\tau_{0,n},\Omega\Big( \Delta(\widetilde{q}_n),
\widetilde{r}_{n+1}\Big)\tau_{1,n}\right),
 \end{align*}
 is also the mToda tau pair, satisfying the mToda bilinear equation \eqref{HirotamToda}. Here  $\Omega\Big(\widetilde{q}_n,
 \Delta(\widetilde{r}_n)\Big)$ is the mToda    squared eigenfunction potential (SEP)  (for more details see Appendix).
\end{corollary}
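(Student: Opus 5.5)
The plan is to compose the two elementary transformations of Theorem \ref{mTodaeigentran}. First apply Case 2 of part (I) to $(\tau_{0,n},\tau_{1,n})$ with the adjoint eigenfunction $\widetilde r_n$. This gives the intermediate tau pair
\begin{align*}
\hat\tau_{0,n}=\frac{(\widetilde r_n-\widetilde r_{n-1})\tau_{0,n}\tau_{0,n-1}}{\tau_{1,n-1}},\qquad \hat\tau_{1,n}=\widetilde r_n\tau_{0,n}.
\end{align*}
By part (III) with $b=2$, the function $\hat q_n:=T_2(\widetilde q_n)=\widetilde r_n^{-1}\Delta^{-1}\big((\widetilde r_{n+1}-\widetilde r_n)\widetilde q_n\big)$ is an mToda eigenfunction for $(\hat\tau_{0,n},\hat\tau_{1,n})$. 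Here $\Delta^{-1}\big(\widetilde q_n\Delta(\widetilde r_n)\big)$ is exactly the SEP $\Omega(\widetilde q_n,\Delta(\widetilde r_n))$ from the Appendix. The integration constant in $\Delta^{-1}$ is fixed by the requirement that the SEP satisfies the flows making $\hat q_n$ an eigenfunction. So
\begin{align*}
\hat q_n=\widetilde r_n^{-1}\,\Omega\big(\widetilde q_n,\Delta(\widetilde r_n)\big).
\end{align*}

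Next apply Case 1 of part (I) to $(\hat\tau_{0,n},\hat\tau_{1,n})$ with $\hat q_n$. The new zero component is
\begin{align*}
\hat q_n\hat\tau_{1,n}=\widetilde r_n^{-1}\Omega\big(\widetilde q_n,\Delta(\widetilde r_n)\big)\,\widetilde r_n\tau_{0,n}=\Omega\big(\widetilde q_n,\Delta(\widetilde r_n)\big)\tau_{0,n},
\end{align*}
which is the claimed $\tau^{\{1\}}_{0,n}$. The new one component is
\begin{align*}
\frac{(\hat q_n-\hat q_{n+1})\hat\tau_{1,n}\hat\tau_{1,n+1}}{\hat\tau_{0,n+1}}=\frac{(\hat q_n-\hat q_{n+1})\,\widetilde r_n\widetilde r_{n+1}\,\tau_{1,n}}{\widetilde r_{n+1}-\widetilde r_n}.
\end{align*}
Since $\hat q$ is a quotient $\Omega/\widetilde r$, we can write
\begin{align*}
\widetilde r_n\widetilde r_{n+1}(\hat q_n-\hat q_{n+1})=\widetilde r_{n+1}\Omega_n-\widetilde r_n\Omega_{n+1}.
\end{align*}
Substitute $\Omega_{n+1}=\Omega_n+\widetilde q_n(\widetilde r_{n+1}-\widetilde r_n)$, which is the defining difference relation $\Delta\Omega(\widetilde q,\Delta\widetilde r)=\widetilde q\,\Delta\widetilde r$. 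This turns the expression into
\begin{align*}
(\widetilde r_{n+1}-\widetilde r_n)\big(\Omega_n-\widetilde q_n\widetilde r_n\big).
\end{align*}
Hence $\tau^{\{1\}}_{1,n}=\big(\Omega(\widetilde q_n,\Delta\widetilde r_n)-\widetilde q_n\widetilde r_n\big)\tau_{1,n}$, up to an overall sign. An overall sign is harmless, because the bilinear equation \eqref{HirotamToda} is invariant under $(\tau_0,\tau_1)\mapsto(c\tau_0,c\tau_1)$ and each side is bilinear in $\tau_0$ and $\tau_1$.

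The final step is to identify $\Omega_n-\widetilde q_n\widetilde r_n$ with $\Omega(\Delta(\widetilde q_n),\widetilde r_{n+1})$, possibly with a sign. This is the discrete summation-by-parts identity $\Delta(\widetilde q_n\widetilde r_n)=\widetilde q_n\Delta\widetilde r_n+\widetilde r_{n+1}\Delta\widetilde q_n$, so the two SEPs differ by $\pm\widetilde q_n\widetilde r_n$ and a constant. Verifying this is routine.

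I expect the main obstacle to be the consistency of the integration constants. The SEPs are defined through both the $n$-difference relation and the $t$-flows, and I must check in the Appendix that $\Omega(\widetilde q,\Delta\widetilde r)+\Omega(\Delta\widetilde q,\widetilde r_{n+1})=\widetilde q\widetilde r$ holds exactly, including the $t^{(a)}_k$-derivatives. I would handle this by differentiating both sides in $t^{(a)}_k$, using the Appendix formulas for $\partial\Omega$ together with \eqref{eigenfunction} and \eqref{adjointeigenfunction}. This shows the difference is a constant in $t$ and $n$, which can be absorbed into the normalization of $\Omega$. A related check is that $T_2(\widetilde q_n)$ is taken with the same normalization of $\Delta^{-1}$ as in part (III).
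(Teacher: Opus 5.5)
Your proposal is correct and follows the paper's argument. You compose the two elementary transformations of Theorem \ref{mTodaeigentran}, carry the second function along by part (III), and then identify $\widetilde q_n\widetilde r_n-\Omega(\widetilde q_n,\Delta(\widetilde r_n))$ with $\Omega(\Delta(\widetilde q_n),\widetilde r_{n+1})$ via $\Delta(\widetilde q_n\widetilde r_n)=\widetilde q_n\Delta(\widetilde r_n)+\widetilde r_{n+1}\Delta(\widetilde q_n)$.

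The only difference is the order. The paper's written proof applies $T_1$ first and then $T_2$ with the transformed adjoint eigenfunction, so its leftover sign sits on the zero component, giving $-\Omega(\widetilde q_n,\Delta(\widetilde r_n))\tau_{0,n}$. You apply $T_2$ first, so the sign lands on $\tau_{1,n}$ instead. Either way the sign is removable, but the reason is slightly stronger than the one you state. Each term of \eqref{HirotamToda} contains exactly one factor of $\tau_0$ and one of $\tau_1$, so independent rescalings $(c_0\tau_0,c_1\tau_1)$ are allowed, not just $(c\tau_0,c\tau_1)$.
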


If we set    $\widetilde{q}_n(\mathbf{t})=\Psi_a(n,\mathbf{t},\mu^{\eta_{a}})$, $\widetilde{r}_n(\mathbf{t})=\Psi_b^*(n,\mathbf{t},\lambda^{\eta_b})$, $\eta_i=(-1)^{\delta_{i,2}}$   in Corollary \ref{sepmToda}, and   introduce the following vertex operators \cite{Adler1999}
\begin{align*}
&X_1(n,\mathbf{t},\mu):=e^{\xi(\mathbf{t}^{(1)},\mu)}
e^{-\xi(\widetilde{\partial}^{(1)},\mu^{-1})}\mu^n,\quad
X_1^*(n,\mathbf{t},\lambda):=-\lambda^{-n}e^{-\xi(\mathbf{t}^{(1)},\lambda)}
e^{\xi(\widetilde{\partial}^{(1)},\lambda^{-1})},\\
&X_2(n,\mathbf{t},\mu):=-e^{\xi(\mathbf{t}^{(2)},\mu)}
e^{-\xi(\widetilde{\partial}^{(2)},\mu^{-1})}\mu^{-n}\Lambda,\quad
X_2^*(n,\mathbf{t},\lambda):=\Lambda^{-1}\lambda^{n}
e^{-\xi(\mathbf{t}^{(2)},\lambda)}e^{\xi(\widetilde{\partial}^{(2)},\lambda^{-1})},\\
&\mathbb{X}_{ab}(n,\mathbf{t},\lambda,\mu):=X_b^*(n,\mathbf{t},\lambda)X_a(n,\mathbf{t},\mu),\quad a,b=1,2,
\end{align*}
then we can get the following theorem.

\begin{theorem}\label{ASvM}
(I) For the mToda tau pair   $(\tau_{0,n},\tau_{1,n})$, if we define
 \begin{align*}
\tilde{\tau}_{i,n}^{ab}=\tau_{i,n}+C_{ab}\mathbb{X}_{ab}
(n,\mathbf{t},\lambda,\mu)\tau_{i,n},\quad i=0,1,\quad a,b=1,2,
 \end{align*}
 where $C_{ab}$ is some constant, then $(\tilde{\tau}_{0,n}^{ab},\tilde{\tau}_{1,n}^{ab})$ can be regarded as a new  mToda tau pair, satisfying the mToda bilinear equation \eqref{HirotamToda}.

(II) Furthermore, if we define the additional  flows
${\pa}_{{\lambda,\mu}}^{ab}$ on the mToda wave functions $\Psi_c(n,\mathbf{t},z)$:
\begin{align*}
{\pa}_{{ \lambda,\mu}}^{ab}\Psi_c(n,\mathbf{t},z)
&=\Psi_a(n,\mathbf{t},\mu^{\eta_a})
\Omega\Big(\Delta (\Psi_c(n,\mathbf{t},z)),\Psi^*_{b}(n+1,\mathbf{t},\lambda^{\eta_b})\Big),\quad a,b,c=1,2,
\end{align*}
then
\begin{align*}
\frac{{\pa}_{{ \lambda,\mu}}^{ab}(\Psi_1(n,\mathbf{t},z))}{\Psi_1(n,\mathbf{t},z)}
&=(-1)^{\delta_{a,b}}e^{-\xi(\widetilde{\partial}^{(1)},z^{-1})}
\frac{\mathbb{X}_{ab}(n,\mathbf{t},\lambda,\mu)
\tau_{0,n }(\mathbf{t})}{\tau_{0,n }(\mathbf{t})}
+\frac{\mu^{\delta_{a,1}}}{\lambda^{\delta_{b,1}}}\frac{
\mathbb{X}_{ab}(n,\mathbf{t},\lambda,\mu)
\tau_{1,n}(\mathbf{t}) }{\tau_{1,n}(\mathbf{t})},\\
\frac{{\pa}_{{ \lambda,\mu}}^{ab}(\Psi_2(n,\mathbf{t},z))}{\Psi_2(n,\mathbf{t},z)}
&=(-1)^{\delta_{a,b}}e^{-\xi(\widetilde{\partial}^{(2)},z)}
\frac{\mathbb{X}_{ab}(n+1,\mathbf{t},\lambda,\mu)
\tau_{0,n+1}(\mathbf{t})}{\tau_{0,n+1}(\mathbf{t})}
+\frac{\mu^{\delta_{a,1}}}{\lambda^{\delta_{b,1}}}\frac{
\mathbb{X}_{ab}(n,\mathbf{t},\lambda,\mu)
\tau_{1,n}(\mathbf{t}) }{\tau_{1,n}(\mathbf{t})}.
\end{align*}

(III) In particular, the actions of the additional  flows ${\pa}_{ \lambda,\mu}^{ab}$ on the mToda tau functions   are  given by
\begin{align*}
&{\pa}_{\lambda,\mu}^{ab}\tau_{0,n}=(-1)^{\delta_{a,b}}
\mathbb{X}_{ab}
(n,\mathbf{t},\lambda,\mu)\tau_{0,n},\quad
{\pa}_{ \lambda,\mu}^{ab}\tau_{1,n}=-
\frac{\mu^{\delta_{a,1}}}{\lambda^{\delta_{b,1}}}
\mathbb{X}_{ab}
(n,\mathbf{t},\lambda,\mu)\tau_{1,n}.
\end{align*}
\end{theorem}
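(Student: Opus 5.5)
Part (I) will follow from Corollary \ref{sepmToda} by specializing the eigenfunctions to wave functions. I take $\widetilde{q}_n=\Psi_a(n,\mathbf{t},\mu^{\eta_a})$ and $\widetilde{r}_n=\Psi_b^*(n,\mathbf{t},\lambda^{\eta_b})$, using that wave and adjoint wave functions are mToda eigenfunctions and adjoint eigenfunctions by \eqref{mTodaSx1}--\eqref{mTodaSx2}. The new pair $(\tau^{\{1\}}_{0,n},\tau^{\{1\}}_{1,n})$ then involves $\Omega(\Psi_a,\Delta\Psi_b^*)\tau_{0,n}$ and $\Omega(\Delta\Psi_a,\Psi_b^*(n+1))\tau_{1,n}$. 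The key computation is to evaluate these squared eigenfunction potentials explicitly in tau functions.

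To do this, I write each product $\Psi_a(n,\mathbf{t},\mu^{\eta_a})\Psi_b^*(n',\mathbf{t},\lambda^{\eta_b})$ via \eqref{mToadwave1}--\eqref{mToadwave2}. I then use the bilinear identity \eqref{HirotamToda}, with $\mathbf{t}'$ shifted by Miwa shifts and residues taken, to obtain Fay-type three-term identities. These give $\Omega$ up to an additive constant as
\begin{align*}
\Omega\bigl(\widetilde q_n,\Delta\widetilde r_n\bigr)\tau_{0,n}=c\,\tau_{0,n}+c'\,\mathbb{X}_{ab}(n,\mathbf{t},\lambda,\mu)\tau_{0,n},
\end{align*}
with fixed normalizations, and similarly for $\tau_{1,n}$. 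The factors $(-1)^{\delta_{a,b}}$ and $\mu^{\delta_{a,1}}/\lambda^{\delta_{b,1}}$ are expected to arise here. Concretely, I will verify that $\Delta$ of the candidate expression $\mathbb{X}_{ab}\tau_{0,n}/\tau_{0,n}$ (resp. its $t_1^{(1)}$-derivative) reproduces $\widetilde q_n\Delta(\widetilde r_n)$, using the $n\to n+1$ and $t_1$ specializations of \eqref{HirotamToda}; the definition of $\Omega$ in the Appendix is via these. Since the SEP is fixed only up to a constant, I choose that constant to absorb $C_{ab}$ and rescale. Then $\tau^{\{1\}}_{i,n}\propto\tau_{i,n}+C_{ab}\mathbb X_{ab}\tau_{i,n}$ with the same $C_{ab}$ for $i=0,1$, which gives (I). Matching the relative constant between the two components is the delicate point: the bilinear equation is bilinear, so the same $C_{ab}$ must appear in both components, and the distinct prefactors must be accounted for carefully.

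For (II), I apply the definition of $\pa^{ab}_{\lambda,\mu}\Psi_c$ and use Theorem \ref{mTodaeigentran}(III) with the infinitesimal version of the dressing $T_2$ followed by $T_1$. The new wave function $\Psi_c^{\{1\}}$ equals $\Psi_c+\epsilon\,\Psi_a\Omega(\Delta\Psi_c,\Psi_b^*(n+1))+O(\epsilon^2)$. I also express $\Psi_c^{\{1\}}$ through \eqref{mToadwave1} with the perturbed tau pair $\tau_{i,n}+\epsilon(\cdots)\mathbb{X}_{ab}\tau_{i,n}$. Expanding to first order in $\epsilon$ gives the quotient of the shifted numerator ($e^{-\xi(\widetilde\partial^{(1)},z^{-1})}$ acting on $\tau_{0,n}$, or $e^{-\xi(\widetilde\partial^{(2)},z)}$ on $\tau_{0,n+1}$) minus the denominator variation from $\tau_{1,n}$. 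This yields the stated formulas, with signs coming from the relative sign in (I). Part (III) then reads off the actions on tau functions from the first-order variations, consistent with the ASvM structure.

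The main obstacle is the explicit tau-function evaluation of $\Omega(\Psi_a,\Delta\Psi_b^*)$ for all four $(a,b)$. The cases $a\neq b$ mix contours $C_R$ and $C_r$ and require the $\Lambda^{\pm1}$ factors in $X_2,X_2^*$ to be handled correctly. I will first treat $a=b=1$ by direct residue computation, then obtain the others by the $1\leftrightarrow2$, $z\leftrightarrow z^{-1}$ symmetry of \eqref{HirotamToda}.
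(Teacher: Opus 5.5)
\textbf{The gap: the ``same $C_{ab}$'' step in (I) fails.} Your plan for (I) follows the paper's skeleton. You specialize Corollary~\ref{sepmToda} to $\widetilde q_n=\Psi_a(n,\mathbf t,\mu^{\eta_a})$ and $\widetilde r_n=\Psi_b^*(n,\mathbf t,\lambda^{\eta_b})$, write the two potentials as vertex-operator quotients, and then use the free additive constant. The step that cannot be carried out is ``choose that constant to absorb $C_{ab}$ \dots\ with the same $C_{ab}$ for $i=0,1$''.

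The two additive constants are not independent. The derivation of Corollary~\ref{sepmToda} ($T_1$, then $T_2$ with $\widetilde r^{[1]}_n=(\widetilde q_n\widetilde r_n-\Omega_0)/\widetilde q_n$) produces the pair $\bigl(-\Omega_0\tau_{0,n},(\widetilde q_n\widetilde r_n-\Omega_0)\tau_{1,n}\bigr)$, where $\Omega_0=\Omega(\widetilde q_n,\Delta(\widetilde r_n))$. So the two potentials must sum to exactly $\widetilde q_n\widetilde r_n$.

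The explicit potentials are $F_0=(-1)^{\delta_{a,b}}\mathbb{X}_{ab}\tau_{0,n}/\tau_{0,n}$ and $F_1=\mu^{\delta_{a,1}}\lambda^{-\delta_{b,1}}\mathbb{X}_{ab}\tau_{1,n}/\tau_{1,n}$, i.e.\ \eqref{Xab1}--\eqref{Xab2}. They already satisfy $F_0+F_1=\widetilde q_n\widetilde r_n$; for $a=b=1$ this is the three-term identity in the paper's first $\tau_{1,n}$ example. Hence only one constant $k$ is free, and the pair is $\bigl((F_0+k)\tau_{0,n},(F_1-k)\tau_{1,n}\bigr)$.

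Normalizing the first entry to $\tau_{0,n}+C_{ab}\mathbb{X}_{ab}\tau_{0,n}$ forces $k=(-1)^{\delta_{a,b}}/C_{ab}$. The second entry is then proportional to $\tau_{1,n}-(-1)^{\delta_{a,b}}\mu^{\delta_{a,1}}\lambda^{-\delta_{b,1}}C_{ab}\mathbb{X}_{ab}\tau_{1,n}$. This agrees with the stated (I) only for $a=b=2$.

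Your diagnosis (``the bilinear equation is bilinear, so the same $C_{ab}$ must appear'') is also off. Equation \eqref{HirotamToda} is invariant under independent constant rescalings of $\tau_{0,n}$ and $\tau_{1,n}$. Multiplicative constants are therefore free; it is the additive constant that is rigid.

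\textbf{Statement (I) is false as written, and the paper's proof has the same hole.} The paper adjusts the constants of the two $\Delta^{-1}$'s independently before quoting Corollary~\ref{sepmToda}. Here is a counterexample for $(a,b)=(1,1)$.
\begin{itemize}
\item Set $V:=e^{-\sum_k k t^{(1)}_k t^{(2)}_k}$ and $\theta(z)=\xi(\mathbf t^{(1)},z)+\xi(\mathbf t^{(2)},z^{-1})$. Take $\tau_{0,n}=V$ and $\tau_{1,n}=p^n e^{\theta(p)}V$.
\item Then $\tau_{0,n}+C\mathbb{X}_{11}\tau_{0,n}=(1+cW_n)V$, where $W_n=(\mu/\lambda)^n e^{\theta(\mu)-\theta(\lambda)}$ and $c=-C\lambda/(\lambda-\mu)$.
\item Also $\tau_{1,n}+C\mathbb{X}_{11}\tau_{1,n}=\bigl(1+c\tfrac{\lambda(\mu-p)}{\mu(\lambda-p)}W_n\bigr)\tau_{1,n}$.
\item The Toda wave function $\psi_1(n,\mathbf t,p)$ of $(1+cW_n)V$ gives the admissible partner $\bigl(1+c\tfrac{\mu-p}{\lambda-p}W_n\bigr)\tau_{1,n}=\tau_{1,n}+C\tfrac{\mu}{\lambda}\mathbb{X}_{11}\tau_{1,n}$.
\item Toda eigenfunctions form a linear space, and by Theorem~\ref{Todaeigadjo}(I) the ratio $\tilde\tau_{1,n}/\tilde\tau_{0,n}$ would have to be one. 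So $h_n=p^n e^{\theta(p)}W_n/(1+cW_n)$ would have to be an eigenfunction.
\item But $\partial_{t^{(1)}_1}h_n=h_{n+1}+v_0h_n$, with $v_0=\partial_{t^{(1)}_1}\log\frac{1+cW_{n+1}}{1+cW_n}$, forces $p+\mu-\lambda=p\mu/\lambda$, which fails generically.
\end{itemize}

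\textbf{Consequences for your (II).} The corrected relative factor is exactly what (III) encodes, and your route to (II) needs it. Use the perturbed pair $\bigl(\tau_{0,n}+\epsilon(-1)^{\delta_{a,b}}\mathbb{X}_{ab}\tau_{0,n},\ \tau_{1,n}-\epsilon\mu^{\delta_{a,1}}\lambda^{-\delta_{b,1}}\mathbb{X}_{ab}\tau_{1,n}\bigr)$; its first-order expansion in \eqref{mToadwave1} reproduces (II). A common $C_{ab}$ would put the wrong coefficient on the $\tau_{1,n}$ term.

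With that correction, your infinitesimal binary Darboux argument (sending the constant in $\Omega_0$ to infinity) is a legitimate alternative to the paper's direct computation. The paper instead uses Lemma~\ref{Lem:eq}, which gives $\Omega(\Delta\Psi_1(z),\widetilde r_{n+1})=\widetilde r_n(\mathbf t-[z^{-1}]_1)\Psi_1(z)$. It then uses Lemma~\ref{qrrelation} to split $\widetilde q_n(\mathbf t)\widetilde r_n(\mathbf t-[z^{-1}]_1)$ into $F_0(\mathbf t-[z^{-1}]_1)+F_1(\mathbf t)$.

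Your route additionally requires tracking the wave-function normalizations from Theorem~\ref{mTodaeigentran}(II): $S_1^{[1]}=T_1S_1\Lambda^{-1}$, $S_2^{[1]}=-T_1S_2$, and their analogues for $T_2$. Part (III) of that theorem alone does not provide these.
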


\begin{remark}
Here, ${\pa}_{\lambda,\mu}^{ab}$ can also be regarded as the generators of the mToda additional symmetries. Although the case $a=b\in\{1,2\}$ of this theorem was stated in \cite{Yang2025}, the cases of $a \neq b$ were not considered. In particular, the second part of the theorem is also known as the Adler-Shiota-van Moerbeke (ASvM) formula, which connects the actions of the additional symmetries on the wave functions with those on the tau functions.
\end{remark}
\begin{theorem}\label{mTodahigerfayide}
Given the mToda tau pair  $(\tau_{0,n},\tau_{1,n})$, the mToda wave functions $\Psi_a$  and the mToda adjoint wave functions $\Psi^*_a$ $(a=1,2)$, we have
\begin{itemize}
  \item $s\geq k$
\begin{align*}
    &\left(\prod_{\gamma=i}^1 X_1^*(\lambda_\gamma) \prod_{\delta=k-i}^{1} X_2^*(\kappa_\delta) \prod_{\beta=s-l}^{1}X_2(\nu_\beta) \prod_{\alpha=l}^1 X_1(\mu_{\alpha})\right)(\tau_{\vartheta,n}) \nonumber \\
    &= (-1)^{s(i-1)+\varepsilon+(1-\vartheta)l + \vartheta i} \frac{\prod_{\gamma=1}^i \lambda_\gamma^{s-k+i-\gamma+\vartheta}}{\prod_{\alpha=1}^l \mu_\alpha^{\alpha-1+\vartheta}}
    \cdot \det \!\big( \mathbf{M}^{(\vartheta)} \big) \cdot
    \left( \prod_{m=0}^{s-k-1+\vartheta} \frac{\tau_{1,n+m}}{\tau_{0,n+m}} \right) \cdot \tau_{0,n}.
\end{align*}
  \item   $s < k$
\begin{align*}
    &\left(\prod_{\gamma=i}^1 X_1^*(\lambda_\gamma) \prod_{\delta=k-i}^{1} X_2^*(\kappa_\delta) \prod_{\beta=s-l}^{1}X_2(\nu_\beta) \prod_{\alpha=l}^1 X_1(\mu_{\alpha})\right)(\tau_{\vartheta,n}) \nonumber \\
    &= (-1)^{si+\varepsilon+(1-\vartheta)l -(k+1-i) \vartheta} \frac{\prod_{\gamma=1}^i \lambda_\gamma^{s-k+i-\gamma+\vartheta}}{\prod_{\alpha=1}^l \mu_\alpha^{\alpha-1+\vartheta}}
    \cdot \det \!\big( \widetilde{\mathbf{M}}^{(\vartheta)} \big)_{n-1} \cdot
    \left( \prod_{m=0}^{k-s-1-\vartheta} \frac{\tau_{0,n-m-1}}{\tau_{1,n-m-1}} \right) \cdot\tau_{0,n}.
\end{align*}
\end{itemize}
 Here the subscript $n-1$ after the determinant $\widetilde{\mathbf{M}}^{(\vartheta)}$ indicates that all elements of the internal matrix are $n\rightarrow n-1$,   $\vartheta\in \{0, 1\}$,
    $\varepsilon = \frac{(s-l)(s+l-1)}{2} + \frac{(k-i)(k-i-1)}{2} $ for $0\leq i\leq k$, $0\leq l\leq s$, $X_a(\varrho) := X_a(n,\mathbf{t},\varrho)$,  $X_a^*(\varrho) := X_a^*(n,\mathbf{t},\varrho)$, $\Psi_a(\varrho):=\Psi_a(n,\mathbf{t},\varrho)$,   $\Psi_a^*(\varrho):=\Psi_a^*(n,\mathbf{t},\varrho)$ for $ a \in \{1,2\}, \varrho \in \{\lambda,\mu,\kappa,\nu\} $ and
\begin{align*}
    \mathbf{M}^{(0)} := \begin{pmatrix}   {\mathbf{A}} \\  { {\mathbf{B}}} \end{pmatrix}_{s\times s},\quad
    \mathbf{M}^{(1)} := \begin{pmatrix}   {\mathbf{A}} & \mathbf{D} \\ \mathbf{C}  & \mathbf{E}\end{pmatrix}_{(s+1)\times(s+1)},\quad
    \widetilde{\mathbf{M}}^{ (0)} := \begin{pmatrix} \widetilde{\mathbf{A}} \\[4pt] \widetilde{ {\mathbf{B}}} \end{pmatrix}_{k\times k},\quad
    \widetilde{\mathbf{M}}^{(1)} := \begin{pmatrix} \widetilde{\mathbf{D}}  \\[4pt] \widetilde{\mathbf{A}}  \\[4pt] \widetilde{ {\mathbf{C}}} \end{pmatrix}_{k\times k},
\end{align*}
with the matrix  sub-blocks are given by:
\begin{small}
\begin{align*}
    {\mathbf{A}} &:= \begin{pmatrix}  {\Omega}\Big( \Psi_1(\mu_\alpha),\Delta(\Psi_1^*(\lambda_{i-\gamma+1})) \Big) _{\substack{1 \le \gamma \le i \\ 1 \le \alpha \le l}} &  {\Omega}\Big( \Psi_2(\nu^{-1}_\beta),\Delta(\Psi_1^*(\lambda_{i-\gamma+1})) \Big)_{\substack{1 \le \gamma \le i \\ 1 \le \beta \le s-l}} \\[8pt]
    {\Omega}\Big( \Psi_1(\mu_\alpha),\Delta(\Psi_2^*(\kappa^{-1}_{k-i-\delta+1})) \Big)_{\substack{1 \le \delta \le k-i \\ 1 \le \alpha \le l}} & {\Omega}\Big( \Psi_2(\nu^{-1}_\beta),\Delta(\Psi_2^*(\kappa^{-1}_{k-i-\delta+1})) \Big)_{\substack{1 \le \delta \le k-i \\ 1 \le \beta \le s-l}} \end{pmatrix}_{k\times s}, \\[8pt]
    {\mathbf{B}} &:= \begin{pmatrix} \left( \Delta^{m}\Big(\Psi_1(\mu_\alpha)\Big) \right)_{\substack{0 \le m \le s-k-1 \\ 1 \le \alpha \le  l}} & \left( \Delta^{m}\Big(\Psi_2(\nu^{-1}_\beta)\Big) \right)_{\substack{0 \le m \le s-k-1 \\ 1 \le \beta \le  s-l}} \end{pmatrix}_{(s-k)\times s}, \\[8pt]
     {\mathbf{C}} &:= \begin{pmatrix} \left( \Delta^{m}\Big(\Psi_1(\mu_\alpha)\Big) \right)_{\substack{0 \le m \le s-k \\ 1 \le \alpha \le  l}} & \left( \Delta^{m}\Big(\Psi_2(\nu^{-1}_\beta)\Big) \right)_{\substack{0 \le m \le s-k \\ 1 \le \beta \le  s-l}} \end{pmatrix}_{(s-k+1)\times s},\\[8pt]
    \mathbf{D}  &:= \begin{pmatrix} \big(\Psi_1^*(\lambda_\gamma)\big)_{\substack{1 \le \gamma \le i \\ 1}} \\[8pt] \big(\Psi_2^*(\kappa^{-1}_\delta)\big)_{\substack{1 \le \delta \le k-i \\ 1}} \end{pmatrix}_{k\times 1},  \quad
    \mathbf{E} :=\begin{pmatrix} \big(\delta_{m,0}\big)_{\substack{0 \le m \le s-k \\ 1}}\end{pmatrix}_{(s-k+1)\times 1},\\
    \widetilde{\mathbf{A}}  &:= \begin{pmatrix}
    \Big( \Lambda\Omega\big( \Psi_2(\nu^{-1}_{s-l-\beta+1}), \Delta(\Psi_2^*(\kappa^{-1}_\delta)) \big) \Big)_{\substack{1 \le \beta \le s-l \\ 1 \le \delta \le k-i}} &
    \Big( \Lambda\Omega\big(\Psi_2(\nu^{-1}_{s-l-\beta+1}), \Delta(\Psi_1^*(\lambda_\gamma)) \big) \Big)_{\substack{1 \le \beta \le s-l \\ 1 \le \gamma \le i}} \\[12pt]
    \Big( \Lambda\Omega\big( \Psi_1(\mu_{l-\alpha+1}), \Delta(\Psi_2^*(\kappa^{-1}_\delta)) \big) \Big)_{\substack{1 \le \alpha \le l \\ 1 \le \delta \le k-i}} &
    \Big( \Lambda\Omega\big(\Psi_1(\mu_{l-\alpha+1}), \Delta(\Psi_1^*(\lambda_\gamma)) \big) \Big)_{\substack{1 \le \alpha \le l \\ 1 \le \gamma \le i}}
    \end{pmatrix}_{s\times k}, \\[12pt]
    \widetilde{\mathbf{B}}  &:= \begin{pmatrix}
    \left( (\Delta^*)^m \Big( \Delta(\Psi_2^*(\kappa^{-1}_\delta)) \Big) \right)_{\substack{0 \le m \le k-s-1 \\ 1 \le \delta \le k-i}} &
    \left( (\Delta^*)^m \Big( \Delta(\Psi_1^*(\lambda_\gamma)) \Big) \right)_{\substack{0 \le m \le k-s-1 \\ 1 \le \gamma \le i}}
    \end{pmatrix}_{(k-s)\times k}, \\[12pt]
    \widetilde{ {\mathbf{C}}}  &:= \begin{pmatrix}
    \left( (\Delta^*)^m \Big( \Delta(\Psi_2^*(\kappa^{-1}_\delta)) \Big) \right)_{\substack{0 \le m \le k-s-2 \\ 1 \le \delta \le k-i}} &
    \left( (\Delta^*)^m \Big( \Delta(\Psi_1^*(\lambda_\gamma)) \Big) \right)_{\substack{0 \le m \le k-s-2 \\ 1 \le \gamma \le i}}
    \end{pmatrix}_{(k-s-1)\times k},\\[12pt]
    \widetilde{\mathbf{D}}  &:= \begin{pmatrix}
    \Big( \Psi_2^*(n+1, \mathbf{t},\kappa^{-1}_\delta) \Big)_{\substack{  1 \\ 1 \le \delta \le k-i}} &
    \Big( \Psi_1^*(n+1,\mathbf{t}, \lambda_\gamma) \Big)_{\substack{  1 \\ 1 \le \gamma \le i}}
    \end{pmatrix}_{1\times k}.
\end{align*}
\end{small}
\end{theorem}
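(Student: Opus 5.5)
Starting from the last statement, Theorem \ref{mTodahigerfayide}, the plan is to build the identity in three stages. First we show that a single vertex operator acting on the tau pair is one elementary mToda transformation (Theorem \ref{mTodaeigentran}). Then we show that products of vertex operators are iterates of these transformations. Finally we evaluate the iterates as determinants by induction, exactly as for multi-step Darboux transformations in KP/Toda.

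\emph{Single operators.} Using \eqref{mToadwave1}--\eqref{mToadwave2} and the definitions of $X_a,X_a^*$, direct computation gives
\[
X_1(\mu)\tau_{0,n}=\mu^n e^{\xi(\mathbf t^{(1)},\mu)}\tau_{0,n}(\mathbf t-[\mu^{-1}]_1)=\Psi_1(n,\mathbf t,\mu)\,\tau_{1,n}.
\]
This is Case 1 of Theorem \ref{mTodaeigentran} with $\widetilde q_n=\Psi_1(\mu)$. In the same way $X_2(\nu)\tau_{0,n}=-\Psi_2(n,\mathbf t,\nu^{-1})\tau_{1,n}$, and $X_1^*,X_2^*$ acting on $\tau_{1,n}$ give Case 2 with $\widetilde r_n=\Psi_1^*(\lambda)$ or $\Psi_2^*(\kappa^{-1})$. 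For the partner function we need a second computation: $X_1(\mu)\tau_{1,n}$ must equal the $\tau^{[1]}_{1,n}$ of Case 1, namely $\Delta(\Psi_1(\mu))\tau_{1,n}\tau_{1,n+1}/\tau_{0,n+1}$ up to sign and powers of $\mu$. We get this from the bilinear equation \eqref{HirotamToda} by the standard shift specializations $\mathbf t'=\mathbf t-[\mu^{-1}]_1$, $n'=n+1$; these give the three-term Fay-type identities that produce the factor $\mu^{-\vartheta}$ visible in the statement. So each vertex operator maps the tau pair $(\tau_{0,n},\tau_{1,n})$ to the one-step transformed pair, up to explicit scalars. Together with Theorem \ref{mTodaeigentran}(III), this shows that a product of vertex operators is an iterate of $T_1$-type steps (for $X_a$) followed by $T_2$-type steps (for $X_b^*$). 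At each step the new eigenfunctions are $T_b(\Psi)$ and the new adjoint eigenfunctions are $\Delta^{-1}(T_b^*)^{-1}\Delta(\Psi^*)$. The SEPs $\Omega$ then appear exactly as in Corollary \ref{sepmToda}.

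\emph{Iterated determinants.} Next we prove two closed formulas by induction on the number of steps.
\begin{itemize}
\item After $s$ steps of type 1, with eigenfunctions $\phi_1,\dots,\phi_s$, the tau function is a discrete Wronskian $\det(\Delta^m\phi_\alpha)_{0\le m\le s-1}$ times $\prod_{m=0}^{s-1}\tau_{1,n+m}/\tau_{0,n+m}\cdot\tau_{0,n}$. The $\vartheta=1$ version is obtained by bordering with one more row and column.
\item Then $k$ steps of type 2 are applied, with adjoint eigenfunctions $\Delta(\Psi^*)$. Each step replaces the top $k$ rows of the Wronskian by SEP entries $\Omega(\cdot,\Delta\Psi^*)$. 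This uses the mToda SEP identities from the Appendix: $\Delta\Omega(q,\Delta r)=q\,\Delta r$ (suitably shifted), and the transformation law of $\Omega$ under $T_1,T_2$. The result is the block $\mathbf A$ over the remaining Wronskian block $\mathbf B$ (or $\mathbf C$), which is $\mathbf M^{(\vartheta)}$.
\end{itemize}
When $s<k$ there are not enough Wronskian rows left. We then expand instead in adjoint Wronskian rows $(\Delta^*)^m\Delta\Psi^*$. The natural expression is at site $n-1$ with the product $\prod\tau_{0,n-m-1}/\tau_{1,n-m-1}$, which gives $\widetilde{\mathbf M}^{(\vartheta)}$. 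The inductive step is a Jacobi/Sylvester determinant identity (Desnanot–Jacobi), exactly as in the KP Gram-type formulas. Finally we collect the signs $\varepsilon$ (from reordering the products $\prod_{\beta=s-l}^1$ and $\prod_{\delta=k-i}^1$ and the $X_2$ minus signs) and the powers of $\lambda_\gamma,\mu_\alpha$ (from the $\mu^{-\vartheta}$ factors and the $z^{n}$ shifts at each step).

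\emph{Main obstacle.} The hardest step is the induction establishing that the mixed SEP/Wronskian determinant transforms correctly under one more $T_2$ step, and the crossover between the regimes $s\ge k$ and $s<k$. The first approach to try is a direct check for the one-row case ($k=1$) via the SEP identities, followed by Sylvester's identity for general $k$. Bookkeeping of the signs and of the $\lambda$-power exponents $s-k+i-\gamma+\vartheta$ is the other error-prone point. We would verify these on $s,k\le 2$ before fixing the general pattern.
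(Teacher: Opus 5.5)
Your first two stages match the paper's proof. Single vertex operators are identified with the one-step transformations of Theorem \ref{mTodaeigentran}; your specialization $n'=n+1$, $\mathbf t'=\mathbf t-[\mu^{-1}]_1$ of \eqref{HirotamToda} does give the required three-term identity, equivalent to Lemma \ref{DiffFerence}. The product of vertex operators then becomes $s$ steps of Case~1 followed by $k$ steps of Case~2, as in \eqref{eq:tau0_vertex}--\eqref{eq:tau1_vertex}.

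The difference is in how the iterate is evaluated, i.e.\ how Proposition \ref{mTodatautra} is obtained.
\begin{itemize}
\item You propose a direct induction inside mToda: transformation laws of mToda SEPs under $T_1,T_2$, a Desnanot--Jacobi step, and the $s\ge k$/$s<k$ crossover handled by hand.
\item The paper does no induction at the mToda level. Through the Miura dictionary of Lemma \ref{muratras} it sets $q=\tau_{1,n}/\tau_{0,n}$, $q_i=\widetilde q_i q$, $r_j=\Delta(\widetilde r_j)/q$. It notes that $\Omega(\widetilde q_i,\Delta(\widetilde r_j))=\Omega(q_i,r_j)$, because $\widetilde q_i\Delta(\widetilde r_j)=q_ir_j$. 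It applies the Casoratian gauge identity $\det\big(\Delta^m(q_j/q)\big)=\prod_m q_{n+m}^{-1}\det\big(\Delta^m(q_j)\big)$ to the Wronskian rows. It then imports the Toda multi-step formula of Lemma \ref{TodatauDX} for both regimes at once.
\end{itemize}
The paper's route is shorter and explains the shape of the result. The factor $\prod_m\tau_{1,n+m}/\tau_{0,n+m}$ is exactly the gauge factor. The bordered $\vartheta=1$ determinant is $IW_{k,s+1}$ with the extra eigenfunction $q$ (i.e.\ $\widetilde q=1$), which produces the columns $\mathbf D$, $\mathbf E$. Your route is self-contained and avoids the cited Toda result. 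However, you would have to derive the SEP transformation laws yourself (they are not in the Appendix). You would also have to prove the crossover identity relating the $\Omega$-rows at site $n$ to the $\Lambda\Omega$-rows of $\widetilde{\mathbf A}$.

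In either route, you need to make the normalization of the SEPs explicit. The $\Delta^{-1}$ in Theorem \ref{mTodaeigentran}(III) and the SEPs in Corollary \ref{sepmToda} are defined only up to additive constants. Different choices give different tau pairs; the paper exploits exactly this freedom in Theorem \ref{ASvM}(I). The vertex-operator product, by contrast, is one fixed function.

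So citing part (III) is not enough. You need that the wave and adjoint wave functions of each transformed pair are the transformed ones, with definite scalars and constants. The proof of Theorem \ref{mTodaeigentran}(II) supplies this: $S_1^{[1]}=T_1S_1\Lambda^{-1}$, $S_2^{[1]}=-T_1S_2$, and $\Psi_1^{*[1]}=z\widetilde q_n^{-1}\Omega(\Delta(\widetilde q_n),\Psi_1^*(n+1,\mathbf t,z))$ via Lemma \ref{Lem:eq}. Equivalently, the entries of $\mathbf A$ must be the ones fixed by Lemma \ref{Lem:eq}, for example $\Omega\big(\Psi_1(\mu),\Delta(\Psi_1^*(\lambda))\big)=\Psi_1(n,\mathbf t+[\lambda^{-1}]_1,\mu)\Psi_1^*(n,\mathbf t,\lambda)$. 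This normalization is what makes $X_1^*(\lambda)X_1(\mu)\tau_{0,n}=-\Omega\big(\Psi_1(\mu),\Delta(\Psi_1^*(\lambda))\big)\tau_{0,n}$ hold exactly rather than up to a constant.
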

\begin{remark}
In this theorem, we apply four distinct vertex operators ($X_1^*, X_2^*, X_2, X_1$) to the mToda tau pair $(\tau_{0,n}, \tau_{1,n})$ in a successive, mixed, and asymmetric manner. The actions of these vertex operators on the tau functions are expressed in terms of generalized Wronskian determinants, which are known as the   addition formulas or generalized Fay identities \cite{Nakayashiki2026}. These determinant representations characterize the cross-flow interactions between $\mathbf{t}^{(1)}$ and $\mathbf{t}^{(2)}$. Furthermore, these representations play an important role in the connection between integrable systems and random matrix theory  \cite{Adler1999}, the determinant structures generated by vertex operators can serve as Christoffel-Darboux kernels and be utilized to  compute Fredholm determinants.
\end{remark}

\section{Transformations of the mToda   tau functions}\label{section2}
In this section, we investigate the relationship between the Toda and mToda tau  functions, and subsequently present the transformations for the mToda tau   pair, wave operators, and Lax operators. First, we provide the proof of Theorem \ref{Todaeigadjo}, demonstrating that both   of the mToda  tau  functions are two Toda  tau  functions. Based on this, we further prove Theorem \ref{mTodaeigentran}.
\subsection{Relationship between Toda and mToda tau functions}
Before proving Theorem \ref{Todaeigadjo}, we introduce the following lemma.
\begin{lemma}\label{DiffFerence}
The mToda wave functions $\Psi_a(n,\mathbf{t},z)$ and the mToda adjoint wave functions $\Psi_a(n,\mathbf{t},z)$ satisfy
\begin{align*}
&\Psi_1(n+1,\mathbf{t},z)-\Psi_1(n,\mathbf{t},z)
=
\frac{\tau_{0,n+1}(\mathbf{t})\tau_{1,n}(\mathbf{t}-[z^{-1}]_1)}
{\tau_{1,n}(\mathbf{t})\tau_{1,n+1}(\mathbf{t})}
z^{n+1}e^{\xi({\mathbf{t}}^{(1)},z)}, \\
&\Psi_1^*(n+1,\mathbf{t},z)-\Psi_1^*(n,\mathbf{t},z)
=-\frac{\tau_{0,n+1}(\mathbf{t}+[z^{-1}]_1)
\tau_{1,n}(\mathbf{t})}
{\tau_{0,n}(\mathbf{t})\tau_{0,n+1}(\mathbf{t})}
z^{-n}e^{-\xi({\mathbf{t}}^{(1)},z)}, \\
&\Psi_2(n+1,\mathbf{t},z)-\Psi_2(n,\mathbf{t},z)
=
-\frac{\tau_{1,n+1}(\mathbf{t}-[z]_2)\tau_{0,n+1}(\mathbf{t})}
{\tau_{1,n}(\mathbf{t})\tau_{1,n+1}(\mathbf{t})}
z^{n}e^{\xi({\mathbf{t}}^{(2)},z^{-1})}, \\
&\Psi_2^*(n+1,\mathbf{t},z)-\Psi_2^*(n,\mathbf{t},z)
=\frac{\tau_{1,n}(\mathbf{t})
\tau_{0,n}(\mathbf{t}+[z]_2)}
{\tau_{0,n}(\mathbf{t})\tau_{0,n+1}(\mathbf{t})}
z^{-n}e^{-\xi({\mathbf{t}}^{(2)},z^{-1})}.
\end{align*}
\end{lemma}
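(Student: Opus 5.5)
These four identities are bilinear identities between the tau pair at shifted arguments. I would obtain each one by specializing the mToda bilinear equation \eqref{HirotamToda} to suitable $(n',\mathbf{t}')$ and evaluating the contour integrals by residues, in the spirit of the classical derivation of differential Fay identities. The cleanest route is the wave-function form \eqref{mTodaWavbi}. After multiplying out the tau factors, each claimed identity is a single three-term relation of the form
\[
\tau_{0,n+1}(\mathbf{t}-[z^{-1}]_1)\tau_{1,n}(\mathbf{t})-z\,\tau_{0,n}(\mathbf{t}-[z^{-1}]_1)\tau_{1,n+1}(\mathbf{t})=z\,\tau_{0,n+1}(\mathbf{t})\tau_{1,n}(\mathbf{t}-[z^{-1}]_1)
\]
(this is the first one), together with its three analogues.

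For the first identity I would take $n'=n+1$ and $\mathbf{t}'=\mathbf{t}-[z_0^{-1}]_1$ in \eqref{HirotamToda}, with $z_0$ large, and then rename $z_0\to z$. Then $e^{\xi(\mathbf{t}^{(1)}-\mathbf{t}^{(1)\prime},z')}=(1-z'/z_0)^{-1}$. The first integrand becomes
\[
z'^{-2}\,\tau_{0,n}(\mathbf{t}-[z'^{-1}]_1)\,\tau_{1,n+1}(\mathbf{t}-[z_0^{-1}]_1+[z'^{-1}]_1)\,(1-z'/z_0)^{-1}.
\]
On $C_R$ with $R>|z_0|$, the integral equals minus the sum of the residues at $z'=z_0$ and at $z'=\infty$. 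The residue at $z_0$ produces $z_0^{-1}\tau_{0,n}(\mathbf{t}-[z_0^{-1}]_1)\tau_{1,n+1}(\mathbf{t})$. Expanding at infinity produces the boundary term coming from the $z'^{-2}$ factor, namely $\tau_{0,n}(\mathbf{t})\tau_{1,n+1}(\mathbf{t}-[z_0^{-1}]_1)$ times the leading coefficient; I need to check that its order actually contributes.

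The second integral has integrand $z'^{-1}\tau_{0,n+1}(\mathbf{t}-[z']_2)\tau_{1,n}(\ldots+[z']_2)$, and $\mathbf{t}^{(2)}$ is unchanged, so there is no exponential factor. It is therefore regular at $z'=0$ apart from the simple pole, and it gives $\tau_{0,n+1}(\mathbf{t})\tau_{1,n}(\mathbf{t}-[z_0^{-1}]_1)$. The right-hand side is $\tau_{1,n}(\mathbf{t})\tau_{0,n+1}(\mathbf{t}-[z_0^{-1}]_1)$. Collecting the three terms gives the three-term relation. I would fix the exact powers of $z$ by the bookkeeping $z^{n-n'-1}=z^{-2}$, and then divide by $\tau_{1,n}\tau_{1,n+1}$ to get the formula for $\Psi_1(n+1)-\Psi_1(n)$.

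For the adjoint identity with $\Psi_1^*$, I would shift the other argument instead. Taking $\mathbf{t}\to\mathbf{t}+[z_0^{-1}]_1$ and $n'=n$ or $n+1$ puts the Cauchy factor on the opposite tau function, and the same residue count gives the relation among $\tau_{1,n}(\mathbf{t}+[z^{-1}]_1)$, $\tau_{1,n+1}(\mathbf{t}+[z^{-1}]_1)$ and $\tau_{0,n+1}(\mathbf{t}+[z^{-1}]_1)\tau_{1,n}(\mathbf{t})$. The $\Psi_2$ and $\Psi_2^*$ identities follow symmetrically by shifting $\mathbf{t}^{(2)}$ by $\mp[z_0]_2$ with $z_0$ small. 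In that case the Cauchy factor $(1-z_0/z')^{-1}$ appears in the $C_r$ integral, and the poles to collect are at $z'=z_0$ and $z'=0$, while the $C_R$ integral becomes a pure residue at infinity. The sign changes in the $\Psi_2$ and $\Psi_2^*$ formulas arise from the orientation of the residue at $z_0$ inside versus outside $C_r$.

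The only real obstacle is bookkeeping. I must choose $n'$ correctly in each case ($n'=n+1$ versus $n'=n$, possibly with $n\to n\pm1$ relabelling) so that the powers of $z'$ produce exactly simple poles. I must also track the signs and the factor $z$ in each of the four cases. For each case I would first check the leading $z\to\infty$ (respectively $z\to0$) coefficient against the definitions \eqref{mToadwave1} and \eqref{mToadwave2} as a consistency check.
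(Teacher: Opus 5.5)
Your strategy is the paper's: specialize \eqref{HirotamToda} at Miwa-shifted arguments and read off three-term relations by residues. The paper uses $n-n'=1$, $\mathbf t-\mathbf t'=[z^{-1}]_1$ and $n=n'$, $\mathbf t-\mathbf t'=[z]_2$. Your choice $n'=n+1$, $\mathbf t'=\mathbf t-[z_0^{-1}]_1$ is a legitimate variant. It gives the same relation up to $n\to n+1$ and a shift of $\mathbf t$, with the $C_r$ integral supplying the term that the paper gets from $z'=\infty$. The plan works, but several details are wrong as written.

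\emph{First}, the displayed relation misplaces the factor $z$. Clearing denominators in $\Psi_1(n+1)-\Psi_1(n)$ gives
\[
z\,\tau_{0,n+1}(\mathbf t-[z^{-1}]_1)\tau_{1,n}(\mathbf t)-\tau_{0,n}(\mathbf t-[z^{-1}]_1)\tau_{1,n+1}(\mathbf t)=z\,\tau_{0,n+1}(\mathbf t)\tau_{1,n}(\mathbf t-[z^{-1}]_1).
\]
Your version fails at order $z$ as $z\to\infty$, since it would force $-\tau_{0,n}\tau_{1,n+1}=\tau_{0,n+1}\tau_{1,n}$.

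\emph{Second}, the specialization $\mathbf t-\mathbf t'=[z_0^{-1}]_1$ is valid only where $(1-z'/z_0)^{-1}=\sum_k (z'/z_0)^k$ converges on the contour. So you need $R<|z_0|$, not $R>|z_0|$. With $z_0$ inside $C_R$ you would be led to the false two-term identity $\tau_{0,n+1}(\mathbf t)\tau_{1,n}(\mathbf t-[z_0^{-1}]_1)=\tau_{1,n}(\mathbf t)\tau_{0,n+1}(\mathbf t-[z_0^{-1}]_1)$.

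\emph{Third}, the term at infinity you were unsure about does not contribute. There the integrand is $z'^{-2}\cdot O(1)\cdot O(z'^{-1})=O(z'^{-3})$. You can also see this formally: the coefficient of $z'^{-1}$ in $z'^{-2}AB\sum_k(z'/z_0)^k$ is exactly $z_0^{-1}(AB)|_{z'=z_0}$. So the first integral is exactly $z_0^{-1}\tau_{0,n}(\mathbf t-[z_0^{-1}]_1)\tau_{1,n+1}(\mathbf t)$. Combined with your correct $C_r$ residue and right-hand side, multiplying by $z_0$ yields precisely the corrected relation above.

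\emph{Fourth}, you need fewer specializations than you plan. The $\Psi_1^*$ identity is the same relation after $\mathbf t\to\mathbf t+[z^{-1}]_1$. Your ``$\mathbf t\to\mathbf t+[z_0^{-1}]_1$'' is in fact the same specialization, since only $\mathbf t-\mathbf t'$ enters the Cauchy factor. For the second pair, the single choice $n=n'$, $\mathbf t-\mathbf t'=[z]_2$ suffices. There the $C_r$ integrand $\frac{z'}{z'-z}\tau_{0,n+1}(\mathbf t-[z']_2)\tau_{1,n-1}(\mathbf t'+[z']_2)$ has no pole at $z'=0$, only at $z'=z$. It gives the $\Psi_2^*$ identity directly, and the $\Psi_2$ identity after $\mathbf t\to\mathbf t-[z]_2$, $n\to n+1$.
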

\begin{proof}
Let   $n-n'=1$, $\mathbf{t}-\mathbf{t}'=[z^{-1}]_1$ and  $n=n'$, $\mathbf{t}-\mathbf{t}'=[z]_2$ in \eqref{HirotamToda}, respectively, then we can  get
\begin{align*}
&-z\tau_{0,n}(\mathbf{t}+[z^{-1}]_1)
\tau_{1,n-1}(\mathbf{t})
+z\tau_{0,n}(\mathbf{t})
\tau_{1,n-1}(\mathbf{t}+[z^{-1}]_1)
=\tau_{1,n}(\mathbf{t}+[z^{-1}]_1)\tau_{0,n-1}(\mathbf{t}),\\
&\tau_{0,n}(\mathbf{t})
\tau_{1,n}(\mathbf{t}+[z]_2)
-z\tau_{0,n+1}(\mathbf{t})
\tau_{1,n-1}(\mathbf{t}+[z]_2)
=\tau_{1,n}(\mathbf{t})\tau_{0,n}(\mathbf{t}+[z]_2).
\end{align*}
The lemma follows directly from the above equations.
\end{proof}

We now {\bf prove Theorem \ref{Todaeigadjo}}. First,
apply  $\Delta$ to the mToda bilinear equation \eqref{mTodaWavbi}  with respect to $n$. By     Lemma \ref{DiffFerence},
we  can get
 \begin{align*}
&\oint_{C_{R}}\frac{dz}{2\pi {\rm i} }z^{n-n'}\tau_{1,n}(\mathbf{t}-[z^{-1}]_1)\tau_{1,n'}(\mathbf{t}'+[z^{-1}]_1)
e^{\xi({\mathbf{t}}^{(1)}-{\mathbf{t}}^{(1)\prime},z)}\nonumber\\
&\qquad=
\oint_{C_{r}}\frac{dz}{2\pi {\rm i} }z^{n-n'}\tau_{1,n+1}(\mathbf{t}-[z]_2)
\tau_{1,n'-1}(\mathbf{t}'+[z]_2)
e^{\xi({\mathbf{t}}^{(2)}-{\mathbf{t}}^{(2)\prime},z^{-1})},
\end{align*}
which coincides exactly with the Toda bilinear equation  \eqref{todabilitau}, demonstrating that $\tau_{1,n}(\mathbf{t})$ is a   Toda tau function. Similarly, applying $\Delta$ to the mToda bilinear equation \eqref{mTodaWavbi}    with respect to $n'$, we can know  that $\tau_{0,n}(\mathbf{t})$ is also a Toda tau function.

Thus, we can  define  the Toda wave functions $\psi^{[i]}_b$ and the Toda adjoint wave functions $\psi_b^{[i]*}$, by setting $\tau^{{\rm Toda}}_{n}=\tau_{i,n}$ with $i=0,1$ in \eqref{Todawave1} and \eqref{Todawave2}. It can be found that
\begin{align}
\pa_{t^{(a)}_k}\big(\psi_b^{[i]}\big)
=(\mathcal{B}_k^{(a)})^{[i]}\big(\psi_b^{[i]}\big),\quad
\pa_{t^{(a)}_k}\big(\psi_b^{[i]*}\big)
=-(\mathcal{B}_k^{(a)})^{[i]*}\big(\psi_b^{*[i]}\big),\quad  a,b=1,2. \label{Todawaveen}
\end{align}
Here   $(\mathcal{B}_k^{(a)})^{[i]}$ is given by \eqref{TodaLax} with the Toda Lax operators  $\mathcal{L}^{[i]}_1$ and $\mathcal{L}^{[i]}_2$ correspond to   the Toda tau function $\tau_{i,n}$ by \eqref{TodaLax1} and \eqref{TodaLax2}.
Further, defining
 \begin{align*}
q_n(\mathbf{t})=\tau_{1,n}(\mathbf{t})/\tau_{0,n}(\mathbf{t}),
\end{align*}
and substituting   into \eqref{HirotamToda},  we derive by  \eqref{Todawave1} and \eqref{Todawave2},
\begin{align}
		q_n( \mathbf{t})=&\oint_{C_R}\frac{{\rm d}z}{2\pi {\rm i}}
		z^{-2}q_{n'}(\mathbf{t}'+[z^{-1}]_1)
\psi_1^{*[0]}(n'-1,\mathbf{t}',z)
\psi_1^{[0]}(n,\mathbf{t},z)\nonumber\\
		&+\oint_{C_r}\frac{{\rm d}z}{2\pi {\rm i}}z^{-1}q_{n'-1}(\mathbf{t}'+[z]_2)\psi_2^{*[0]}(n'-1,\mathbf{t}',z)
\psi_2^{[0]}(n,\mathbf{t},z).\label{phires2}
		\end{align}
So applying   $\partial_{ {t}_k^{(a)}}$ to \eqref{phires2}, we can find by \eqref{Todawaveen} that $q_n( \mathbf{t})$  satisfies \eqref{Todaeig2}.   Thus  $q_n(\mathbf{t})$  is Toda eigenfunction  associated with $\tau_{0,n}(\mathbf{t})$. Similarly, if we define
$r_n(\mathbf{t})=\tau_{0,n+1}(\mathbf{t})/\tau_{1,n+1}(\mathbf{t})$,  we can know that $r_n(\mathbf{t})$
 is the Toda adjoint eigenfunction relative to  $\tau_{1,n}(\mathbf{t})$. {\bf This concludes the first part of Theorem \ref{Todaeigadjo}}.

{\bf For the second part of Theorem  \ref{Todaeigadjo}},   given     the Toda eigenfunction $q_n(\mathbf{t})$, we can get \eqref{qToda}.  (For more details, see Lemma 2 and Theorem 8 in \cite{Lui2024}).
If we define
$\tau_{0,n}(\mathbf{t}):=\tau^{{\rm Toda}}_n(\mathbf{t}),$  $\tau_{1,n}(\mathbf{t}):=q_n(\mathbf{t})\tau^{{\rm Toda}}_n(\mathbf{t}),$
 then we can find
the tau pair   $(\tau_{0,n}(\mathbf{t}),\tau_{1,n}(\mathbf{t})):=(\tau^{{\rm Toda}}_{n}(\mathbf{t}), q_n(\mathbf{t})\tau^{{\rm Toda}}_n(\mathbf{t}))$ satisfies the mToda bilinear equation \eqref{HirotamToda}. In this case,    $\tau_{0,n}(\mathbf{t})$ and $\tau_{1,n}(\mathbf{t})$ are two   Toda tau functions by the first part of Theorem \ref{Todaeigadjo}. Therefore,   $\tau_n^{[1]}(\mathbf{t}):=q_n(\mathbf{t})\tau^{{\rm Toda}}_n(\mathbf{t})$ is a new Toda tau function.

Similarly, for the Toda adjoint eigenfunction $r_n$   satisfying \eqref{rToda},
we can find that the tau pair $(\tau_{0,n}(\mathbf{t}),\tau_{1,n}(\mathbf{t})):=( r_{n-1} (\mathbf{t})\tau^{{\rm Toda}}_{n}(\mathbf{t}),\tau^{{\rm Toda}}_{n}(\mathbf{t}))$ is the mToda tau pair, and  $\tau_n^{[-1]}(\mathbf{t}):=r_{n-1} (\mathbf{t})\tau^{{\rm Toda}}_{n}(\mathbf{t})$ also is a new Toda tau function. {\bf This completes the proof of Theorem \ref{Todaeigadjo}}.
\subsection{Transformations of the mToda hierarchy in tau functions}
The primary objective of this subsection is to prove Theorem \ref{mTodaeigentran}. Before proving Theorem 2, let us recall that the mToda hierarchy is related to the Toda hierarchy via Miura transformations, as shown in the lemma below.
\begin{lemma}\label{muratras}\cite{Rui2024}
Given mToda objects: tau pair $(\tau_{0,n},\tau_{1,n})$,  Lax operators $L_a$   $(a=1,2)$, eigenfunction $\widetilde{q}_{n}(\mathbf{t})$, adjoint eigenfunction $\widetilde{r}_{n}(\mathbf{t})$, if set $q_n=\frac{\tau_{1,n}}{\tau_{0,n}}$,  $r_n=\frac{\tau_{0,n+1}}{\tau_{1,n+1}}$, $\mathcal{L}_a$,  $f_n$ and $g_n$ in the form of Table I
\begin{center}
\begin{tabular}{lll}
\multicolumn{3}{c}{Table I. Miura transformations: mToda $\rightarrow$ Toda}\\
\hline \hline
 &$\mathcal{T}_1=q_n$ &\quad$ \mathcal{T}_2=r^{-1}_n\Delta$ \ \\
\hline
{\rm Lax operators} &$\mathcal{L}_1=
{\mathcal{T}}_1\cdot{L}_1\cdot{\mathcal{T}}_1^{-1}$ &\quad $ \mathcal{L}_1=\mathcal{T}_2\cdot L_1\cdot \mathcal{T}_2^{-1}$\\
  &$\mathcal{L}_2=
{\mathcal{T}}_1\cdot{L}_2\cdot{\mathcal{T}}_1^{-1}$ &\quad $ \mathcal{L}_2=\mathcal{T}_2\cdot L_2\cdot \mathcal{T}_2^{-1}$\\
\hline
{\rm Eigenfunction}&${f_n}=\widetilde{q}_{n}
{q_{n}}$ & \quad$  {f_n}=\frac{\widetilde{q}_{n+1}-\widetilde{q}_{n}}{r_n}$\\
\hline
{\rm Adjoint eigenfunction}&$g_n=\frac{\widetilde{r}_{n+1}-\widetilde{r}_{n}} {q_{n}}
 $ &\quad $g_n=\frac{\widetilde{r}_{n+1}}{r_{n}}$\\
\hline
{\rm Tau function} &$\tau_{n}^{\rm Toda}={\rm const}\cdot\tau_{0,n}$& \quad$\tau_{n}^{\rm Toda}={\rm const}\cdot\tau_{1,n}$\\
\hline
\end{tabular}
\end{center}
 then $\mathcal{L}_1$ and $\mathcal{L}_2$ are Toda Lax operators,    $q_n$   and $f_n$ are Toda   eigenfunctions,  $r_n$ and $g_n$ are Toda adjoint eigenfunction.
 Here $\Delta^{-1}$ in $T_2^{-1}$ is expanded   in the form    $\sum_{ -\infty\ll i }a_i(n)\Lambda^{i}$ for $\mathcal{L}_1 =\mathcal{T}_{2}\cdot {L}_1 \cdot  \mathcal{T}_{2}^{-1}$,   while for $\mathcal{L}_2 =\mathcal{T}_{2}\cdot {L}_2 \cdot \mathcal{T}_{2}^{-1}$, it  is expressed in the form $\sum_{i\ll +\infty}a_i(n)\Lambda^i$.
\end{lemma}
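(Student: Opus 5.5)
The plan is to work with wave operators written in terms of tau functions. In each column of Table~I we identify the Toda wave operators of $\tau^{\rm Toda}_n$ with the gauge-transformed mToda wave operators $S_1,S_2$ of \eqref{mTodaderess1}. The Lax statements then follow by conjugation, and the eigenfunction statements follow from how the projections $(\cdot)_{\Delta,\geq1}$ and $(\cdot)_{\geq0}$ behave under gauge transformation. Throughout we take from Theorem~\ref{Todaeigadjo}(I) that $\tau_{0,n}$ and $\tau_{1,n}$ are Toda tau functions, that $q_n$ is a Toda eigenfunction for $\tau_{0,n}$, and that $r_n$ is a Toda adjoint eigenfunction for $\tau_{1,n}$.

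\emph{Column $\mathcal{T}_1=q_n$.} Let $W_1,W_2$ be the Toda wave operators of $\tau_{0,n}$, whose coefficients are $p_i(-\widetilde{\partial}^{(1)})\tau_{0,n}/\tau_{0,n}$ and $p_i(-\widetilde{\partial}^{(2)})\tau_{0,n+1}/\tau_{0,n}$. Comparing with \eqref{mTodaderess1} gives directly $S_a=q_n^{-1}W_a$. Hence $q_nL_1q_n^{-1}=W_1\Lambda W_1^{-1}=\mathcal{L}_1$ and $q_nL_2q_n^{-1}=W_2\Lambda^{-1}W_2^{-1}=\mathcal{L}_2$; the second expression has the shape required of $\mathcal{L}_2$ via \eqref{TodaLax2}.

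For the eigenfunction we use the elementary identity $A_{\Delta,\geq1}=A_{\geq0}-A(1)$, valid for $A=\sum_{i\ll\infty}a_i\Lambda^i$: both operators annihilate constants and agree modulo multiplication operators. It yields
\begin{align*}
q_n(L_1^k)_{\Delta,\geq1}q_n^{-1}=(\mathcal{L}_1^k)_{\geq0}-\frac{(\mathcal{L}_1^k)_{\geq0}(q_n)}{q_n}.
\end{align*}
Now $\partial_{t^{(1)}_k}(\widetilde q_nq_n)=q_nB^{(1)}_k(\widetilde q_n)+\widetilde q_n\partial_{t^{(1)}_k}q_n$. The correction term above cancels against $\widetilde q_n\,\mathcal{B}^{(1)}_k(q_n)=\widetilde q_n\partial_{t_k^{(1)}}q_n$, so $f_n=\widetilde q_nq_n$ satisfies \eqref{Todaeig2}.

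The $t^{(2)}$ flows are handled the same way, using the analogous identity for $(\cdot)_{\Delta^*,\geq1}$ in the $\Lambda^{-1}$-expansion. There $(\mathcal{L}_2^k)_{<0}$ arises from $(\mathcal{L}_2^k)_{\leq0}$ minus its value on $1$, so one must also check that the $\Lambda^0$ part produces exactly the multiplication term $\partial q_n/q_n$. For $g_n=(\widetilde r_{n+1}-\widetilde r_n)/q_n=q_n^{-1}\Delta(\widetilde r_n)$ we take adjoints. Since $g_n$ is built from $\Delta\widetilde r$, the flow \eqref{adjointeigenfunction} of $\widetilde r$ becomes $\partial\,\Delta\widetilde r=-B^*\Delta\widetilde r$. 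Conjugating by $q_n^{-1}$ then gives $-\mathcal{B}^*(g_n)$, and the extra multiplication term cancels by the identity above.

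\emph{Column $\mathcal{T}_2=r_n^{-1}\Delta$.} Here $\tau^{\rm Toda}=\tau_{1,n}$. By Lemma~\ref{DiffFerence}, $r_n^{-1}\Delta\Psi_1(n,\mathbf{t},z)=z\,\psi_1^{[1]}(n,\mathbf{t},z)$ and $r_n^{-1}\Delta\Psi_2(n,\mathbf{t},z)=-\psi_2^{[1]}(n,\mathbf{t},z)$, where $\psi^{[1]}_a$ are the Toda wave functions of $\tau_{1,n}$. These translate into the operator identities
\begin{align*}
r_n^{-1}\Delta S_1=W^{[1]}_1\Lambda,\qquad r_n^{-1}\Delta S_2=-W^{[1]}_2.
\end{align*}
For the first, the inverse of $\mathcal{T}_2$ involves $\Delta^{-1}$ expanded in $\Lambda^{-1}$; for the second it is expanded in $\Lambda$, which is exactly the prescription stated in the lemma. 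Conjugating $\Lambda^{\pm1}$ then gives $\mathcal{T}_2L_a\mathcal{T}_2^{-1}=\mathcal{L}_a$.

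For eigenfunctions we need $r_n^{-1}\Delta\, B_k^{(1)}=\mathcal{B}_k^{(1)}\,r_n^{-1}\Delta$. Since $B=(L^k)_{\Delta,\geq1}=\Delta\cdot C$ with $C$ of nonnegative $\Lambda$-degree, the left side is an operator of order $\geq0$ times $r_n^{-1}\Delta$. The right side differs from it only by negative-order terms, which must vanish. Applying this to $\widetilde q$ gives that $f_n=r_n^{-1}\Delta\widetilde q_n$ is a Toda eigenfunction. The adjoint statement $g_n=\widetilde r_{n+1}/r_n$ follows by taking formal adjoints of the same intertwining relation, using \eqref{adjointeigenfunction}, whose $\Delta^{-1}B^*\Delta$ structure matches $(\mathcal{T}_2^{-1})^*$.

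\emph{Main obstacle.} The hard step is this intertwining in the $\mathcal{T}_2$ column, together with the $t^{(2)}$ version of the projection identity in the $\mathcal{T}_1$ column. Both require careful tracking of which expansion of $\Delta^{-1}$ is used, and checking that no spurious $\Lambda^0$ or constant terms survive. If the projection bookkeeping becomes awkward, the first fallback is to verify the eigenfunction claims through the spectral representations \eqref{qToda} and \eqref{rToda} instead.
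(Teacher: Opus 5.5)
The paper quotes this lemma from the literature without proof, so there is no route to compare; I assess your argument on its own terms. The $\mathcal{T}_1$ column is correct. $S_a=q_n^{-1}W_a$ is immediate. Your identity $q_n(L_1^k)_{\Delta,\geq1}q_n^{-1}=(\mathcal{L}_1^k)_{\geq0}-(\mathcal{L}_1^k)_{\geq0}(q_n)/q_n$ is right, though the subtracted term is $A_{\geq0}(1)$, not $A(1)$. The $t^{(2)}$ analogue and the adjoint version via $\partial(\Delta\widetilde{r}_n)=-B^*(\Delta\widetilde{r}_n)$ work out the same way. The Lax part of the $\mathcal{T}_2$ column via Lemma~\ref{DiffFerence} is also fine. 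Two steps in the $\mathcal{T}_2$ column are not.

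\textbf{First gap: the intertwining is false.} The relation $r_n^{-1}\Delta B_k^{(1)}=\mathcal{B}_k^{(1)}r_n^{-1}\Delta$ does not hold. Write $(L_1^k)_{\Delta,\geq1}=C\Delta$ (not $\Delta C$), and let $a_0$ be the $\Delta^0$-coefficient of $L_1^k$. Then $\mathcal{T}_2B_k^{(1)}\mathcal{T}_2^{-1}=r_n^{-1}\Delta C r_n$. But $\mathcal{T}_2(L_1^k)_{\Delta,\leq0}\mathcal{T}_2^{-1}$ is not of negative order, because $\Delta\cdot a_0\cdot\Delta^{-1}=\Lambda(a_0)+\Delta(a_0)\Delta^{-1}$. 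Hence $\mathcal{B}_k^{(1)}=\mathcal{T}_2B_k^{(1)}\mathcal{T}_2^{-1}+\Lambda(a_0)$, and the discrepancy has order zero.

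A concrete check: take $\tau_{0,n}=1$ and $\tau_{1,n}=\mu^ne^{\xi(\mathbf{t}^{(1)},\mu)+\xi(\mathbf{t}^{(2)},\mu^{-1})}$, an mToda pair by Theorem~\ref{Todaeigadjo}(II). Then $L_1=\mu\Lambda$, $B_1^{(1)}=\mu\Delta$, $\mathcal{L}_1=\mathcal{B}_1^{(1)}=\Lambda$ and $r_n^{-1}=\tau_{1,n+1}$. This gives $\mathcal{B}_1^{(1)}r_n^{-1}\Delta-r_n^{-1}\Delta B_1^{(1)}=\mu r_n^{-1}\Delta\neq0$.

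You also dropped the term $\partial_{t_k^{(1)}}(r_n^{-1})\,\Delta\widetilde{q}_n$ when differentiating $f_n$. The two omissions cancel, and that cancellation is the missing idea. Since $(r_n^{-1}\Delta C r_n)^*(r_n)=r_nC^*(\Delta^*(1))=0$, Theorem~\ref{Todaeigadjo}(I) gives $\Lambda(a_0)=\mathcal{B}_k^{(1)*}(r_n)/r_n=-\partial_{t_k^{(1)}}\log r_n$. The correct relation is the mirror of your $\mathcal{T}_1$ identity: $\mathcal{T}_2B_k^{(a)}\mathcal{T}_2^{-1}=\mathcal{B}_k^{(a)}-\mathcal{B}_k^{(a)*}(r_n)/r_n$ for $a=1,2$. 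Equivalently, $\mathcal{B}_k^{(a)}\mathcal{T}_2=\mathcal{T}_2B_k^{(a)}+\partial_{t_k^{(a)}}\mathcal{T}_2$, which you can also get by differentiating $r_n^{-1}\Delta S_1=W_1^{[1]}\Lambda$ and comparing the two Sato equations.

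\textbf{Second gap: the adjoint entry of the $\mathcal{T}_2$ column is misprinted.} The entry $g_n=\widetilde{r}_{n+1}/r_n$ cannot come out of taking formal adjoints. Let $h_n=\Delta\widetilde{r}_n$, so $\partial_{t_k^{(a)}}h_n=-B_k^{(a)*}h_n$. Dualizing the corrected relation shows that the Toda adjoint eigenfunction is $(\mathcal{T}_2^*)^{-1}(h_n)=r_n(\Delta^*)^{-1}\Delta\widetilde{r}_n=-r_n\widetilde{r}_{n+1}$.

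The printed formula is in fact false. $\widetilde{r}_n\equiv1$ is an mToda adjoint eigenfunction, since $\Delta(1)=0$. Yet in the example above, $g_n=1/r_n=\tau_{1,n+1}$ gives $\partial_{t_1^{(1)}}g_n=\mu g_n$, while the Toda adjoint flow requires $-g_{n-1}=-\mu^{-1}g_n$. By contrast, $r_n$ itself does satisfy this test. So your method, carried out, proves the lemma with $g_n=r_n\widetilde{r}_{n+1}$ (up to sign) in that slot. You should flag the misprint rather than claim the stated formula.
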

\begin{lemma}\label{Todadarbo}\cite{Rui2024}
Given the Toda tau function $\tau^{{\rm Toda}}_{n}(\mathbf{t})$,   Toda Lax operators $\mathcal{L}_a$ for $a=1,2$,    Toda eigenfunctions $q_n$ and $f_{n}$,     Toda adjoint eigenfunctions $r_n$ and $g_{n}$, if we denote $\mathcal{L}^{[1]}_a$, ${f}_{n}^{[1]}$ and ${g}_{n}^{[1]}$ in the following Table II:
\begin{center}
\begin{tabular}{lll}
\multicolumn{3}{c}{Table II. Darboux transformations of the Toda hierarchy}\\
\hline \hline
&$T_D= q_{n+1}\cdot \Delta\cdot q_n^{-1}$ &$T_I= r^{-1}_{n-1}\cdot \Delta^{-1}\cdot r_n$   \\
\hline
{\rm Lax operators}&$\mathcal{L}_1^{[1]} =T_D\cdot \mathcal{L}_1 \cdot T_D^{-1}$&$\mathcal{L}_1^{[1]} = T_I\cdot \mathcal{L}_1\cdot  T_I^{-1} $\\
&$\mathcal{L}_2^{[1]} = T_D\cdot  \mathcal{L}_2 \cdot T_D^{-1} $&$\mathcal{L}_2^{[1]} = T_I\cdot \mathcal{L}_2 \cdot T_I^{-1}$\\
\hline
{\rm Eigenfunction}&${f}_{n}^{[1]} = T_D({q}_{n})({f}_{n})$ & $ {f}_{n}^{[1]} = T_I({r}_{n})({f}_{n})$\\
\hline
{\rm Adjoint eigenfunction}&${g}_{n}^{[1]} = (T_D^{-1})^*({q}_{n})({g}_{n})$ & $ {g}_{n}^{[1]} = (T_I^{-1})^*({r}_{n})({g}_{n})$\\
\hline
{\rm Tau function} &$(\tau^{{\rm Toda}}_{n})^{[1]} = q_n \tau^{{\rm Toda}}_{n}$  & $(\tau^{{\rm Toda}}_{n})^{[1]} = r_{n-1} \tau^{{\rm Toda}}_{n}$\\
\hline
\end{tabular}
\end{center}
then $\mathcal{L}^{[1]}_a$,  ${f}^{[1]}_{n}$ and ${g}^{[1]}_{n}$ are new the Toda Lax operators, the Toda wave operators, the Toda eigenfunction and the Toda adjoint eigenfunction respectively, corresponding to the new Toda tau function $(\tau^{{\rm Toda}}_{n})^{[1]}$.
\end{lemma}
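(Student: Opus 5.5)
The plan is to reduce everything to the correspondence between Toda and mToda tau functions in Theorem \ref{Todaeigadjo}(II), using the difference identities of Lemma \ref{DiffFerence} to show that $T_D$ and $T_I$ act on wave functions exactly as the dressing maps to the new tau function.

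\textbf{Tau function for $T_D$.} Theorem \ref{Todaeigadjo}(II) already states that $(\tau^{\rm Toda}_n)^{[1]}=q_n\tau^{\rm Toda}_n$ is a Toda tau function. Moreover $(\tau_{0,n},\tau_{1,n}):=(\tau^{\rm Toda}_n,q_n\tau^{\rm Toda}_n)$ is an mToda tau pair.

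\textbf{Wave functions for $T_D$.} For this pair, (\ref{mToadwave1}) and (\ref{Todawave1}) give
\begin{align*}
\Psi_1(n,\mathbf{t},z)=q_n^{-1}\psi_1(n,\mathbf{t},z),\qquad \Psi_2(n,\mathbf{t},z)=q_n^{-1}\psi_2(n,\mathbf{t},z),
\end{align*}
where $\psi_a$ are the Toda wave functions of $\tau^{\rm Toda}_n$. Apply the first and third identities of Lemma \ref{DiffFerence} and multiply by $q_{n+1}=\tau_{1,n+1}/\tau_{0,n+1}$. The factor $\tau_{0,n+1}/\tau_{1,n+1}$ then cancels, and one reads off
\begin{align*}
T_D(\psi_1)=z\,\psi_1^{[1]},\qquad T_D(\psi_2)=-\psi_2^{[1]}\cdot(\text{explicit factor}),
\end{align*}
where $\psi_a^{[1]}$ are the Toda wave functions (\ref{Todawave1}) built from $\tau_n^{[1]}=q_n\tau^{\rm Toda}_n$. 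The explicit factor is constant in $n$ and $\mathbf t$, up to the power of $z$.

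\textbf{Adjoint wave functions for $T_D$.} The second and fourth identities of Lemma \ref{DiffFerence}, combined with (\ref{mToadwave2}) and (\ref{Todawave2}), give the dual statement. Up to $z$-dependent constants,
\begin{align*}
(T_D^{-1})^*(\psi_a^*)=\psi_a^{[1]*},
\end{align*}
where $T_D^{-1}=q_n\Delta^{-1}q_{n+1}^{-1}$. Equivalently, $\Delta$ applied to $\psi_a^{[1]*}$ with the appropriate weights reproduces $\psi_a^*/q$.

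\textbf{Lax operators and eigenfunctions.} Since $\mathcal{L}_1\psi_1=z\psi_1$, the intertwining $T_D\psi_1=z\psi_1^{[1]}$ gives
\begin{align*}
(T_D\mathcal{L}_1T_D^{-1})\,\psi_1^{[1]}=z\,\psi_1^{[1]}.
\end{align*}
An operator of the form (\ref{TodaLax1}) is determined by its action on the wave function, so $T_D\mathcal{L}_1T_D^{-1}$ coincides with the Lax operator of $\tau_n^{[1]}$. Here $\Delta^{-1}$ is expanded in negative powers of $\Lambda$ for $\mathcal{L}_1$ and in positive powers for $\mathcal{L}_2$. The same argument applies to $\mathcal{L}_2$ via $\psi_2$.

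Differentiating the intertwining relation in $t^{(a)}_k$ and using (\ref{todaawavetau}) for both tau functions yields
\begin{align*}
\partial_{t^{(a)}_k}T_D=\mathcal{B}^{(a)[1]}_kT_D-T_D\mathcal{B}^{(a)}_k .
\end{align*}
This identity immediately shows that $f^{[1]}_n=T_D(f_n)$ satisfies (\ref{Todaeig2}) for the new Lax operators. Taking adjoints, it also shows that $g^{[1]}_n=(T_D^{-1})^*(g_n)$ is a new adjoint eigenfunction.

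\textbf{The case $T_I$.} The argument is identical. By Theorem \ref{Todaeigadjo}(II), $(\tau_{0,n},\tau_{1,n}):=(r_{n-1}\tau^{\rm Toda}_n,\tau^{\rm Toda}_n)$ is an mToda pair. Lemma \ref{DiffFerence} then now gives $T_I(\psi_a)\propto\psi_a^{[1]}$, where $T_I=r_{n-1}^{-1}\Delta^{-1}r_n$. The conjugation, eigenfunction and adjoint eigenfunction statements follow in the same way.

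\textbf{Main obstacle.} I expect the hardest part to be the bookkeeping in the intertwining relations: matching the index shifts ($n$ versus $n\pm1$), the signs, and the powers of $z$ between the Toda and mToda adjoint wave functions. A second delicate point is the time-derivative identity for $T_D$ and $T_I$. It must be derived from the wave-function relations, and the two different expansions of $\Delta^{-1}$ must be kept separate, so that the truncations $(\cdot)_{\geq0}$ and $(\cdot)_{<0}$ are compatible.
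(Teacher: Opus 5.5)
The paper gives no proof of this lemma; it quotes it from \cite{Rui2024}. The usual argument for such Darboux transformations is direct. One checks that $T\mathcal{L}_aT^{-1}$ keeps the Lax shape and that $(\partial T)T^{-1}+T\mathcal{B}T^{-1}$ is the correct truncation of the new Lax power, using only that $q_n$ (resp.\ $r_n$) is an (adjoint) eigenfunction. The tau function is then read off the dressing.

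Your route runs the other way. The new tau function comes first from Theorem \ref{Todaeigadjo}(II), and Lemma \ref{DiffFerence} supplies the intertwinings. These check out:
\begin{itemize}
\item $T_D\psi_1=z\psi_1^{[1]}$ and $T_D\psi_2=-\psi_2^{[1]}$ (your ``explicit factor'' is exactly $1$);
\item $T_I\psi_1=z^{-1}\psi_1^{[1]}$ and $T_I\psi_2=-\psi_2^{[1]}$;
\item $(T_D^{-1})^*\psi_1^*=z^{-1}\psi_1^{[1]*}$ and $(T_D^{-1})^*\psi_2^*=-\psi_2^{[1]*}$.
\end{itemize}
The identities behind the last two are $\Delta^*(q_{n+1}\psi_1^{[1]*})=z\,q_n\psi_1^*$ and $\Delta^*(q_{n+1}\psi_2^{[1]*})=-q_n\psi_2^*$. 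So the weight is $q_n$, not $q_n^{-1}$ as you wrote.

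For the uniqueness step, state the shape explicitly. Since $T_D=\Lambda-q_{n+1}/q_n$, you get $T_D\mathcal{L}_1T_D^{-1}=\Lambda+\sum_{i\ge0}(\cdot)\Lambda^{-i}$. Also $T_D\mathcal{L}_2T_D^{-1}$ starts with $\frac{q_{n+1}q_{n-1}}{q_n^2}e^{\alpha}\Lambda^{-1}$ and has no lower powers. The same holds for $T_I$.

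What your route buys: the Lax equations and the tau-function statement come for free, with no truncation computations. What it costs: everything rests on the spectral representation \eqref{qToda} from \cite{Lui2024}, entering through Theorem \ref{Todaeigadjo}(II). That is legitimate here, since that proof does not use Lemma \ref{Todadarbo}.

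The step that is only formal as written is the eigenfunction part whenever $\Delta^{-1}$ acts on an ordinary function. This happens in $T_I(f_n)=r_{n-1}^{-1}\Delta^{-1}(r_nf_n)$ and in $(T_D^{-1})^*(g_n)=q_{n+1}^{-1}(\Delta^*)^{-1}(q_ng_n)$. There are two problems:
\begin{itemize}
\item For wave functions, $\Delta^{-1}\Delta=1$ is justified by the $z^{\pm n}$ behaviour. For $f_n$, the antidifference is only defined up to an $n$-independent $c(\mathbf{t})$. Then $r_{n-1}^{-1}c(\mathbf{t})$ is an eigenfunction of the new hierarchy only if $c$ is constant. (By Theorem \ref{Todaeigadjo}(I) applied to the pair $(r_{n-1}\tau^{\rm Toda}_n,\tau^{\rm Toda}_n)$, $r_{n-1}^{-1}$ itself is one.)
\item Your identity $\partial T_I=\mathcal{B}^{[1]}T_I-T_I\mathcal{B}$ holds in the $\Lambda^{-1}$-completion for the $t^{(1)}_k$-flows and in the $\Lambda$-completion for the $t^{(2)}_k$-flows. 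So it cannot simply be applied to a single function $f_n$.
\end{itemize}

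The fix is to define $\Delta^{-1}(r_nf_n):=\Omega(f_n,r_n)$, the squared eigenfunction potential built as in the Appendix, and compare. Write $\Delta^{-1}r_n\mathcal{B}^{(1)}_k=P+\Delta^{-1}\cdot\mathcal{B}^{(1)*}_k(r_n)$ with $P$ a difference operator. Then $\partial_{t^{(1)}_k}\Omega=\mathrm{Res}_\Delta(\Delta^{-1}r_n\mathcal{B}^{(1)}_kf_n\Delta^{-1})=P(f_n)$, which is exactly what the formal calculus predicts. The $t^{(2)}_k$-flows and $(T_D^{-1})^*$ are handled the same way with $\Delta^*$. With that made explicit, your proof is complete.
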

After the preparation above, now we can {\bf prove Theorem \ref{mTodaeigentran}}. Here we only prove the $\textbf{Case 1}$, $\textbf{Case 2}$ can be proved analogously.
 Given the mToda tau pair $(\tau_{0,n},\tau_{1,n})$, we know by Theorem \ref{Todaeigadjo} that $\tau_{0,n}$ is a Toda tau function and $q_{n} := \tau_{1,n}/\tau_{0,n}$ is its associated Toda eigenfunction. By Lemma \ref{muratras}, we have  $f_{n}=\widetilde{q}_nq_n$, $f_{n}$ can be another Toda eigenfunction.   Therefore,   it can be found that
\begin{align*}
\tau_{0, n}^{[1]}
=f_{n}\tau_{0,n},\quad
\tau_{1, n}^{[1]}
=q_{n}^{[1]}f_{n}\tau_{0,n}=q_{n}^{[1]}\tau_{0, n}^{[1]},
\end{align*}
where $q_{n}^{[1]}=T_D(f_{n})(q_n)=\frac{q_{n+1}f_n-f_{n+1}q_n}{f_n}$ can be regarded as another Toda eigenfunction associated with $\tau_{0,n}^{[1]}$, according to Lemma \ref{Todadarbo}. From the second  part of Theorem \ref{Todaeigadjo}, $\tau_{i, n}^{[1]}$ $(i=0,1)$   also are new Toda functions, and the  tau pair $(\tau_{0,n}^{[1]}, \tau_{1,n}^{[1]})$ satisfy the mToda bilinear equation \eqref{HirotamToda}.  {\bf This concludes the first part of Theorem \ref{mTodaeigentran}}.

Next,{ \bf for the second part of Theorem  \ref{mTodaeigentran}}, we need the following lemma.
\begin{lemma}\label{Lem:eq}
   The eigenfunctions $\widetilde{q}_{n}$ and the adjoint eigenfunctions $\widetilde{r}_{n}$ of the mToda hierarchy satisfy
    \begin{align*}
        &\widetilde{q}_{n}(\mathbf{t})\Delta\big(\Psi_1^*(n,\mathbf{t},z)\big)
=\Delta\Big(
        \widetilde{q}_{n}(\mathbf{t}+[z^{-1}]_1)
        \Psi_1^*(n,\mathbf{t},z)\Big), \\
        &\widetilde{q}_{n}(\mathbf{t})\Delta\big(\Psi_2^*(n,\mathbf{t},z)\big)
=\Delta\Big(\widetilde{q}_{n-1}(\mathbf{t}+[z]_2)
        \Psi_2^*(n,\mathbf{t},z)\Big), \\
        &\Delta\big(\Psi_1(n,\mathbf{t},z)\big)\widetilde{r}_{n+1}(\mathbf{t})
=\Delta\Big(\widetilde{r}_{n}(\mathbf{t}-[z^{-1}]_1)
        \Psi_1(n,\mathbf{t},z)\Big), \\
        &\Delta\big(\Psi_2(n,\mathbf{t},z)\big)\widetilde{r}_{n+1}(\mathbf{t})
=\Delta\Big(\widetilde{r}_{n+1}(\mathbf{t}-[z]_2)
        \Psi_2(n,\mathbf{t},z)\Big).
    \end{align*}
\end{lemma}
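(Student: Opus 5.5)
Each of the four identities states that a product of an (adjoint) eigenfunction with the $\Delta$-difference of a wave function is a total $\Delta$-difference of an explicitly shifted product. I would prove them through the spectral (bilinear) representations of mToda eigenfunctions, in the same way \eqref{qToda}--\eqref{rToda} were used for Toda.

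\textbf{Step 1: spectral representations.} First I write down the mToda analogues of \eqref{qToda}--\eqref{rToda}. Fix an mToda eigenfunction $\widetilde{q}_n$. I claim that
\[
\widetilde{q}_n(\mathbf{t})=\oint_{C_R}\frac{dz}{2\pi{\rm i}z}\,\Psi_1(n,\mathbf{t},z)\,\Phi_1(n',\mathbf{t}',z)+\oint_{C_r}\frac{dz}{2\pi{\rm i}z}\,\Psi_2(n,\mathbf{t},z)\,\Phi_2(n',\mathbf{t}',z),
\]
with $\Phi_1(n',\mathbf{t}',z)=\widetilde{q}_{n'}(\mathbf{t}'+[z^{-1}]_1)\Psi_1^*(n',\mathbf{t}',z)$ and $\Phi_2(n',\mathbf{t}',z)=\widetilde{q}_{n'-1}(\mathbf{t}'+[z]_2)\Psi_2^*(n',\mathbf{t}',z)$.

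The natural route is the Miura map of Lemma \ref{muratras}. The function $f_n=\widetilde{q}_nq_n$ is a Toda eigenfunction for $\tau_{0,n}$, so it has the representation \eqref{qToda}. Using $q_n=\tau_{1,n}/\tau_{0,n}$ together with \eqref{Todawave1}--\eqref{Todawave2} for $\tau^{\rm Toda}=\tau_{0,n}$, I rewrite the Toda wave functions in terms of the mToda ones \eqref{mToadwave1}--\eqref{mToadwave2}. For example, $\psi_1^{[0]}(n,\mathbf{t},z)=\Psi_1(n,\mathbf{t},z)\,q_n(\mathbf{t})$, and $\psi_1^{*[0]}(n'-1,\mathbf{t}',z)$ combined with $q_{n'}(\mathbf{t}'+[z^{-1}]_1)$ reproduces, via Lemma \ref{DiffFerence}, the difference $\Delta\Psi_1^*$ in the $n'$ variable. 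Dividing by $q_n(\mathbf{t})$ then gives a representation of $\widetilde{q}_n$ whose dual factor is a $\Delta$-difference. This produces exactly the shifted arguments $\mathbf{t}'+[z^{-1}]_1$ with index $n'$, and $\mathbf{t}'+[z]_2$ with index $n'-1$, that appear in the lemma.

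\textbf{Step 2: extract the identity by residues.} In the bilinear identity \eqref{mTodaWavbi}, apply $\Delta_{n'}$ and multiply by the representation. Then compare two expressions for $\widetilde{q}_n(\mathbf{t})\cdot(\text{the adjoint wave function})$: both are integrals against $\Psi_a(n,\mathbf{t},\cdot)$. A standard uniqueness argument shows that the corresponding $z$-densities agree, since an integral identity of this form holding for all $\mathbf{t}$ forces equality of the Laurent coefficients. Concretely, set $\mathbf{t}'=\mathbf{t}$ and take $\mathbf{t}$-generating shifts, i.e. evaluate at $\mathbf{t}-[\lambda^{-1}]_1$ and pick residues. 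This yields the first identity, where $\widetilde{q}_n(\mathbf{t})\Delta\Psi_1^*$ equals the difference of $\widetilde{q}_n(\mathbf{t}+[z^{-1}]_1)\Psi_1^*$. The second identity comes from the $C_r$ part in the same way.

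\textbf{Adjoint identities and the alternative route.} The third and fourth identities follow symmetrically. I use the Miura image $g_n=\widetilde{r}_{n+1}/r_n$, which is a Toda adjoint eigenfunction for $\tau_{1,n}$, together with \eqref{rToda} for $\tau_{1}$. Alternatively, one can argue directly: both sides of each identity, divided by the exponential and power of $z$, are Laurent series in $z$. Comparing the $t$-flows through \eqref{eigenfunction}, \eqref{adjointeigenfunction} and the Lax equations reduces each identity to its leading coefficient, and that coefficient is checked via Lemma \ref{DiffFerence}. In this version the $\Delta$ in $\iota\Delta^{-1}B^*\Delta$ is precisely what makes the identity hold at the level of differences rather than of the functions themselves.

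\textbf{Main obstacle.} The hardest step is Step 1: translating the Toda spectral representation into the mToda one with the correct index shifts ($n'$ versus $n'-1$) and the correct powers of $z$. This needs careful use of Lemma \ref{DiffFerence} to convert products of Toda tau quotients into $\Delta\Psi^*_a$. Once this representation is in hand, the residue extraction in Step 2 is routine.
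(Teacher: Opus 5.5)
Your Step 1 is correct. Multiplying your formula by $\tau_{1,n}(\mathbf t)\tau_{0,n'}(\mathbf t')$ shows it is literally the bilinear equation \eqref{HirotamToda} for the pair $(\tau_{0,n},\widehat\tau_{1,n})$, where $\widehat\tau_{1,n}:=\widetilde q_n\tau_{1,n}$. The densities contain $\Psi_a^*$ itself, not $\Delta\Psi_a^*$; for example, $q_{n'}(\mathbf t'+[z^{-1}]_1)\psi_1^{*[0]}(n'-1,\mathbf t',z)=z\Psi_1^*(n',\mathbf t',z)$ follows directly from the definitions, so Lemma \ref{DiffFerence} is not needed there.

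The gap is Step 2. Spectral densities are not unique. Your representation holds for every base point $(n',\mathbf t')$, so a single $\widetilde q_n$ has infinitely many densities, and ``an integral identity forces equality of the $z$-densities'' is false. Your residue extraction is also never specified: with $\mathbf t'=\mathbf t$, $n'=n$ the representation only returns $\widetilde q_n=\widetilde q_n$. Suppose you meant to use linearity in $\widetilde q$ and check the lemma on the integrands $\widetilde q_n=\Psi_a(n,\mathbf t,\mu)$. That is a two-shift Fay identity for $\tau_0$. For the first identity with $\widetilde q_n=\Psi_1(n,\mathbf t,\mu)$ it reads
\begin{align*}
&(\mu-\lambda)\,\tau_{0,n+1}(\mathbf t)\,\tau_{0,n}(\mathbf t-[\lambda^{-1}]_1-[\mu^{-1}]_1)\\
&\qquad=\mu\,\tau_{0,n}(\mathbf t-[\lambda^{-1}]_1)\,\tau_{0,n+1}(\mathbf t-[\mu^{-1}]_1)-\lambda\,\tau_{0,n}(\mathbf t-[\mu^{-1}]_1)\,\tau_{0,n+1}(\mathbf t-[\lambda^{-1}]_1),
\end{align*}
and the $\Psi_2$-integrand gives a mixed analogue. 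You neither state nor derive these. The flow-based alternative does not close the gap either: \eqref{eigenfunction} gives no usable evolution for $\widetilde q_n(\mathbf t+[z^{-1}]_1)$, so there is no recursion down to a leading coefficient.

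The missing step is actually a one-shift computation. Since Step 1 is \eqref{HirotamToda} for $(\tau_{0,n},\widehat\tau_{1,n})$, specialize to $(n',\mathbf t')=(n-1,\mathbf t-[\lambda^{-1}]_1)$ and to $(n',\mathbf t')=(n,\mathbf t-[\lambda]_2)$. Then take residues exactly as in the proof of Lemma \ref{DiffFerence}; this yields Lemma \ref{DiffFerence} for the new pair.
\begin{itemize}
\item The adjoint wave functions of the new pair are $\widetilde q_n(\mathbf t+[\lambda^{-1}]_1)\Psi_1^*(n,\mathbf t,\lambda)$ and $\widetilde q_{n-1}(\mathbf t+[\lambda]_2)\Psi_2^*(n,\mathbf t,\lambda)$.
\item The right-hand sides of Lemma \ref{DiffFerence} contain $\widehat\tau_{1,n}(\mathbf t)$ unshifted, so they equal $\widetilde q_n(\mathbf t)\Delta\Psi_a^*(n,\mathbf t,\lambda)$.
\end{itemize}
These are exactly the first two identities. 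The last two follow the same way from the pair $(\widetilde r_n\tau_{0,n},\tau_{1,n})$, whose wave functions are $\widetilde r_n(\mathbf t-[\lambda^{-1}]_1)\Psi_1$ and $\widetilde r_{n+1}(\mathbf t-[\lambda]_2)\Psi_2$.

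This is in substance the paper's proof. With $f_n=\widetilde q_nq_n$, Theorem \ref{Todaeigadjo}(II) makes $(\tau_{0,n},f_n\tau_{0,n})$ an mToda pair, and this is the same pair. Lemma \ref{DiffFerence} applied to it gives $f_n\psi_2^{[0]*}(n)=\Delta\big(f_{n-1}(\mathbf t+[z]_2)\psi_2^{[0]*}(n-1)\big)$. The paper then translates back using $\Delta\Psi_2^*(n)=q_n\psi_2^{[0]*}(n)$ and $\Psi_2^*(n)=q_{n-1}(\mathbf t+[z]_2)\psi_2^{[0]*}(n-1)$.
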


\begin{proof}
Here we only prove the second identity, as the others can be proved similarly.
If we denote $q_n=\frac{\tau_{1,n}}{\tau_{0,n}}$, which is the  Toda eigenfunction  associated with $\tau_{0,n}$, we have by Lemma \ref{DiffFerence} and Theorem \ref{Todaeigadjo} that
\begin{align}
q_n(\mathbf{t})\psi^{[0]*}_2(n,\mathbf{t},z)
=\Delta\big(q_{n-1}(\mathbf{t}+[z]_2)
\psi^{[0]*}_2(n-1,\mathbf{t},z)\big).\label{eigenfunctionsep}
\end{align}
Here the Toda adjoint  wave function $\psi^{[0]*}_2=\frac{\tau_{0,n}(\mathbf{t}+[z]_2)}{\tau_{0,n+1}(\mathbf{t})}
e^{-\xi({\mathbf{t}}^{(2)},z^{-1})}z^{-n}$.
On the other hand, the Toda adjoint wave function $\psi^{[0]*}_2$ and the mToda adjoint wave function $\Psi^{*}_2$ satisfy the following relation (for more details, see Corollary 4.12 in \cite{Rui2024}):
\begin{align}
\psi_2^{[0]*}(n,\mathbf{t},z) =q^{-1}_n(\mathbf{t}) \Delta\big(\Psi_2^*(n,\mathbf{t},z)\big).\label{psi2}
\end{align}
Substituting this relation into \eqref{eigenfunctionsep}, we can get \begin{align}
\Psi_2^*(n,\mathbf{t},z)
=q_{n-1}(\mathbf{t}+[z]_2)\psi^{[0]*}_2(n-1,\mathbf{t},z).\label{Psi2}
\end{align}
Letting $f_{n}=\widetilde{q}_n q_n$, then by Lemma \ref{muratras}, $f_n$ is another Toda eigenfunction associated with $\tau_{0,n}$. In fact, according to Theorem \ref{Todaeigadjo}, any Toda eigenfunction satisfies the relation \eqref{eigenfunctionsep}. Therefore, $f_n$ satisfies
\begin{align}
f_n(\mathbf{t})\psi^{[0]*}_2(n,\mathbf{t},z)
=\Delta\big(f_{n-1}(\mathbf{t}+[z]_2)\psi^{[0]*}_2(n-1,\mathbf{t},z)\big).\label{eigenfunctionsep2}
\end{align}
Further, by \eqref{psi2} and \eqref{Psi2}, we can rewrite \eqref{eigenfunctionsep2}   as
\begin{align*}
\widetilde{q}_{n}(\mathbf{t}) \Delta\big(\Psi_2^*(n,\mathbf{t},z)\big) =&\Delta\big(\widetilde{q}_{n-1} (\mathbf{t}+[z]_2) q_{n-1}(\mathbf{t}+[z]_2)\psi^{[0]*}_2(n-1,\mathbf{t},z)\big)\nonumber\\
=& \Delta\Big(\widetilde{q}_{n-1}(\mathbf{t}+[z]_2) \Psi_2^*(n,\mathbf{t},z)\Big),
\end{align*}
which is just the second identity.
\end{proof}

Next, we continue to {\bf prove the second part of Theorem \ref{mTodaeigentran}}.
Let $\Psi_1^{[1]}$ and $\Psi_2^{[1]}$ be the  mToda wave functions associated with the mToda tau pair $(\tau_{0,n}^{[1]},\tau_{1,n}^{[1]})$   defined by \eqref{mToadwave1} and \eqref{mToadwave2}. Then by Lemma \ref{DiffFerence} and Lemma \ref{Lem:eq}, we have
  \begin{align*}
\Psi_1^{*[1]}(n,\mathbf{t}, z)
=&z\widetilde{q}_n(\mathbf{t})^{-1}\cdot \Omega\Big(\Delta\big( \widetilde{q}_n(\mathbf{t})\big),
\Psi_1^*(n+1,\mathbf{t}, z)\Big),\\
\Psi_2^{*[1]}(n,\mathbf{t}, z)
=&-\widetilde{q}_n(\mathbf{t})^{-1}\cdot \Omega\Big(\Delta \big(\widetilde{q}_n(\mathbf{t})\big),\Psi_2^*(n+1,\mathbf{t}, z)\Big),
	\end{align*}
where  $\Omega\Big(\Delta \big(\widetilde{q}_n(\mathbf{t})\big),\Psi_a^*(n+1,\mathbf{t}, z)\Big)$  $a=1,2$,   for more details see Appendix.

Let $S_1^{[1]}(n,\mathbf{t},\Lambda)$ and $S_2^{[1]}(n,\mathbf{t},\Lambda)$ be the mToda wave operators associated with the tau pair $(\tau_{0,n}^{[1]},\tau_{1,n}^{[1]})$ defined by \eqref{mTodaderess1}, such that
\begin{align*}
{\Psi}^{{[1]}*}_1(n,\mathbf{t},z)&=-e^{-\xi(\mathbf{t}^{(1)},z)}\cdot
\iota_{\La}\Delta^{-1}\big((S^{[1]*}_1)^{-1}\big)
(z^{-n}),\\
{\Psi}^{{[1]}*}_2(n,\mathbf{t},z)&=e^{-\xi(\mathbf{t}^{(2)},z^{-1})}\cdot
\iota_{\La^{-1}}\Delta^{-1}\big((S^{[1]*}_2)^{-1}\big)
(z^{-n}).
\end{align*}
Then, we find that
\begin{align*}
{S}_1^{[1]} =T_1 \cdot{S_1}\cdot\Lambda^{-1},\quad {S}_2^{[1]} =-T_1 \cdot{S_2},\quad T_1=\frac{\widetilde{q}_n\widetilde{q}_{n+1}}{\widetilde{q}_n-\widetilde{q}_{n+1}}
\cdot\Delta\cdot\widetilde{q}_n^{-1},
\end{align*}
which are the new mToda wave operators satisfying \eqref{mTodaSx1} and \eqref{mTodaSx2}.
Furthermore, if we denote $ L_1^{[1]}=S^{[1]}_1\cdot \Lambda \cdot (S_1^{[1]})^{-1}$ and $ L_2^{[1]}=S^{[1]}_2\cdot \Lambda^{-1} \cdot (S_2^{[1]})^{-1}$, we have
\begin{align*}
L_1^{[1]}=T_1\cdot  L_1 \cdot  T_1^{-1},\quad L_2^{[1]}=T_1\cdot  L_2 \cdot T_1^{-1},
\end{align*}
which are the new mToda Lax operators and satisfy the Lax equations:
\begin{eqnarray*}
\pa_{t_k^{(a)}}L_b^{[1]}=[B_k^{{(a)[1]}},L^{[1]}_b], \quad B_k^{(1){[1]}}= ( L^{[1]k}_1) _{\Delta,\geq1},\quad
		B_k^{{(2)[1]}}= ( L^{[1]k}_2)_{\Delta^{*},\geq1}, \quad a,b=1,2.
\end{eqnarray*}
 {\bf This concludes the second part of Theorem \ref{mTodaeigentran}}.

Further, given another mToda eigenfunction $\widetilde{f}_n$, setting $\widetilde{f}^{[1]}_n=T_b(\widetilde{f}_n)$, we have
\begin{align*}
\partial_{t_k^{(a)}}\widetilde{f}^{[1]}_n= \Big((\partial_{t_k^{(a)}}T_b)\cdot T_b^{-1} + T_b\cdot  B_k^{(a)}\cdot T_b^{-1}\Big)(\widetilde{f}^{[1]}_n),\quad a,b=1,2.\end{align*}
On the other hand, according to the second part of Theorem 2, $S_1^{[1]}$ and $S_2^{[1]}$ are the new mToda wave operators satisfying   \eqref{mTodaSx1} and \eqref{mTodaSx2}, implying that
\begin{align*}(\partial_{t_k^{(a)}}T_b)\cdot T_b^{-1} + T_b \cdot B_k^{(a)}\cdot T_b^{-1} = B_k^{(a)[1]}.\end{align*}
Therefore, we immediately get $\partial_{t_k^{(a)}}\widetilde{f}^{[1]}_n = B_k^{[1](a)}(\widetilde{f}^{[1]}_n)$, which implies that $\widetilde{f}^{[1]}_n = T_b(\widetilde{f}_n)$ is the   mToda eigenfunction corresponding to $(\tau^{[1]}_{0,n}, \tau^{[1]}_{1,n})$ in $\textbf{Case b}$. Similarly, $\widetilde{g}^{[1]}_n = \big(\Delta^{-1}(T_b^*)^{-1}\Delta\big)(\widetilde{g}_n)$ acts as the new mToda adjoint eigenfunction  corresponding to $(\tau^{[1]}_{0,n}, \tau^{[1]}_{1,n})$ in $\textbf{Case b}$ for $b=1,2$.
{\bf This completes the proof of Theorem \ref{mTodaeigentran}}.

If $T_{1}$ is applied to the mToda tau pair $(\tau_{0, n},\tau_{1,n})$ using the eigenfunction $\widetilde{q}_n$, followed by $T_{2}$ using the adjoint eigenfunction $\widetilde{r}^{[1]}_n$, we have:
\begin{align*}
 (\tau_{0, n} ,\tau_{1,n} )\xrightarrow{T_{1}}(\tau^{[1]}_{0,n} , \tau^{[1]}_{1,n} )\xrightarrow{T_{2}}(\tau^{\{1\}}_{0,n} , \tau^{\{1\}}_{1,n}),
\end{align*}
which implies that the second tau pair is given by
\begin{align*}
\tau_{0,n}^{[1]} =\widetilde{q}_n \tau_{1,n}, \quad \tau_{1,n}^{[1]} =\dfrac{ (\widetilde{q}_{n}-\widetilde{q}_{n+1} )\tau_{1,n} \tau_{1,n+1} } {\tau_{0,n+1} },
\end{align*}
and the final tau pair is
\begin{align*}
\tau_{0,n}^{\{1\}} = \frac{(\widetilde{r}^{[1]}_{n}-\widetilde{r}^{[1]}_{n-1})
\tau^{[1]}_{0,n}\tau^{[1]}_{0,n-1}}{\tau^{[1]}_{1,n-1}}, \quad \tau_{1,n}^{\{1\}} = \widetilde{r}^{[1]}_n \tau^{[1]}_{0,n},
\end{align*}
where $\widetilde{r}^{[1]}_n = \Delta^{-1} T_{1}(\widetilde{q}_n)(\Delta(\widetilde{r}_n)) = \frac{\widetilde{r}_n\widetilde{q}_n -\Omega\big(\widetilde{q}_n,\Delta(\widetilde{r}_n)\big)}{\widetilde{q}_n}$ is the mToda adjoint eigenfunction associated with the tau pair $(\tau^{[1]}_{0,n}, \tau^{[1]}_{1,n})$ by the third part of Theorem \ref{mTodaeigentran}.
Then,   we obtain:
\begin{align*}
\Big(\tau^{\{1\}}_{0,n},\tau^{\{1\}}_{1,n}\Big) =
\left(-\Omega\Big(\widetilde{q}_n, \Delta(\widetilde{r}_n)\Big)\tau_{0,n}, \Omega\Big( \Delta(\widetilde{q}_n), \widetilde{r}_{n+1}\Big)\tau_{1,n}\right).
\end{align*}
From the first part of Theorem \ref{mTodaeigentran}, the tau pair $(\tau^{\{1\}}_{0,n},\tau^{\{1\}}_{1,n})$ is also the mToda tau pair, satisfying the mToda bilinear equation \eqref{HirotamToda}.
Which is just  {\bf Corollary \ref{sepmToda}}.

\section{The   symmetries   of the mToda hierarchy}\label{section3}
In this section, we mainly investigate the symmetries  of the mToda hierarchy. First, we introduce the squared eigenfunction symmetries. Subsequently, by applying vertex operators to the tau functions, we present the mToda ASvM formula in Theorem \ref{ASvM}.
\subsection{Squared eigenfunction symmetries}
In this subsection, we define the squared eigenfunction symmetries for the mToda hierarchy and demonstrate that they commute with the time flows. We introduce the mToda additional flows $\partial_{\tilde{q}_{n},\tilde{r}_{n}}^{*}$ by
\begin{align*}
&\pa^{*}_{\widetilde{q}_{n},\widetilde{r}_{n}}S_1(n,\mathbf{t},\Lambda)
:=\widetilde{q}_{n}(\mathbf{t})
\cdot\iota_{\Lambda^{-1}}\Delta^{-1}
\cdot\widetilde{r}_{n+1}(\mathbf{t})\cdot\Delta \cdot S_1(n,\mathbf{t},\Lambda), \\ &\pa^{*}_{\widetilde{q}_{n},\widetilde{r}_{n}}S_2(n,\mathbf{t},\Lambda)
:=\widetilde{q}_{n}(\mathbf{t})
\cdot\iota_{\Lambda}\Delta^{-1}\cdot\widetilde{r}_{n+1}(\mathbf{t})\cdot\Delta \cdot S_2(n,\mathbf{t},\Lambda),
\end{align*}
where $\widetilde{q}_{n}$  and $\widetilde{r}_{n}$ are the mToda eigenfunction and  adjoint eigenfunction, respectively, satisfying \eqref{eigenfunction} and \eqref{adjointeigenfunction}.
Equivalently,   the additional flows $\pa^{*}_{\widetilde{q}_{n},\widetilde{r}_{n}}$ on   mToda Lax operators $L_1$  and $L_2$ by
\begin{align}
\pa^{*}_{\widetilde{q}_{n},\widetilde{r}_{n}}L_1=[\widetilde{q}_{n}\cdot
\iota_{\Lambda^{-1}}\Delta^{-1}\cdot\widetilde{r}_{n+1}\cdot \Delta,L_1],\quad \pa^{*}_{\widetilde{q}_{n},\widetilde{r}_{n}}L_2=[\widetilde{q}_{n}\cdot
\iota_{\Lambda}\Delta^{-1}\cdot\widetilde{r}_{n+1}\cdot \Delta,L_2].\label{symmetryLax}
\end{align}
\begin{proposition}\label{pro1}
The additional flows $ \pa^{*}_{\widetilde{q}_{n},\widetilde{r}_{n}}$ commute with  the time flows of the  mToda hierarchy, i.e.,
\begin{align*}
\big[\pa^{*}_{\widetilde{q}_{n},\widetilde{r}_{n}},\pa_{{t}^{(c)}_k}\big]=0,\quad c=1,2,
\end{align*}
which means $\pa^{*}_{\widetilde{q}_{n},\widetilde{r}_{n}}$ is just the  symmetries of mToda hierarchy, and is called the {\bf squared eigenfunction symmetries}.
\end{proposition}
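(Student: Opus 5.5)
To prove Proposition \ref{pro1}, we will check the commutativity directly on the Lax operators $L_1$, $L_2$, since these determine the hierarchy. Set
\[
G_1:=\widetilde{q}_n\cdot\iota_{\Lambda^{-1}}\Delta^{-1}\cdot\widetilde{r}_{n+1}\cdot\Delta,\qquad G_2:=\widetilde{q}_n\cdot\iota_{\Lambda}\Delta^{-1}\cdot\widetilde{r}_{n+1}\cdot\Delta,
\]
so that $\pa^*L_b=[G_b,L_b]$ by \eqref{symmetryLax}. By \eqref{2mkplax}, $\pa_{t_k^{(c)}}L_b=[B_k^{(c)},L_b]$. Applying the Jacobi identity,
\[
[\pa^*,\pa_{t_k^{(c)}}]L_b=\big[\pa^*B_k^{(c)}-\pa_{t_k^{(c)}}G_b+[G_b,B_k^{(c)}],\,L_b\big].
\]
It therefore suffices to establish the zero-curvature type identity
\[
\pa_{t_k^{(c)}}G_b-\pa^*B_k^{(c)}=[B_k^{(c)},G_b]
\]
for $b=1,2$ (in fact it suffices for the $G_b$ with the correct expansion direction). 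We also require $\pa^*$ to commute with $\pa_{t_k^{(c)}}$ on $\widetilde{q}_n$ and $\widetilde{r}_n$. We take this as part of the definition: $\pa^*$ acts on eigenfunctions compatibly, as in the standard KP/Toda squared eigenfunction symmetry argument, or alternatively we restrict attention to the Lax operators only.

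\textbf{Step 1: compute $\pa_{t_k^{(c)}}G_b$.} Using \eqref{eigenfunction} and \eqref{adjointeigenfunction}, we have $\pa\widetilde q_n=B(\widetilde q_n)$ and $\pa(\widetilde r_{n+1}\Delta)$. The adjoint evolution $-\iota\Delta^{-1}B^*\Delta$ is designed precisely so that $\Delta(\widetilde r_n)$ (shifted appropriately) satisfies the ordinary adjoint equation $\pa\,\Delta(\widetilde r)=-B^*(\Delta\widetilde r)$, up to the shift conventions encoded by $\widetilde r_{n+1}$. Accordingly, we rewrite
\[
G_b=\widetilde q_n\,\Delta^{-1}\,\rho_n\quad\text{with}\quad \rho_n\Delta=\widetilde r_{n+1}\Delta,
\]
and apply the standard identity for pseudo-difference operators:
\[
(B f\,\Delta^{-1} h)_{-}-\ldots
\]
More precisely, for a difference operator $B$ and functions $f,h$, the key lemma is
\[
\big(B\cdot f\Delta^{-1}h\big)_{\Delta,\le 0}=B(f)\,\Delta^{-1}h,\qquad \big(f\Delta^{-1}h\cdot B\big)_{\Delta,\le0}=f\,\Delta^{-1}\,\big(\Delta^{-1}B^*\Delta\big)(h)\ \text{(suitably shifted)},
\]
together with its analogue for $\Delta^*$ projections and the expansion $\iota_\Lambda$.

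\textbf{Step 2: compute $\pa^*B_k^{(1)}=\big(\pa^*L_1^k\big)_{\Delta,\ge1}=\big([G_1,L_1^k]\big)_{\Delta,\ge1}$.} Write $L_1^k=B_k^{(1)}+(L_1^k)_{\Delta,\le0}$. Then
\[
[B_k^{(1)},G_1]-\pa^*B_k^{(1)}=\big([B_k^{(1)},G_1]\big)_{\Delta,\le0}+\big([(L_1^k)_{\Delta,\le0},G_1]\big)_{\Delta,\ge1}.
\]
Since $G_1$ is of the form $f\Delta^{-1}h\Delta$ with $\Delta$-order $\le0$ and both factors in the second term have $\Delta$-order $\le0$, the second term vanishes (one checks that the trailing $\Delta$ keeps products in the $\le0$ part). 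Step 1's lemma then identifies the first term with
\[
B(\widetilde q)\Delta^{-1}\widetilde r_{n+1}\Delta-\widetilde q\Delta^{-1}\,[\text{evolved }\widetilde r]\Delta=\pa_{t_k^{(1)}}G_1.
\]
For $c=2$ the same computation is carried out with $B_k^{(2)}=(L_2^k)_{\Delta^*,\ge1}$ and $G_2$, using the projections $\Delta^*$ and the $\iota_\Lambda$ expansion. The mixed cases, in which $L_1$ is acted on by the $t^{(2)}$-flows, are handled using the relation between $G_1$ and $G_2$: their difference $\widetilde q_n(\iota_{\Lambda^{-1}}-\iota_\Lambda)\Delta^{-1}\widetilde r_{n+1}\Delta$ is a formal delta-type series that commutes appropriately. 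Alternatively, we pass through the wave-operator form of $\pa^*$ and \eqref{mTodaSx1}--\eqref{mTodaSx2}, checking $[\pa^*,\pa_t]S_b=0$.

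\textbf{Main obstacle.} We expect the main difficulty to be the mixed-projection bookkeeping: the identities for $(B\,f\Delta^{-1}h\Delta)_{\Delta,\le0}$ when $B$ is a $\Delta^*$-projected operator, and the interplay of the two expansions $\iota_{\Lambda^{\pm1}}$ for $c\ne b$. Our first approach will be to convert everything into Toda language via the Miura map $\mathcal T_1=q_n$ of Lemma \ref{muratras}. This map sends $\widetilde q_n\mapsto f_n=\widetilde q_nq_n$ and $\Delta\widetilde r$ to a Toda adjoint eigenfunction $g_n$, and conjugates $G_b$ into the standard Toda squared eigenfunction generator $f_n\Delta^{-1}g_n$, whose commutativity with the Toda flows is known. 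If that transfer proves awkward, we will fall back on the direct pseudo-difference computation outlined above.
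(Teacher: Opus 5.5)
Your direct computation is essentially the paper's proof. Both reduce the claim to the identity $\pa_{t_k^{(1)}}G_1=\big([B_k^{(1)},G_1]\big)_{\Delta,\le 0}$, obtained from the same projection rules; the only difference is that you check it on $L_b$ through the Jacobi identity, while the paper works directly on $S_1$. The mixed cases $b\neq c$, which you single out and the paper covers only by ``similarly'', do close: $G_2-G_1$ is the rank-one operator $\phi\mapsto\widetilde q_n\sum_m(\widetilde r_{m+1}-\widetilde r_m)\phi_m$, so $\pa_{t_k^{(c)}}(G_2-G_1)=[B_k^{(c)},G_2-G_1]$ follows from $\widetilde q_n$ being an eigenfunction and $\Delta(\widetilde r_n)$ being an adjoint eigenfunction.

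When writing it up, fix the following slips:

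\begin{itemize}
\item The Jacobi term is $[B_k^{(c)},G_b]$, not $[G_b,B_k^{(c)}]$.
\item In Step 2 it is $[B_k^{(1)},G_1]+\pa^*B_k^{(1)}$, not $[B_k^{(1)},G_1]-\pa^*B_k^{(1)}$, that equals $\big([B_k^{(1)},G_1]\big)_{\Delta,\le0}$.
\item The lemma must keep the trailing $\Delta$, i.e.\ $(Bf\Delta^{-1}h\Delta)_{\Delta,\le0}=B(f)\Delta^{-1}h\Delta$.
\item No assumption on $\pa^*\widetilde q_n$ or $\pa^*\widetilde r_n$ is needed.
\end{itemize}

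The Miura shortcut is not literally true. With $f_n=\widetilde q_nq_n$ and $g_n=\Delta(\widetilde r_n)/q_n$, one gets $q_nG_1q_n^{-1}=\widetilde q_n\widetilde r_n-f_n\Delta^{-1}g_n$. The multiplicative term cancels only against the gauge term $(\pa^*q_n)q_n^{-1}=-\widetilde q_n\widetilde r_n$, which is read off from the leading coefficient $q_n^{-1}$ of $S_1$.
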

\begin{proof}
Here we only prove the case for $c=1$, as the proof for $c=2$ follows similarly. If we denote   $Y_{ab} = \widetilde{q}_{a,n}\cdot \iota_{\Lambda^{-1}}\Delta^{-1}\cdot \widetilde{r}_{b, n+1}\cdot \Delta$, then   $\pa^{*}_{\widetilde{q}_{n},\widetilde{r}_{n}} S_1 = Y_{ab}S_1$.
According to $\pa_{{t}^{(1)}_k} S_1 = - (L_1^k)_{\Delta,\leq 0} S_1$,  we have:
\begin{align*}
\big[\pa^{*}_{\widetilde{q}_{n},\widetilde{r}_{n}}, \pa_{{t}^{(1)}_k}\big] S_1
&= - [Y_{ab}, L_1^k]_{\Delta,\leq 0} S_1 - (L_1^k)_{\Delta,\leq 0} Y_{ab} S_1 - \big( \pa_{{t}^{(1)}_k} Y_{ab} \big) S_1 + Y_{ab} (L_1^k)_{\Delta,\leq 0} S_1\\
&= \Big( - [Y_{ab}, L_1^k]_{\Delta,\leq 0} - [(L_1^k)_{\Delta,\geq 1}, Y_{ab}]_{\Delta,\leq 0}  + [Y_{ab}, (L_1^k)_{\Delta,\leq 0}] \Big) S_1 \\
&=\Big( - [Y_{ab}, (L_1^k)_{\Delta,\leq 0}]_{\Delta,\leq 0}+ [Y_{ab}, (L_1^k)_{\Delta,\leq 0}]_{\Delta,\leq 0} \Big)S_1=0,
\end{align*}
where for the second identity, we have used the fact that the difference operator $A=\sum_{i\in\mathbb{Z}}a_i\Delta^i$ and function $f(n,\mathbf{t})$ satisfy \cite{Song2023}
\begin{align*}
&A_{\Delta,\leq 0}=(A\Delta^{-1})_{\Delta,<0}\cdot\Delta,\quad  (A_{\Delta,\geq 1}f\Delta^{-1})_{\Delta,<0} = A_{\Delta,\geq 1}(f)\cdot\Delta^{-1},\\
&
(\Delta^{-1}f A_{\Delta,\geq 1})_{\Delta,<0} = \Delta^{-1}\cdot(A_{\Delta,\geq 1})^*(f).
\end{align*}
The case for $S_2$ can be proved similarly.
This completes the proof.
\end{proof}
\noindent{\bf Example:}
According to \eqref{Laxoperadj} and \eqref{symmetryLax}, the actions of the mToda squared eigenfunction symmetry flows $\pa^{*}_{\widetilde{q}_{n},\widetilde{r}_{n}}$ on the first few coefficients of $L_1$ and $L_2$ are  given by
\begin{align*}
&\pa^{*}_{\widetilde{q}_{n},\widetilde{r}_{n}}\beta_1(n,\mathbf{t})= -  \Delta(\widetilde{q}_{n}\widetilde{r}_{n}),\quad
\pa^{*}_{\widetilde{q}_{n},\widetilde{r}_{n}} u_{0,n} = \Delta(v_{n-1} \widetilde{q}_n \Delta(\widetilde{r}_{n-1})),\\
&\pa^{*}_{\widetilde{q}_{n},\widetilde{r}_{n}} u_{1,n}= u_{1,n} \Delta(\widetilde{q}_{n-1}\widetilde{r}_{n-1}) + \widetilde{q}_n\Delta(\widetilde{r}_{n-1})\Delta(u_{0,n-1}) + v_n\widetilde{q}_{n+1}\Delta(\widetilde{r}_{n-1}) - v_{n-2}\widetilde{q}_n\Delta(\widetilde{r}_{n-2}),
\end{align*}
and
\begin{align*}
&\pa^{*}_{\widetilde{q}_{n},\widetilde{r}_{n}}\beta_2(n,\mathbf{t})= \Delta(\widetilde{q}_{n-1}\widetilde{r}_{n}),\quad
\pa^{*}_{\widetilde{q}_{n},\widetilde{r}_{n}} \bar{u}_{0,n}= \Delta(\bar{v}_n \widetilde{q}_{n-1} \Delta(\widetilde{r}_{n})),\\
&\pa^{*}_{\widetilde{q}_{n},\widetilde{r}_{n}} \bar{u}_{1,n}= -\bar{u}_{1,n}\Delta(\widetilde{q}_n \widetilde{r}_{n+1}) + \widetilde{q}_n\Delta (\widetilde{r}_{n+1})\Delta(\bar{u}_{0,n}) + \bar{v}_{n+2}\widetilde{q}_n\Delta(\widetilde{r}_{n+2}) - \bar{v}_n\widetilde{q}_{n-1}\Delta(\widetilde{r}_{n+1}),
\end{align*}
where $v_n = e^{\beta_1(n,\mathbf{t})}$ and $\bar{v}_n = e^{\beta_2(n,\mathbf{t})}$.
\subsection{The   Adler-Shiota-van Moerbeke   formula}
After the preparations above, in this subsection we continue to prove {\bf Theorem \ref{ASvM}}, which establishes the mToda Adler-Shiota-van Moerbeke (ASvM) formula connecting the additional symmetries on the wave functions with the Sato--B\"{a}cklund transformations of the tau functions.

 Firstly, according to definition of the mToda wave functions  and   adjoint wave functions  with respect to the tau pair $(\tau_{0,n} ,\tau_{1,n} )$ in \eqref{mToadwave1} and \eqref{mToadwave2},
the actions of $\mathbb{X}_{ab}(n,\mathbf{t},\lambda,\mu)$ on the mToda tau pair $(\tau_{0,n}(\mathbf{t}),\tau_{1,n}(\mathbf{t}))$ are given by
\begin{align*}
&\frac{\mathbb{X}_{a1}(n,\mathbf{t},\lambda,\mu)\big(\tau_{0,n}(\mathbf{t})\big)}
{\tau_{0,n}(\mathbf{t})}
=(-1)^{\delta_{a,1}}\Psi_a(n,\mathbf{t}+[\lambda^{-1}]_1,\mu^{\eta_a})
\Psi^*_1(n,\mathbf{t},\lambda), \\
&\frac{\mathbb{X}_{a2}(n,\mathbf{t},\lambda,\mu)\big(\tau_{0,n}(\mathbf{t})\big)}
{\tau_{0,n}(\mathbf{t})}
=(-1)^{\delta_{a,2}}\Psi_a(n-1,\mathbf{t}+[\lambda^{-1}]_2,\mu^{\eta_a})
\Psi^*_2(n,\mathbf{t},\lambda^{-1}),\\
&\frac{\mathbb{X}_{1b}(n,\mathbf{t},\lambda,\mu)\big(\tau_{1,n}(\mathbf{t})\big)}
{\tau_{1,n}(\mathbf{t})}
=\frac{\lambda^{\delta_{b,1}}}{\mu}\Psi_1(n,\mathbf{t},\mu)
\Psi^*_b(n,\mathbf{t}-[\mu^{-1}]_1,\lambda^{\eta_b}),\\
&\frac{\mathbb{X}_{2b}(n,\mathbf{t},\lambda,\mu)\big(\tau_{1,n}(\mathbf{t})\big)}
{\tau_{1,n}(\mathbf{t})}
={\lambda^{\delta_{b,1}}}\Psi_2(n,\mathbf{t},\mu^{-1})
\Psi^*_b(n+1,\mathbf{t}-[\mu^{-1}]_2,\lambda^{\eta_b}).
\end{align*}
If we set $\widetilde{q}_n(\mathbf{t})=\Psi_a(n,\mathbf{t},\mu^{\eta_{a}})$, $\widetilde{r}_n(\mathbf{t})=\Psi_b^*(n,\mathbf{t},\lambda^{\eta_b})$ in Lemma \ref{Lem:eq}, and
by above equations, then we have
\begin{align}
&\frac{\mathbb{X}_{ab}(n,\mathbf{t},\lambda,\mu)
\big(\tau_{0,n}(\mathbf{t})\big)}{\tau_{0,n}(\mathbf{t})}
=(-1)^{\delta_{a,b}}\Omega\Big(\Psi_a(n,\mathbf{t},\mu^{\eta_a}),
\Delta(\Psi_b^*(n,\mathbf{t},\lambda^{\eta_b}))\Big),\label{Xab1}\\
&\frac{
\mathbb{X}_{ab}(n,\mathbf{t},\lambda,\mu)
\big(\tau_{1,n}(\mathbf{t})\big) }{\tau_{1,n}(\mathbf{t})}
=\frac{\lambda^{\delta_{b,1}}}{\mu^{\delta_{a,1}}}
\Omega\Big(\Delta(\Psi_a(n,\mathbf{t},\mu^{\eta_a})),
\Psi_b^*(n+1,\mathbf{t},\lambda^{\eta_b})\Big),\quad a,b=1,2.\label{Xab2}
\end{align}
Note that   $\Delta^{-1}(f_n)$ can be up to adding arbitrary   constant. Therefore, we obtain
\begin{align*}
1 + C_{ab}\frac{\mathbb{X}_{ab}(n,\mathbf{t},\lambda,\mu)
\big(\tau_{0,n}(\mathbf{t})\big)}{\tau_{0,n}(\mathbf{t})}
&=   C_{ab}(-1)^{\delta_{a,b}}\Omega\Big(\Psi_a(n,\mathbf{t},\mu^{\eta_a}),
\Delta\big(\Psi_b^*(n,\mathbf{t},\lambda^{\eta_b})\big)\Big), \\
1 + C_{ab}\frac{
\mathbb{X}_{ab}(n,\mathbf{t},\lambda,\mu)
\big(\tau_{1,n}(\mathbf{t})\big) }{\tau_{1,n}(\mathbf{t})}
&=   C_{ab}\frac{\lambda^{\delta_{b,1}}}{\mu^{\delta_{a,1}}}
\Omega\Big(\Delta\big(\Psi_a(n,\mathbf{t},\mu^{\eta_a})\big),
\Psi_b^*(n+1,\mathbf{t},\lambda^{\eta_b})\Big).
\end{align*}
Further, if we define
 \begin{align}
\tilde{\tau}_{i,n}^{ab}=\tau_{i,n}+C_{ab}\mathbb{X}_{ab}
(n,\mathbf{t},\lambda,\mu)\tau_{i,n},\quad i=0,1,\quad a,b=1,2,\label{satob}
 \end{align}
 then by Corollary \ref{sepmToda},  $(\tilde{\tau}_{0,n}^{ab},\tilde{\tau}_{1,n}^{ab})$ can be regarded as a new  mToda tau pair, satisfying the mToda bilinear equation \eqref{HirotamToda}. Here, \eqref{satob} are also called the Sato--B\"{a}cklund transformations of the mToda tau functions. {\bf This concludes the first part of Theorem \ref{ASvM}}.

Next, { \bf for the second part of Theorem  \ref{ASvM}}, we need the following lemma. \begin{lemma}\cite{Wang2025}\label{qrrelation}
The mToda eigenfunction  $\widetilde{q}_{n}$ and the mToda adjoint eigenfunction $\widetilde{r}_{n}$    satisfy the following relations:
\begin{align*}
\Delta\Big(\widetilde{q}_{n}(\mathbf{t})
\widetilde{r}_{n}(\mathbf{t}-[z^{-1}]_1)\Big)&=
\widetilde{q}_{n}(\mathbf{t}-[z^{-1}]_1)\Delta
\Big(\widetilde{r}_{n}(\mathbf{t}-[z^{-1}]_1)\Big)
+\Delta\Big(\widetilde{q}_{n}(\mathbf{t})\Big)
\widetilde{r}_{n+1}(\mathbf{t}),\\
\Delta\Big(\widetilde{q}_{n}(\mathbf{t})
\widetilde{r}_{n+1}(\mathbf{t}-[z]_2)\Big)&=
\widetilde{q}_{n+1}(\mathbf{t}-[z]_2)\Delta
\Big(\widetilde{r}_{n+1}(\mathbf{t}-[z]_2)\Big)
+\Delta\Big(\widetilde{q}_{n}(\mathbf{t})\Big)
\widetilde{r}_{n+1}(\mathbf{t}).
\end{align*}
\end{lemma}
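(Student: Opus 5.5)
The identity is bilinear in $(\widetilde{q},\widetilde{r})$. My plan is to rewrite it as a "Fay-type" bilinear relation, check it first when one of the two entries is an (adjoint) wave function, and then extend it by linearity using a spectral representation of a general mToda adjoint eigenfunction.

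\emph{Step 1: rewrite.} Write $\mathbf{t}'=\mathbf{t}-[z^{-1}]_1$ and expand both sides of the first identity. Collecting terms, it is equivalent to
\begin{align*}
\Delta\big(\widetilde{q}_n(\mathbf{t})\big)\big(\widetilde{r}_{n+1}(\mathbf{t}')-\widetilde{r}_{n+1}(\mathbf{t})\big)
=\big(\widetilde{q}_n(\mathbf{t}')-\widetilde{q}_n(\mathbf{t})\big)\Delta\big(\widetilde{r}_n(\mathbf{t}')\big).
\end{align*}
The second identity reduces in the same way, now with the shift $\mathbf{t}\to\mathbf{t}-[z]_2$ and the index of $\widetilde{q}$ moved by one. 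In this form both sides are bilinear in $(\widetilde{q},\widetilde{r})$, and the $n$-shifts are explicit.

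\emph{Step 2: adjoint wave functions.} Take $\widetilde{r}_n(\mathbf{t})=\Psi_c^*(n,\mathbf{t},\lambda^{\eta_c})$ for $c=1,2$. In this case I would obtain the identity from Lemma~\ref{Lem:eq}, which already has the required shape $\widetilde{q}\,\Delta\Psi^*=\Delta(\widetilde{q}(\text{shifted})\,\Psi^*)$, applied with the roles of the two spectral parameters suitably exchanged. For the remaining shift–exchange terms I would use the explicit tau-quotient formulas \eqref{mToadwave1}--\eqref{mToadwave2} together with Lemma~\ref{DiffFerence}. If this is not enough, I would fall back on the three-term difference Fay identities obtained from \eqref{HirotamToda} by choosing $\mathbf{t}-\mathbf{t}'=[z^{-1}]_1+[\lambda^{-1}]_1$ (respectively $[z^{-1}]_1+[\lambda]_2$) and $n-n'\in\{0,1\}$. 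Evaluating the residues gives quadratic relations in the tau pair, and these should be exactly the rewritten identity of Step 1 after dividing by the appropriate $\tau$'s.

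\emph{Step 3: general adjoint eigenfunctions.} For a general $\widetilde{r}_n$ I would pass through the Miura map of Lemma~\ref{muratras}. There $g_n=\widetilde{r}_{n+1}/r_n$ is a Toda adjoint eigenfunction for $\tau_{1,n}$, so it admits the spectral representation \eqref{rToda}. Transporting this back, $\widetilde{r}_n$ becomes a superposition, with coefficients independent of $(n,\mathbf{t})$ (they depend only on the auxiliary point $(n',\mathbf{t}')$), of $\Psi_1^*(n,\mathbf{t},\lambda)$ and $\Psi_2^*(n,\mathbf{t},\lambda^{-1})$, up to an additive constant from $\Delta^{-1}$. An additive constant in $\widetilde{r}$ cancels from both sides of the rewritten identity: on the left through $\widetilde{r}_{n+1}(\mathbf{t}')-\widetilde{r}_{n+1}(\mathbf{t})$, and on the right through $\Delta$. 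By bilinearity, Step 2 then gives the identity for all $\widetilde{r}$. The same argument works with the roles of $\widetilde{q}$ and $\widetilde{r}$ exchanged, using \eqref{qToda} and $f_n=\widetilde{q}_nq_n$, if that reduction turns out to be more convenient.

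\emph{Main obstacle.} I expect the hardest part to be Step 2, specifically matching the exact $n$-shifts and signs, for instance $\widetilde{q}_{n+1}(\mathbf{t}-[z]_2)$ in the second identity. The contour bookkeeping in \eqref{HirotamToda}, namely which residue at $\lambda$ or at $z$ survives, has to reproduce those shifts precisely, so I would check the lowest-order terms in $z^{-1}$ first to fix the signs.
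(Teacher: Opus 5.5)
Your Step~1 rewriting is correct, and so is your remark that additive constants in $\widetilde r$ drop out. The gap is in Step~2.

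There $\widetilde r_n=\Psi_c^*$, but $\widetilde q_n$ is still an arbitrary eigenfunction. Nothing about $\widetilde q$ is expressed through $\tau_{0,n},\tau_{1,n}$, so the tau-quotient formulas, Lemma~\ref{DiffFerence}, and the Fay identities from \eqref{HirotamToda} cannot act on it. Lemma~\ref{Lem:eq} only converts equal-time products $\widetilde q_n(\mathbf s)\Delta\big(\Psi_c^*(n,\mathbf s,\cdot)\big)$. Concretely, take $c=1$ and $\mathbf t'=\mathbf t-[z^{-1}]_1$. Applying Lemma~\ref{Lem:eq} at $\mathbf s=\mathbf t'$ and at $\mathbf s=\mathbf t$ turns the first identity into the statement that
\begin{align*}
\widetilde q_n(\mathbf t)\big[\Psi_1^*(n,\mathbf t',\lambda)-\Psi_1^*(n,\mathbf t,\lambda)\big]-\widetilde q_n(\mathbf t'+[\lambda^{-1}]_1)\Psi_1^*(n,\mathbf t',\lambda)+\widetilde q_n(\mathbf t+[\lambda^{-1}]_1)\Psi_1^*(n,\mathbf t,\lambda)
\end{align*}
is independent of $n$. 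Writing $\Omega_\lambda:=\Omega\big(\widetilde q_n,\Delta(\Psi_1^*(n,\mathbf t,\lambda))\big)$, this says $\Omega_\lambda(\mathbf t')-\Omega_\lambda(\mathbf t)=\widetilde q_n(\mathbf t)\big(\Psi_1^*(n,\mathbf t',\lambda)-\Psi_1^*(n,\mathbf t,\lambda)\big)$ up to an $n$-independent term. That is a three-point relation for a general eigenfunction. It is just the lemma itself for the pair $(\widetilde q,\Psi_1^*)$, so the argument becomes circular at this point. Hence the ``either/or'' at the end of Step~3 is not enough: you need bilinearity in both slots at once.

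The fix is to decompose $\widetilde q$ as well. Apply \eqref{qToda} to the Toda eigenfunction $f_n=\widetilde q_nq_n$ of $\tau_{0,n}$, and note that $\psi_a^{[0]}/q_n=\Psi_a$. This reduces everything to $\widetilde q_n=\Psi_a(n,\mathbf t,\mu^{\eta_a})$ and $\widetilde r_n=\Psi_b^*(n,\mathbf t,\lambda^{\eta_b})$. Then apply Lemma~\ref{Lem:eq} to both right-hand terms: its first/second identities to $\widetilde q_n\Delta(\widetilde r_n)$, and its third/fourth identities to $\Delta(\widetilde q_n)\widetilde r_{n+1}$. For $a=b=1$ the first identity becomes (up to an $n$-independent term, which in fact vanishes)
\begin{align*}
\Psi_1(n,\mathbf t,\mu)\Psi_1^*(n,\mathbf t',\lambda)=\Psi_1(n,\mathbf t'+[\lambda^{-1}]_1,\mu)\Psi_1^*(n,\mathbf t',\lambda)+\Psi_1(n,\mathbf t,\mu)\Psi_1^*(n,\mathbf t-[\mu^{-1}]_1,\lambda).
\end{align*}
After clearing denominators, this is exactly the three-term identity given by \eqref{HirotamToda} with $n'=n$ and second time argument $\mathbf t-[z^{-1}]_1-[\mu^{-1}]_1+[\lambda^{-1}]_1$. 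The residues are at the points $z$ and $\mu$, and the $C_r$ integral vanishes. So the Fay identities you need involve all three parameters $z,\mu,\lambda$, not the two-parameter specializations you list. The other choices of $(a,b)$, and the second identity, go the same way with the appropriate $[\,\cdot\,]_2$ shifts.

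A minor point on Step~3: with $r_n=\tau_{0,n+1}/\tau_{1,n+1}$, the gauge image of $\widetilde r$ is $g_n\propto r_n\widetilde r_{n+1}$, not $\widetilde r_{n+1}/r_n$ as printed in Table~I. For example, $r_n\Psi_1^*(n+1,\mathbf t,z)=z^{-1}\psi_1^{[1]*}(n,\mathbf t,z)$. Your conclusion that $\widetilde r$ is a constant-coefficient superposition of $\Psi_1^*$ and $\Psi_2^*$ is unaffected.
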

Next, we continue to {\bf prove the second part of Theorem \ref{ASvM}}. Let us set $\widetilde{q}_n(\mathbf{t})=\Psi_a(n,\mathbf{t},\mu^{\eta_{a}})$ and $\widetilde{r}_n(\mathbf{t})=\Psi_b^*(n,\mathbf{t},\lambda^{\eta_b})$ in Lemma \ref{qrrelation}. Denoting ${\pa}_{{ \lambda,\mu}}^{ab}:=\pa^{*}_{\widetilde{q}_{n},\widetilde{r}_{n}}$, we obtain from \eqref{Xab1} and \eqref{Xab2}:
\begin{align}
\frac{{\pa}_{{ \lambda,\mu}}^{ab}(\Psi_1(n,\mathbf{t},z))}{\Psi_1(n,\mathbf{t},z)}
&=(-1)^{\delta_{a,b}}e^{-\xi(\widetilde{\partial}^{(1)},z^{-1})}
\frac{\mathbb{X}_{ab}(n,\mathbf{t},\lambda,\mu)
\tau_{0,n }(\mathbf{t})}{\tau_{0,n }(\mathbf{t})}
+\frac{\mu^{\delta_{a,1}}}{\lambda^{\delta_{b,1}}}\frac{
\mathbb{X}_{ab}(n,\mathbf{t},\lambda,\mu)
\tau_{1,n}(\mathbf{t}) }{\tau_{1,n}(\mathbf{t})},\label{ASvM1}\\
\frac{{\pa}_{{ \lambda,\mu}}^{ab}(\Psi_2(n,\mathbf{t},z))}{\Psi_2(n,\mathbf{t},z)}
&=(-1)^{\delta_{a,b}}e^{-\xi(\widetilde{\partial}^{(2)},z)}
\frac{\mathbb{X}_{ab}(n+1,\mathbf{t},\lambda,\mu)
\tau_{0,n+1}(\mathbf{t})}{\tau_{0,n+1}(\mathbf{t})}
+\frac{\mu^{\delta_{a,1}}}{\lambda^{\delta_{b,1}}}\frac{
\mathbb{X}_{ab}(n,\mathbf{t},\lambda,\mu)
\tau_{1,n}(\mathbf{t}) }{\tau_{1,n}(\mathbf{t})}.
\label{ASvM2}
\end{align}
 {\bf This concludes the second part of Theorem \ref{ASvM}}.

Next,  by   \eqref{mToadwave1}, we have:
\begin{align*}
&\frac{{\pa}_{{ \lambda,\mu}}^{ab}(\Psi_1(n,\mathbf{t},z))}{\Psi_1(n,\mathbf{t},z)}
=e^{-\xi(\widetilde{\partial}^{(1)},z^{-1})}
{\pa}_{{ \lambda,\mu}}^{ab}{\rm log }\tau_{0,n}(\mathbf{t})-
{\pa}_{{ \lambda,\mu}}^{ab}{\rm log }\tau_{1,n}(\mathbf{t}),\\
&\frac{{\pa}_{{ \lambda,\mu}}^{ab}(\Psi_2(n,\mathbf{t},z))}{\Psi_2(n,\mathbf{t},z)}
=e^{-\xi(\widetilde{\partial}^{(2)},z )}
{\pa}_{{ \lambda,\mu}}^{ab}{\rm log }\tau_{0,n+1}(\mathbf{t})-
{\pa}_{{ \lambda,\mu}}^{ab}{\rm log }\tau_{1,n}(\mathbf{t}).
\end{align*}
Further,  by \eqref{ASvM1} and \eqref{ASvM2}, we can get
\begin{align*}
\frac{\pa_{\lambda,\mu}^{ab} \tau_{0,n}}{\tau_{0,n}} = (-1)^{\delta_{a,b}} \frac{\mathbb{X}_{ab}(n,\mathbf{t},\lambda,\mu) \tau_{0,n}}{\tau_{0,n}}, \quad
\frac{\pa_{\lambda,\mu}^{ab} \tau_{1,n}}{\tau_{1,n}} = - \frac{\mu^{\delta_{a,1}}}{\lambda^{\delta_{b,1}}} \frac{\mathbb{X}_{ab}(n,\mathbf{t},\lambda,\mu) \tau_{1,n}}{\tau_{1,n}}.
\end{align*}
Multiplying both sides by their corresponding tau functions, we can get
\begin{align*}
&{\pa}_{\lambda,\mu}^{ab}\tau_{0,n}=(-1)^{\delta_{a,b}}
\mathbb{X}_{ab}
(n,\mathbf{t},\lambda,\mu)\tau_{0,n},\quad
{\pa}_{ \lambda,\mu}^{ab}\tau_{1,n}=-
\frac{\mu^{\delta_{a,1}}}{\lambda^{\delta_{b,1}}}
\mathbb{X}_{ab}
(n,\mathbf{t},\lambda,\mu)\tau_{1,n}.
\end{align*}
{\bf This completes the proof of Theorem \ref{ASvM}}.
\section{The     general addition formulas of the mToda hierarchy}\label{section4}
In this section, we first investigate the multi-step   transformations of the mToda tau functions. After that,   we present the proof of Theorem \ref{mTodahigerfayide} and derive the mToda generalized addition formulas, also known as the generalized Fay identities.
\subsection{Multi-step transformations of the tau functions}
In this subsection, we  first  investigate the multi-step   transformations of the Toda tau functions.  Subsequently, by using the relationship between the Toda and mToda tau functions, we present the multi-step transformations for the mToda hierarchy.
\begin{lemma}\label{TodatauDX}\cite{Wang2025JGP}
Given the Toda eigenfunctions $q_n, q_{i,n}$ ($1\leq i\leq s$) and the Toda adjoint eigenfunctions $r_n, r_{j,n}$ ($1\leq j\leq k$) with respect to the tau function $\tau^{{\rm Toda}}_n$, if we define $q_n^{[s+k]}$, $r_n^{[s+k]}$, and $(\tau_n^{\rm Toda})^{[s+k]}$ as follows:
\begin{itemize}
	\item $s\geq k$,
	\begin{align*} &q_n^{[s+k]}=\frac{{IW}_{k,s+1}(r_{k},\dots,r_{1};q_{1},\dots,q_{s},q)}
{IW_{k,s}(r_{k},\dots,r_{1};q_{1},\dots,q_{s})},\nonumber\\
		&r_{n-1}^{[s+k]}=(-1)^{s}\frac{ IW_{k+1,s}(r, r_{k},\dots,r_{1};q_{1},\dots,q_{s}) }
{ IW_{k,s}(r_{k},\dots,r_{1};q_{1},\dots,q_{s}) },\nonumber\\
		&(\tau_n^{\rm Toda})^{[s+k]}=(-1)^{sk}{IW}_{k,s}(r_{k},\dots,r_{1};q_{1},\dots,q_{s})\tau_n^{\rm Toda},
	\end{align*}
\end{itemize}
\begin{itemize}
	\item $s<k$,
	\begin{align*}
		&q_{n+1}^{[s+k]}=(-1)^{k}\frac{  IW^{*}_{s+1,k}(q,q_{s},\dots,q_{1};r_{1},\dots,r_{k}) }{  IW^{*}_{s,k}( q_{s},\dots,q_{1};r_{1},\dots,r_{k})  },\nonumber\\
		&r^{[s+k]}=\frac{IW_{s,k+1}^{*}(q_{s},\dots,q_{1};r_{1},\dots,r_{k},r)}
{IW_{s,k}^{*}( q_{s},\dots,q_{1};r_{1},\dots,r_{k})},\nonumber\\
		&(\tau_{n+1}^{\rm Toda})^{[s+k]}= (-1)^{sk} IW_{s,k}^{*}
( q_{s},\dots,q_{1};r_{1},\dots,r_{k}) \tau_{n+1}^{\rm Toda},
	\end{align*}
\end{itemize}
then $q^{[s+k]}$ is also   Toda eigenfunction, $r^{[s+k]}$ is  Toda adjoint eigenfunction, and $(\tau_n^{\rm Toda})^{[s+k]}$ is a new Toda tau function, satisfying the Toda bilinear equation \eqref{todabilitau}.
Here    we denote $q_i:= q_{i,n}$, $r_j:= r_{j,n}$, $q :=q_n$, and $r:= r_n$.  $IW_{k,s}$ and $IW^{*}_{s,k}$ are the generalized discrete Wronskian determinants defined as
\begin{align*}
	&IW_{k,s}(g_k,\dots,g_1;f_1,\dots,f_{s}) =
	\begin{vmatrix}
		\Big( \Delta^{-1}(f_j g_{k-i+1}) \Big)_{1 \leq i \leq k, \, 1 \leq j \leq s} \\[10pt]
		\Big( \Delta^{i-1}(f_j) \Big)_{1 \leq i \leq s-k, \, 1 \leq j \leq s}
	\end{vmatrix}, \quad (s \geq k), \\[15pt]
	&IW^{*}_{s,k}(f_s,\dots,f_1;g_1,\dots,g_k) =
	\begin{vmatrix}
		\Big( (\Delta^*)^{-1}(g_j f_{s-i+1}) \Big)_{1 \leq i \leq s, \, 1 \leq j \leq k} \\[10pt]
		\Big( (\Delta^*)^{i-1}(g_j) \Big)_{1 \leq i \leq k-s, \, 1 \leq j \leq k}
	\end{vmatrix}, \quad (s < k).
\end{align*}
\end{lemma}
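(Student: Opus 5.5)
The plan is to build the $(s+k)$-step transformation by iterating the elementary Darboux transformations $T_D$ and $T_I$ of Lemma \ref{Todadarbo}, and to identify the result with the determinants $IW_{k,s}$ and $IW^*_{s,k}$ by induction, using the Desnanot--Jacobi (Sylvester) identity at each step.

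\textbf{Case $s\ge k$.} I would order the steps as follows.
\begin{itemize}
\item First perform $k$ binary steps. Each binary step is $T_D$ with an eigenfunction $q_j$, followed by $T_I$ with the transformed adjoint eigenfunction $r_j$.
\item Then perform $s-k$ further pure $T_D$ steps with the remaining eigenfunctions.
\end{itemize}

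For a single binary pair, Table II applies directly. Applying $T_D=q_{1,n+1}\Delta q_{1,n}^{-1}$ and then $T_I$ gives the following.
\begin{itemize}
\item The new adjoint eigenfunction is $r^{[1]}=(T_D^{-1})^*(r)$. Up to the normalisation of $\Delta^{-1}$ this is $-\Delta^{-1}(q_1 r)/q_1$ shifted in $n$, so the composite tau function becomes $\tau\mapsto \pm\Delta^{-1}(q_1r_1)\,\tau$. This is the $k=s=1$ case of the formula, with $IW_{1,1}=\Delta^{-1}(q_1r_1)$.
\item The transformed eigenfunction is $q^{[2]}=q-q_1\Delta^{-1}(qr_1)/\Delta^{-1}(q_1r_1)$. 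This equals $IW_{1,2}/IW_{1,1}$, because $IW_{1,2}$ is a $2\times2$ determinant whose second row is $(q_1,q)$ and whose first row consists of the $\Delta^{-1}$ entries.
\item Similarly, $r^{[2]}_{n-1}=(-1)\,IW_{2,1}/IW_{1,1}$.
\end{itemize}
A pure $T_D$ step with $q_1$ sends $q\mapsto q_{n+1}-q_{1,n+1}q/q_1$, which is a ratio of $2\times2$ Casoratians in $\Delta$. This matches the rows $\Delta^{i-1}(f_j)$ of $IW_{k,s}$.

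The induction then runs on $s+k$. Assume the formulas hold after $m$ steps. At step $m+1$ the new tau function is $q_{m+1}^{[m]}\tau^{[m]}$ or $r_{m+1,n-1}^{[m]}\tau^{[m]}$, by Table II. By the induction hypothesis $q_{m+1}^{[m]}$ is a ratio $IW_{\cdot,\cdot+1}/IW_{\cdot,\cdot}$, so the determinant in the denominator cancels and the bordered determinant remains. This cancellation is exactly the telescoping needed for $(\tau^{\rm Toda}_n)^{[s+k]}=\pm IW_{k,s}\tau^{\rm Toda}_n$. For the eigenfunction and adjoint eigenfunction formulas I must show that applying $T_D(q^{[m]}_{j})$ or $(T_I^{-1})^*$ to a ratio of bordered determinants produces the next ratio. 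After writing $T_D(f)(g)=f_{n+1}\Delta(g/f)$, this is the Desnanot--Jacobi identity applied to the matrix bordered by both the new row and the new column. Two facts are needed along the way.
\begin{itemize}
\item The first is the squared-eigenfunction-potential update
\begin{align*}
\Delta^{-1}\big(q^{[1]}r^{[1]}\big)=\Delta^{-1}(qr)-\frac{\Delta^{-1}(q r_1)\,\Delta^{-1}(q_1 r)}{\Delta^{-1}(q_1r_1)},
\end{align*}
which I would verify by applying $\Delta$ to both sides and using the Leibniz rule for $\Delta$.
\item The second is that $\Delta$ of a column of $\Delta^{-1}$-entries turns into a product row. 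This lets $\Delta$ be pulled through the determinant, as in the standard Casoratian argument.
\end{itemize}

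\textbf{Case $s<k$.} This is the dual case. I would do the $s$ binary steps first and then $k-s$ pure $T_I$ steps. Conjugating by $*$ exchanges $\Delta\leftrightarrow\Delta^*$ and $q\leftrightarrow r$, which explains the appearance of $IW^*$ and the shift $n\to n+1$ coming from $T_I=r_{n-1}^{-1}\Delta^{-1}r_n$. It then suffices to redo the same Jacobi-identity induction with $(\Delta^*)^{i-1}(g_j)$ rows.

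\textbf{Main obstacle.} I expect the hardest part to be the bookkeeping rather than any conceptual step. This has three components.
\begin{itemize}
\item The integration constants of $\Delta^{-1}$ must be chosen consistently, so that each transformed function is again a genuine eigenfunction and not merely one up to a constant. I would fix all $\Delta^{-1}$ entries once and for all, for example as spectral integrals in the spirit of \eqref{qToda}.
\item The signs $(-1)^{s}$, $(-1)^{k}$ and $(-1)^{sk}$ must be tracked; they come from moving bordered rows and columns to their final positions.
\item The index shifts $n\pm1$ introduced by $T_I$ must be tracked as well.
\end{itemize}
I would settle the signs by checking $(s,k)=(1,0),(0,1),(1,1),(2,1)$ explicitly before running the general induction.
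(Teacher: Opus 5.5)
Your proposal takes the same route as the paper. The paper's proof is only the remark that the determinant formulas follow by induction on the one-step transformations $T_D$, $T_I$ of Lemma~\ref{Todadarbo}, with the tau-function property inherited at each step (the paper cites Theorem~\ref{Todaeigadjo} for this). Your Desnanot--Jacobi step and your SEP update formula supply exactly the details that sketch omits. The update formula is correct, and it yields $r^{[3]}_{n-1}=IW_{2,2}(r,r_1;q_1,q_2)/IW_{1,2}(r_1;q_1,q_2)$ with the right sign.

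There is one concrete slip: $IW_{2,1}$ is not defined. By the given definition, $IW_{k+1,s}$ is a $(k+1)\times s$ array, so the stated formula for $r^{[s+k]}_{n-1}$ only makes sense for $s\geq k+1$; you inherited this defect from the statement. At a balanced stage $s=k$, the adjoint eigenfunction has to be written in the $IW^*$ form. For one binary pair ($T_D$ with $q_1$, then $T_I$), a direct computation gives $r^{[2]}_n=r_n-r_{1,n}\,\Omega(q_1,r)_{n+1}/\Omega(q_1,r_1)_{n+1}=IW^*_{1,2}(q_1;r_1,r)/IW^*_{1,1}(q_1;r_1)$, using $(\Delta^*)^{-1}(fg)=-\Lambda\,\Omega(f,g)$.

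In your $s\geq k$ branch this can be bypassed, because the adjoint function after the next $T_D$ step comes from your SEP update. In your $s<k$ branch, however, you apply $T_I$ with $r^{[2s]}_{s+1}$ immediately after the $s$ binary steps. The induction hypothesis at that stage must therefore be stated in the $IW^*$ form, which you can connect to your earlier steps via $IW^*_{s,s}(q_s,\dots,q_1;r_1,\dots,r_s)=(-1)^s\Lambda\big(IW_{s,s}(r_s,\dots,r_1;q_1,\dots,q_s)\big)$. Alternatively, mirror the binary steps as $T_I$ followed by $T_D$, so that the whole $s<k$ induction stays in the $IW^*$ family.
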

\begin{proof}
This lemma can be proved by mathematical induction using the one-step Darboux transformations given in Lemma \ref{Todadarbo}, from which the multi-step determinant representations can be   obtained. Furthermore, by   Theorem 1, we can find that $(\tau_n^{\mathrm{Toda}})^{[s+k]}$ is a new Toda tau function, satisfying the Toda bilinear equation \eqref{todabilitau}.
\end{proof}
\begin{proposition}\label{mTodatautra}
Given the mToda eigenfunctions $\widetilde{q}_n, \widetilde{q}_{i,n}$ ($1\leq i\leq s$), the mToda adjoint eigenfunctions $\widetilde{r}_n, \widetilde{r}_{j,n}$ ($1\leq j\leq k$), and the mToda tau pair $(\tau_{0, n} ,\tau_{1,n} )$, we have that $\tau_{0,n}^{[s+k]}$ and $\tau_{1,n}^{[s+k]}$ satisfy:
\begin{itemize}
  \item  $s\geq k$
\begin{align*}
    \tau_{0,n}^{[s+k]} &= (-1)^{s k}  {IW}_{k,s}(\Delta(\widetilde{r}_k),\dots,\Delta(\widetilde{r}_1); \widetilde{q}_1,\dots,\widetilde{q}_s)\cdot\left( \prod_{m=0}^{s-k-1} \frac{\tau_{1,n+m}}{\tau_{0,n+m}} \right)\cdot \tau_{0,n},   \\
    \tau_{1,n}^{[s+k]} &= (-1)^{s k}  {IW}_{k, s+1}(\Delta(\widetilde{r}_k),\dots,\Delta(\widetilde{r}_1); \widetilde{q}_1,\dots,\widetilde{q}_s,1)\cdot\left( \prod_{m=0}^{s-k} \frac{\tau_{1,n+m}}{\tau_{0,n+m}} \right) \cdot \tau_{0,n},
\end{align*}
\item   $s<k$
  \begin{align*}
\tau_{0,n+1}^{[s+k]}&= (-1)^{s k}   {IW}_{s,k}^*(\widetilde{q}_s,\dots,\widetilde{q}_1;
\Delta(\widetilde{r}_1),\dots,\Delta(\widetilde{r}_k))\cdot\left( \prod_{m=0}^{k-s-1} \frac{\tau_{0,n-m}}{\tau_{1,n-m}} \right)\cdot \tau_{0,n+1},\\
\tau_{1,n+1}^{[s+k]}&= (-1)^{(s+1) k}  {IW}^{*}_{s+1,k}(1,\widetilde{q}_s,\dots,\widetilde{q}_1;
\Delta(\widetilde{r}_1),\dots,\Delta(\widetilde{r}_k))\cdot\left( \prod_{m=0}^{k-s-2} \frac{\tau_{0,n-m}}{\tau_{1,n-m}} \right)\cdot \tau_{0,n+1}.
  \end{align*}
\end{itemize}
The tau pair $(\tau_{0,n}^{[s+k]}, \tau_{1,n}^{[s+k]})$ is a new mToda tau pair. Furthermore, it satisfies the mToda bilinear equation \eqref{HirotamToda}.
\end{proposition}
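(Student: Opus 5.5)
The plan is to reduce Proposition \ref{mTodatautra} to the Toda multi-step result, Lemma \ref{TodatauDX}, through the Miura dictionary of Lemma \ref{muratras}, and then return to the mToda setting with Theorem \ref{Todaeigadjo}(II). By Theorem \ref{Todaeigadjo}(I), $\tau_{0,n}$ is a Toda tau function and $q_n=\tau_{1,n}/\tau_{0,n}$ is a Toda eigenfunction attached to it. By Lemma \ref{muratras} (the $\mathcal{T}_1$ column), each $f_{i,n}:=\widetilde{q}_{i,n}q_n$ is a Toda eigenfunction and each $g_{j,n}:=\Delta(\widetilde{r}_{j,n})/q_n$ is a Toda adjoint eigenfunction, all with respect to $\tau_{0,n}$. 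The constant function $\widetilde{q}\equiv1$ is an mToda eigenfunction, since $B^{(a)}_k$ contains only $\Delta^{\geq1}$ terms and so annihilates constants. It therefore gives the Toda eigenfunction $q_n$ itself.

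For $s\geq k$ I will apply Lemma \ref{TodatauDX} with $\tau^{\rm Toda}_n=\tau_{0,n}$, eigenfunctions $f_1,\dots,f_s$ and $q$, and adjoint eigenfunctions $g_1,\dots,g_k$. This produces a Toda tau function $(\tau_{0,n})^{[s+k]}=(-1)^{sk}IW_{k,s}(g_k,\dots,g_1;f_1,\dots,f_s)\tau_{0,n}$ and a transformed eigenfunction $q^{[s+k]}_n$ attached to it. Theorem \ref{Todaeigadjo}(II) then says that the pair $\bigl((\tau_{0,n})^{[s+k]},\,q^{[s+k]}_n(\tau_{0,n})^{[s+k]}\bigr)$ is an mToda tau pair. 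The second component equals $(-1)^{sk}IW_{k,s+1}(g_k,\dots,g_1;f_1,\dots,f_s,q)\tau_{0,n}$.

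What remains is the determinant identity that turns these Toda Wronskians into the mToda ones. In every $\Delta^{-1}$ entry the factors of $q$ cancel, because $f_jg_{k-i+1}=\widetilde{q}_j\Delta(\widetilde{r}_{k-i+1})$. In the lower block the rows are $\Delta^{i-1}(\widetilde{q}_jq_n)$ for $1\le i\le s-k$. I will row-reduce these using the discrete Leibniz rule $\Delta^{m}(\widetilde{q}q)=\sum_{p}\binom{m}{p}\Delta^{p}(\widetilde{q})_{n+\cdots}\Delta^{m-p}(q)$, or more conveniently write $\Delta^{m}(F)_n=\sum_{p}(-1)^{m-p}\binom{m}{p}F_{n+p}$. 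The aim is to show that the block equals $\prod_{m=0}^{s-k-1}q_{n+m}$ times the block with rows $\Delta^{i-1}(\widetilde{q}_j)$. My first attempt will be to replace the rows by shift rows $F_{n+m}=\widetilde{q}_{j,n+m}q_{n+m}$, which is a unimodular change of basis. Each row then carries the constant factor $q_{n+m}$, which can be pulled out, and changing back from shifts to differences recovers the $\Delta^{i-1}(\widetilde{q}_j)$ rows. This yields the factor $\prod_{m=0}^{s-k-1}\tau_{1,n+m}/\tau_{0,n+m}$. The same argument applied to $s+1$ columns, with the extra column coming from $\widetilde{q}=1$, gives the product up to $m=s-k$.

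For $s<k$ I will argue in the same way, using the $IW^{*}$ branch of Lemma \ref{TodatauDX}, which is written at $n+1$. Here the lower rows are $(\Delta^{*})^{i-1}(g_j)=(\Delta^{*})^{i-1}\bigl(\Delta(\widetilde{r}_j)/q\bigr)$, and under backward shifts the factors $1/q_{n-m}$ come out, producing $\prod\tau_{0,n-m}/\tau_{1,n-m}$. The $\tau_{1}$ component comes from adding the eigenfunction $q$, which corresponds to the mToda eigenfunction $1$, as the first argument of $IW^{*}_{s+1,k}$. This raises the size of the upper block by one and shortens the lower block to $k-s-1$ rows, which matches the stated product limit $k-s-2$ and the sign $(-1)^{(s+1)k}$.

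I expect the main obstacle to be this row-reduction step, together with the index and sign bookkeeping that goes with it. In particular, the $\Delta^{-1}$ entries carry integration constants. Consistency has to be kept with the normalisations of $\Omega$ used earlier, for instance through the relation \eqref{Xab1}, and I will check the signs against the one-step case $s=1,k=0$, where the formula should reproduce Case 1 of Theorem \ref{mTodaeigentran}.
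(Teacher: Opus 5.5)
Your proposal is correct and follows essentially the same route as the paper: take $\tau_{0,n}$ as the Toda tau function with eigenfunction $q_n=\tau_{1,n}/\tau_{0,n}$ (which corresponds to the mToda eigenfunction $1$). Transfer $\widetilde q_i$ and $\Delta(\widetilde r_j)$ through the $\mathcal{T}_1$ Miura map of Lemma \ref{muratras}, apply Lemma \ref{TodatauDX}, strip the factors $q_{n+m}$ (respectively $q_{n-m}^{-1}$) from the Casoratian rows, and conclude with Theorem \ref{Todaeigadjo}(II). Your shift-row change of basis is a cleaner version of the paper's Leibniz-rule row reduction, and your $s=1$, $k=0$ check does reproduce Case 1 of Theorem \ref{mTodaeigentran}.
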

\begin{proof}
Here we only prove the case for $s\geq k$, as the proof for $s<k$ follows similarly. First, we know from Theorem \ref{Todaeigadjo} that $\tau_{0,n}$ and $\tau_{1,n}$ are two Toda tau functions. If we denote $\tau_n^{\rm Toda}:=\tau_{0,n}$, $\tau_{1,n}:=q_{n}\tau_n^{\rm Toda}$, ${q_{i,n}}:=
\widetilde{q}_{i,n}{q_{n}}$ and
$r_{i,n}:=\frac{\Delta(\widetilde{r}_{i,n})}{q_{n}},$
  it follows from Lemma \ref{muratras}     that $q_n$ and ${q_{i,n}}$ are the Toda eigenfunctions corresponding to $\tau_n^{\rm Toda}$, $r_{i,n}$  are the Toda adjoint eigenfunctions also corresponding to $\tau_n^{\rm Toda}$. Thus if $\tau_{0,n}^{[s+k]}$ is given by
 \begin{align*}
    \tau_{0,n}^{[s+k]} = (-1)^{ks}  {IW}_{k,s}(\Delta(\widetilde{r}_k),\dots,\Delta(\widetilde{r}_1); \widetilde{q}_1,\dots,\widetilde{q}_s)\cdot\left( \prod_{m=0}^{s-k-1} \frac{\tau_{1,n+m}}{\tau_{0,n+m}} \right)\cdot \tau_{0,n}.
    \end{align*}
Therefore, for the first $k$ rows of $IW_{k,s}$ in above equation, we have:
\begin{align}
    \Omega(\widetilde{q}_i, \Delta(\widetilde{r}_j))=\Omega\left( \frac{ {q}_i}{q_n} ,  {q_n}  {r}_j  \right) =\Omega(q_i, r_j),   \label{eq:integral_transform}
\end{align}
where $i=1,2,\cdots,s$ and  $j=1,2,\cdots,k$.
For the remaining $s-k$ rows of $IW_{k,s}$, the entries are $\Delta^m(\frac{  {q}_i}{q_n})$. Using the discrete Leibniz rule $\Delta(A_n B_n) = A_{n+1}\Delta (B_n) +\Delta( A_n)B_n$ and performing consecutive elementary row operations, we have:
\begin{align}
    &\begin{vmatrix}
 		 \frac{ {q}_1}{q_n}& \frac{ {q}_2}{q_n}&\cdots& \frac{ {q}_s}{q_n}\\
 		\Delta( \frac{ {q}_1}{q_n})  &\Delta( \frac{ {q}_2}{q_n})&\cdots&\Delta( \frac{ {q}_s}{q_n})\\
 		\vdots & \vdots& \ddots & \vdots\\
 		\Delta^{s-k-1}( \frac{ {q}_1}{q_n})&\Delta^{s-k-1}( \frac{ {q}_2}{q_n}) &\cdots& \Delta^{s-k-1}( \frac{ {q}_s}{q_n})
 	\end{vmatrix}\nonumber \\
=&\left(\frac{1}{ \prod_{m=0}^{s-k-1} q_{n+m}} \right)\cdot \begin{vmatrix} {q}_1& {q}_2&\cdots& {q}_s\\
 		\Delta( {q}_1)  &\Delta( {q}_2)&\cdots&\Delta( {q}_s)\\
 		\vdots & \vdots& \ddots & \vdots\\
 		\Delta^{s-k-1}( {q}_1)&\Delta^{s-k-1}( {q}_2) &\cdots& \Delta^{s-k-1}( {q}_s)
 	\end{vmatrix}. \label{eq:casoratian_gauge}
\end{align}
Further by \eqref{eq:integral_transform}, \eqref{eq:casoratian_gauge} and $q_n = \tau_{1,n}/\tau_n^{\rm Toda}$, we obtain
\begin{align*}
({\tau_n^{\rm Toda}})^{[s+k]}=\tau_{0,n}^{[s+k]}=(-1)^{sk}{IW}_{k,s}(r_k,\dots,r_1;q_1,\dots,q_s)\tau_n^{\rm Toda}.
\end{align*}
Similarly, we have
\begin{align*}
\tau_{1,n}^{[s+k]}=(-1)^{sk}{IW}_{k,s+1}(r_k,\dots,r_1;q_1,\dots,q_s,q)\tau_n^{\rm Toda}.
\end{align*}
Therefore,
\begin{align*}
q_{n}^{[s+k]}=\left(\frac{\tau_{1,n}}{\tau_n^{\rm Toda}}\right)^{[s+k]}= \frac{{IW}_{k,s+1}(r_{k},\dots,r_{1};q_{1},\dots,q_{s},q)}
{IW_{k,s}(r_{k},\dots,r_{1};q_{1},\dots,q_{s})},
\end{align*}
Thus, we know by Lemma \ref{TodatauDX} that $q_{n}^{[s+k]}$ is  the  Toda eigenfunction corresponding to the Toda tau function $(\tau_n^{\rm Toda})^{[s+k]}$. Finally, by Theorem \ref{Todaeigadjo}, it can be found that the   tau pair $(\tau_{0,n}^{[s+k]}, \tau_{1,n}^{[s+k]})$ is a new mToda tau pair, satisfying the mToda bilinear equation \eqref{HirotamToda}.
\end{proof}

\subsection{The generalized addition formulas  }
Based on the multi-step transformations discussed above, this subsection presents the proof of Theorem   \ref{mTodahigerfayide} and the derivation of the mToda generalized  Fay identities.

First,  the action of $X_a(\mu)$ and $X^*_a(\lambda)$ on mToda tau pair $(\tau_{0,n},\tau_{1,n})$ given by
\begin{align*}
&X_a(\mu)(\tau_{0,n})
=(-1)^{\delta_{a,2}}\Psi_a(\mu^{\eta_a})
\tau_{1,n},\\
&X_a(\mu)(\tau_{1,n})
=\frac{\tau_{1,n}\tau_{1,n+1}
\Delta\Big(\Psi_a(\mu^{\eta_a})\Big)}
{\mu^{{\delta_{a,1}}}\tau_{0,n+1}},\\
&X_a^*(\lambda)(\tau_{0,n})
 =\frac{\tau_{0,n}\tau_{0,n-1}
\Delta\big(\Psi_a^*(n-1,\lambda^{\eta_a})\big)}
{\lambda^{{\delta_{a,1}}}\tau_{1,n-1}},\\
&X_a^*(\lambda)(\tau_{1,n})
 =(-1)^{\delta_{a,1}}\Psi_a^*(\lambda^{\eta_a})
 \tau_{0,n},\quad a=1,2.
\end{align*}
Then by the   ${\bf{ Case\ {  b }}}$  $(b=1, 2)$ of Theorem \ref{mTodaeigentran}, we can get
\begin{itemize}
\item {\bf {Case {\rm 1 }}}\begin{align*}
(\tau^{[1]}_{0,n},\tau^{[1]}_{1,n})
=\Big((-1)^{\delta_{a,2}}X_a(\mu)(\tau_{0,n}),
-\mu^{{\delta_{a,1}}} X_a(\mu)(\tau_{1,n})\Big),\quad a=1,2,
\end{align*}
\item {\bf{Case {\rm 2}}}\begin{align*}
(\tau^{[1]}_{0,n},\tau^{[1]}_{1,n})
=\Big(\lambda^{{\delta_{a,1}}} X_a^*(\lambda)(\tau_{0,n}),
(-1)^{{\delta_{a,1}}}X_a^*(\lambda)(\tau_{1,n})\Big),\quad a=1,2.
\end{align*}
\end{itemize}
Therefore,  for the mToda tau pair $(\tau_{0,n},\tau_{1,n})$, we have
\begin{align}
    \tau_{0,n}^{[s+k]} &= (-1)^{s-l}
    \left( \prod_{\gamma=i}^1 \lambda_\gamma X_1^*(\lambda_\gamma) \prod_{\delta=k-i}^{1} X_2^*(\kappa_\delta) \prod_{\beta=s-l}^{1} X_2(\nu_\beta) \prod_{\alpha=l}^1 X_1(\mu_{\alpha}) \right) (\tau_{0,n}), \label{eq:tau0_vertex} \\
    \tau_{1,n}^{[s+k]} &= (-1)^{s+i}
    \left( \prod_{\gamma=i}^1 X_1^*(\lambda_\gamma) \prod_{\delta=k-i}^{1} X_2^*(\kappa_\delta) \prod_{\beta=s-l}^{1} X_2(\nu_\beta) \prod_{\alpha=l}^1 \mu_{\alpha} X_1(\mu_{\alpha}) \right) (\tau_{1,n}). \label{eq:tau1_vertex}
\end{align}
On the other hand, according to the definitions of the mToda wave functions and adjoint wave functions with respect to the tau pair $(\tau_{0,n} ,\tau_{1,n} )$ in \eqref{mToadwave1} and \eqref{mToadwave2}, let us set the following functions in Proposition \ref{mTodatautra}:
\begin{align*}
    &
    \widetilde{q}_{\alpha,n} = \Psi_1(\mu_\alpha),  \quad
    \widetilde{q}_{l+\beta,n} = \Psi_2( \nu^{-1}_\beta),  \quad 1\leq\alpha \leq l, \quad  1\leq \beta \leq s-l,\\
    &
    \widetilde{r}_{\delta,n} = \Psi_2^*(\kappa^{-1}_\delta ),  \quad
    \widetilde{r}_{k-i+\gamma,n} = \Psi_1^*( \lambda_\gamma),\quad 1\leq\gamma \leq i, \quad 1\leq  \delta \leq k-i,
\end{align*}
where $0\leq i\leq k$, $0\leq l\leq s$.
By   Proposition \ref{mTodatautra} and Theorem \ref{mTodaeigentran},  we can get:
\begin{itemize}
  \item $s\geq k$
\begin{align}
\tau_{\vartheta,n}^{[s+k]}= (-1)^{si+\varepsilon} \frac{\prod_{\gamma=1}^i \lambda_\gamma^{s-k+i-\gamma+1}}{\prod_{\alpha=1}^l \mu_\alpha^{\alpha-1}}
    \cdot \det \!\big( \mathbf{M}^{(\vartheta)} \big) \cdot
    \left( \prod_{m=0}^{s-k-1+\vartheta} \frac{\tau_{1,n+m}}{\tau_{0,n+m}} \right) \cdot \tau_{0,n}. \label{eq:unified_compact1}
\end{align}
  \item  $s < k$
\begin{align}
    \tau_{\vartheta,n+1}^{[s+k]}= {(-1)^{si+\varepsilon+s+(k+1)\vartheta}} \frac{\prod_{\gamma=1}^i \lambda_\gamma^{s-k+i-\gamma+1}}{\prod_{\alpha=1}^l \mu_\alpha^{\alpha-1}}
    \cdot \det \!\big( \widetilde{\mathbf{M}}^{(\vartheta)} \big) \cdot
    \left( \prod_{m=0}^{k-s-1-\vartheta} \frac{\tau_{0,n-m}}{\tau_{1,n-m}} \right) \cdot\tau_{0,n+1}.\label{eq:unified_compact_slk2}
\end{align}
\end{itemize}
Further we  discuss the generalized addition formulas (Fay identities) for the  mToda hierarchy by  the   equivalence between the multi-step Darboux transformations  and the   algebraic vertex operator actions on the mToda tau pair  $(\tau_{0,n},\tau_{1,n})$. By \eqref{eq:tau0_vertex}--\eqref{eq:unified_compact_slk2},  we can get:
\begin{itemize}
  \item $s\geq k$
\begin{align*}
    &\left(\prod_{\gamma=i}^1 X_1^*(\lambda_\gamma) \prod_{\delta=k-i}^{1} X_2^*(\kappa_\delta) \prod_{\beta=s-l}^{1}X_2(\nu_\beta) \prod_{\alpha=l}^1 X_1(\mu_{\alpha})\right)(\tau_{\vartheta,n}) \nonumber \\
    &= (-1)^{s(i-1)+\varepsilon+(1-\vartheta)l + \vartheta i} \frac{\prod_{\gamma=1}^i \lambda_\gamma^{s-k+i-\gamma+\vartheta}}{\prod_{\alpha=1}^l \mu_\alpha^{\alpha-1+\vartheta}}
    \cdot \det \!\big( \mathbf{M}^{(\vartheta)} \big) \cdot
    \left( \prod_{m=0}^{s-k-1+\vartheta} \frac{\tau_{1,n+m}}{\tau_{0,n+m}} \right) \cdot \tau_{0,n},
\end{align*}
  \item   $s < k$
\begin{align*}
    &\left(\prod_{\gamma=i}^1 X_1^*(\lambda_\gamma) \prod_{\delta=k-i}^{1} X_2^*(\kappa_\delta) \prod_{\beta=s-l}^{1}X_2(\nu_\beta) \prod_{\alpha=l}^1 X_1(\mu_{\alpha})\right)(\tau_{\vartheta,n}) \nonumber \\
    &= (-1)^{si+\varepsilon+(1-\vartheta)l -(k+1-i) \vartheta} \frac{\prod_{\gamma=1}^i \lambda_\gamma^{s-k+i-\gamma+\vartheta}}{\prod_{\alpha=1}^l \mu_\alpha^{\alpha-1+\vartheta}}
    \cdot \det \!\big( \widetilde{\mathbf{M}}^{(\vartheta)} \big)_{n-1} \cdot
    \left( \prod_{m=0}^{k-s-1-\vartheta} \frac{\tau_{0,n-m-1}}{\tau_{1,n-m-1}} \right) \cdot\tau_{0,n}.
\end{align*}
\end{itemize}
{\bf This completes the proof of Theorem \ref{mTodahigerfayide}.}

\begin{proposition}\label{mTodaFa}
The mToda tau pair  $(\tau_{0,n},\tau_{1,n})$   satisfies the following relations:
\begin{itemize}
  \item $s\geq k$:
\begin{align}
    \frac{\tau_{\vartheta,n+\epsilon} (\mathbf{t}+ \delta \mathbf{t})}{\tau_{0,n}}
    =& \frac{ (-1)^{ i(s-1)+\varepsilon + \vartheta(i-l)} } {\mathcal{A}_{\mu,\nu,\kappa,\lambda} }
    \big( \prod_{\gamma=1}^i \lambda_\gamma\big)^{\epsilon- 1 + \vartheta}
    \big( \prod_{\beta=1}^{s-l} \nu_\beta \big)^{\epsilon- 1}
    \big( \prod_{\alpha=1}^l \mu_\alpha \big)^{-\epsilon-\vartheta}
    \big( \prod_{\delta=1}^{k-i} \kappa_\delta \big)^{-\epsilon} \nonumber \\[6pt]
    &\cdot \det \!\big( M^{(\vartheta)} \big), \label{tau_unified_forward}
\end{align}
  \item $s<k$:
\begin{align}
    \frac{\tau_{\vartheta,n+\epsilon} (\mathbf{t}+ \delta \mathbf{t})}{\tau_{0,n}}
    =& \frac{ (-1)^{s(i-1)+\varepsilon+i - \vartheta(k-i+l+1)} } {\mathcal{A}_{\mu,\nu,\kappa,\lambda} }
    \big( \prod_{\gamma=1}^i \lambda_\gamma \big)^{\epsilon- 1 + \vartheta}
    \big( \prod_{\beta=1}^{s-l} \nu_\beta \big)^{\epsilon- 1}
    \big( \prod_{\alpha=1}^l \mu_\alpha \big)^{-\epsilon-\vartheta}
    \big( \prod_{\delta=1}^{k-i} \kappa_\delta \big)^{-\epsilon} \nonumber \\[6pt]
    &\cdot \det \!\big(\widetilde{M}^{(\vartheta)} \big)_{n-1}, \label{tau_unified_dual}
\end{align}
\end{itemize}
 where     $\epsilon = (s-l) - (k-i)$, $\delta \mathbf{t} = \sum_{\gamma=1}^i [\lambda_\gamma^{-1}]_1 - \sum_{\alpha=1}^l [\mu_\alpha^{-1}]_1 + \sum_{\delta=1}^{k-i} [\kappa_\delta^{-1}]_2 - \sum_{\beta=1}^{s-l} [\nu_\beta^{-1}]_2$, and $\mathcal{A}_{\mu,\nu,\kappa,\lambda}$ is the  Cauchy-Vandermonde determinant:
\begin{equation*}
\mathcal{A}_{\mu,\nu,\kappa,\lambda} = \frac{\prod_{1 \le \alpha' < \alpha \le l}(\mu_\alpha - \mu_{\alpha'}) \prod_{1 \le \gamma' < \gamma \le i}(\lambda_\gamma - \lambda_{\gamma'}){\prod_{1 \le \beta' < \beta \le s-l}(\nu_\beta  - \nu_{\beta'} ) \prod_{1 \le \delta '< \delta \le k-i}(\kappa_\delta  - \kappa_{\delta'} )}}{\prod_{\alpha=1}^l \prod_{\gamma=1}^i ( \lambda_\gamma-\mu_\alpha) \prod_{\beta=1}^{s-l} \prod_{\delta=1}^{k-i} (  \kappa_\delta-\nu_\beta)}.
\end{equation*}
Here \begin{align*}
    {M}^{(0)} := \begin{pmatrix}   {A} \\  { B} \end{pmatrix}_{s\times s},\quad
    {M}^{(1)} := \begin{pmatrix}  {A} &{D}  \\  {C} & {E}\end{pmatrix}_{(s+1)\times(s+1)},\quad
    \widetilde{{M}}^{(0)} := \begin{pmatrix} \widetilde{A} \\[4pt] \widetilde{ B}\end{pmatrix}_{k\times k},\quad
    \widetilde{{M}}^{(1)} := \begin{pmatrix} \widetilde{D}  \\[4pt] \widetilde{A}  \\[4pt] \widetilde{C}  \end{pmatrix}_{k\times k},
\end{align*} the matrix  sub-blocks are given by:
\begin{small}\begin{align*}
    &A:= \begin{pmatrix}  \Big(\frac{\lambda_{i-\gamma+1}}{\lambda_{i-\gamma+1}- \mu_\alpha} \frac{\tau_{0,n}(\mathbf{t} +[\lambda_{i-\gamma+1}^{-1}]_1-[\mu_\alpha^{-1}]_1)}{\tau_{0,n}(\mathbf{t})}\Big)_{\substack{1 \le \gamma \le i \\ 1 \le \alpha \le l}}&  \Big(\frac{\tau_{0,n+1}(\mathbf{t} +[\lambda_{i-\gamma+1}^{-1}]_1-[\nu_\beta^{-1}]_2)}{\tau_{0,n}(\mathbf{t})} \Big)_{\substack{1 \le \gamma \le i \\ 1 \le \beta \le s-l}} \\[8pt]
    \Big(  \frac{\tau_{0,n-1}(\mathbf{t} +[ \kappa_{k-i-\delta+1}^{-1}]_2-[\mu_\alpha^{-1}]_1)}{\mu_\alpha \kappa_{k-i-\delta+1} \tau_{0,n}(\mathbf{t})}\Big)_{\substack{1 \le \delta \le k-i \\ 1 \le \alpha \le l}} & \Big(\frac{\nu_\beta}{ \kappa_{k-i-\delta+1}-\nu_\beta } \frac{\tau_{0,n}(\mathbf{t} +[ \kappa_{k-i-\delta+1}^{-1}]_2-[\nu_\beta^{-1}]_2)}{\tau_{0,n}(\mathbf{t})}\Big)_{\substack{1 \le \delta \le k-i \\ 1 \le \beta \le s-l}} \end{pmatrix}_{k\times s}, \\[8pt]
    &B:= \begin{pmatrix}  \Big(  \mu_\alpha  ^m  \frac{\tau_{0,n+m}(\mathbf{t}-[\mu_\alpha^{-1}]_1)}{\tau_{0,n+m}(\mathbf{t})} \Big) _{\substack{0 \le m \le s-k-1 \\ 1 \le \alpha \le l}} &\Big(  \nu_\beta^{-m}  \frac{\tau_{0,n+1+m}(\mathbf{t}-[\nu_\beta^{-1}]_2)}{\tau_{0,n+m}(\mathbf{t})} \Big) _{\substack{0 \le m \le s-k-1 \\ 1 \le \beta \le s-l}} \end{pmatrix}_{(s-k)\times s}, \\[8pt]
    &C:= \begin{pmatrix} \Big(  \mu_\alpha^m   \frac{\tau_{0,n+m}(\mathbf{t}-[\mu_\alpha^{-1}]_1)}{\tau_{0,n+m}(\mathbf{t})} \Big) _{\substack{0 \le m \le s-k \\ 1 \le \alpha \le l}}&
    \Big(  \nu_\beta^{-m}   \frac{\tau_{0,n+1+m}(\mathbf{t}-[\nu_\beta^{-1}]_2)}{\tau_{0,n+m}(\mathbf{t})} \Big) _{\substack{0 \le m \le s-k \\ 1 \le \beta \le s-l}}\end{pmatrix}_{(s-k+1)\times s},\\[8pt]
    & D:= \begin{pmatrix} \Big(\frac{\tau_{1,n}(\mathbf{t} +[\lambda_{i-\gamma+1}^{-1}]_1)}{\tau_{0,n}(\mathbf{t})} \Big)_{\substack{1 \le \gamma \le i \\ 1}} \\[8pt] \Big(\kappa_{k-i-\delta+1}^{-1}\frac{\tau_{1,n-1}(\mathbf{t} +[\kappa_{k-i-\delta+1}^{-1}]_2)}{\tau_{0,n}(\mathbf{t})}\Big)_{\substack{1 \le \delta \le k-i \\ 1}} \end{pmatrix}_{k\times 1},  \quad
     {E} :=\begin{pmatrix} \Big(\frac{\tau_{1,n+m}(\mathbf{t})}{\tau_{0,n+m}(\mathbf{t})}\Big) _{\substack{0 \le m \le s-k \\ 1 }}\end{pmatrix}_{(s-k+1)\times 1},\\
   & \widetilde{A}:= \begin{pmatrix}
    \Big( \frac{\nu_{s-l-\beta+1}}{ \kappa_\delta -\nu_{s-l-\beta+1}} \frac{\tau_{0,n+1}(\mathbf{t} +[\kappa_\delta^{-1}]_2    -[\nu_\beta^{-1}]_2)}{\tau_{0,n+1}(\mathbf{t})}\Big)_{\substack{1 \le \beta \le s-l \\ 1 \le \delta \le k-i}} &
    \Big(  \frac{\tau_{0,n+2}(\mathbf{t} +[\lambda_\gamma^{-1}]_1-[\nu_{s-l-\beta+1}^{-1}]_2)}{\tau_{0,n+1}(\mathbf{t})}\Big)_{\substack{1 \le \beta \le s-l \\ 1 \le \gamma \le i}} \\[12pt]
     \Big( \frac{\tau_{0,n}(\mathbf{t} -[\mu_{l-\alpha+1}^{-1}]_1+[\kappa_\delta^{-1}]_2)}{\mu_{l-\alpha+1}\kappa_\delta\tau_{0,n+1}(\mathbf{t})}\Big)_{\substack{1 \le \alpha \le l \\ 1 \le \delta \le k-i}}&
    \Big(\frac{\lambda_\gamma}{\lambda_\gamma- \mu_{l-\alpha+1} } \frac{\tau_{0,n+1}(\mathbf{t} -[\mu_{l-\alpha+1}^{-1}]_1+[\lambda_\gamma^{-1}]_1)}{\tau_{0,n+1}(\mathbf{t})}\Big)_{\substack{1 \le \alpha \le l \\ 1 \le \gamma \le i}}
    \end{pmatrix}_{s\times k}, \\[12pt]
  &  \widetilde{B} := \begin{pmatrix}
   \Big( \kappa_\delta^{-m-1} \frac{\tau_{0,n-m}(\mathbf{t}+[\kappa_\delta^{-1}]_2)}{\tau_{0,n-m+1}(\mathbf{t})} \Big) _{\substack{0 \le m \le k-s-1 \\ 1 \le \delta \le k-i}} & \Big( -\lambda_\gamma^{m+1} \frac{\tau_{0,n-m+1}(\mathbf{t}+[\lambda_\gamma^{-1}]_1)}{\tau_{0,n-m+1}(\mathbf{t})} \Big) _{\substack{0 \le m \le k-s-1 \\ 1 \le \gamma \le i}}
    \end{pmatrix}_{(k-s)\times k}, \\[12pt]
    &\widetilde{C} := \begin{pmatrix}
 \Big( \kappa_\delta^{-m-1} \frac{\tau_{0,n-m}(\mathbf{t}+[\kappa_\delta^{-1}]_2)}{\tau_{0,n-m+1}(\mathbf{t})} \Big) _{\substack{0 \le m \le k-s-2 \\ 1 \le \delta \le k-i}} & \Big(- \lambda_\gamma^{m+1} \frac{\tau_{0,n-m+1}(\mathbf{t}+[\lambda_\gamma^{-1}]_1)}{\tau_{0,n-m+1}(\mathbf{t})} \Big) _{\substack{0 \le m \le k-s-2 \\ 1 \le \gamma \le i}}
    \end{pmatrix}_{(k-s-1)\times k},\\[12pt]
    &\widetilde{D}:= \begin{pmatrix}
 \Big(\frac{\tau_{1,n}(\mathbf{t}+[\kappa_\delta^{-1}]_2)}
    {\kappa_\delta\tau_{0,n+1}(\mathbf{t})}\Big)_{\substack{  1 \\1 \le \delta \le k-i}}&   \Big(  \frac{\tau_{1,n+1}(\mathbf{t}+[\lambda_\gamma^{-1}]_1)}{\tau_{0,n+1}(\mathbf{t})} \Big)_{\substack{1 \\ 1 \le \gamma \le i }}
    \end{pmatrix}_{1\times k}.
\end{align*}
\end{small}
\end{proposition}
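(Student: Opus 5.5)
The plan is to evaluate the left-hand side of Theorem \ref{mTodahigerfayide} directly, by normal-ordering the product of vertex operators, and then compare it with the determinant form of the right-hand side after every entry has been rewritten in terms of shifted tau functions.

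\textbf{Step 1: normal-order the vertex operators.} Each $X_a$ and $X_a^*$ factors as
\begin{itemize}
\item an exponential in $\mathbf{t}$,
\item times a shift operator $e^{\mp\xi(\widetilde{\partial}^{(a)},\cdot)}$,
\item times powers of the spectral parameter and $\Lambda^{\pm1}$.
\end{itemize}
Commuting all shift operators to the right uses $e^{-\xi(\widetilde\partial,\mu^{-1})}e^{\xi(\mathbf{t},\lambda)}=(1-\lambda/\mu)\,e^{\xi(\mathbf{t},\lambda)}e^{-\xi(\widetilde\partial,\mu^{-1})}$ together with its three analogues. This produces the pairwise factors $(\mu_\alpha-\mu_{\alpha'})$, $(\lambda_\gamma-\mu_\alpha)^{-1}$, $(\kappa_\delta-\nu_\beta)^{-1}$, and so on. Each mixed pair of types $1$ and $2$ gives only a trivial factor, since $\xi(\mathbf t^{(1)},\cdot)$ and $\widetilde\partial^{(2)}$ commute. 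The net $\Lambda$-shift is $\epsilon=(s-l)-(k-i)$, and each $\mu^{\pm n}$ factor, commuted past $\Lambda$, contributes powers of $\nu_\beta,\kappa_\delta$.

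After this step the left-hand side equals
\[
\mathcal{A}_{\mu,\nu,\kappa,\lambda}^{-1}\times(\text{monomials})\times e^{\sum\xi(\cdot)}\,\tau_{\vartheta,n+\epsilon}(\mathbf t+\delta\mathbf t).
\]

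\textbf{Step 2: rewrite the right-hand side entrywise in tau functions.} By the definitions \eqref{mToadwave1}, \eqref{mToadwave2}, every $\Psi_a$ and $\Psi_a^*$ in $\mathbf{B},\mathbf{C},\mathbf{D},\widetilde{\mathbf{B}},\widetilde{\mathbf{C}},\widetilde{\mathbf{D}}$ is a ratio of shifted tau functions times exponentials. Lemma \ref{DiffFerence} turns $\Delta\Psi_a^*$ into the same kind of ratio.

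For the potentials in $\mathbf{A}$ and $\widetilde{\mathbf{A}}$ I would use \eqref{Xab1}, read in reverse:
\[
\Omega\big(\Psi_a(\mu^{\eta_a}),\Delta\Psi_b^*(\lambda^{\eta_b})\big)=(-1)^{\delta_{a,b}}\,\frac{\mathbb{X}_{ab}\tau_{0,n}}{\tau_{0,n}}.
\]
The explicit forms of $\mathbb{X}_{ab}\tau_{0,n}/\tau_{0,n}$ given there show that this equals a two-point ratio $\tau_{0,n'}(\mathbf t+[\lambda^{-1}]-[\mu^{-1}])/\tau_{0,n}$, multiplied by a Cauchy kernel such as $\lambda/(\lambda-\mu)$ when $a=b$, or by a monomial when $a\neq b$. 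This reproduces exactly the entries of $A$ and $\widetilde A$ in the statement.

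For the $\Delta^m$ rows I would not expand $\Delta^m$. Instead, elementary row operations replace $\Delta^m$ by $\Lambda^m$. This gives the rows $\mu_\alpha^m\tau_{0,n+m}(\mathbf t-[\mu_\alpha^{-1}]_1)/\tau_{0,n+m}$ etc., up to the row factors $\tau_{1,n+m}/\tau_{0,n+m}$, which cancel the product $\prod_m\tau_{1,n+m}/\tau_{0,n+m}$ appearing in Theorem \ref{mTodahigerfayide}. For $\vartheta=1$, the column $\mathbf{D},\mathbf{E}$ becomes $D,E$ by the same row operations.

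\textbf{Step 3: factor out exponentials and equate.} Pull the common exponentials and $n$-powers out of each column of $\mathbf{M}^{(\vartheta)}$. These cancel against the exponentials produced in Step 1. Equating the two expressions and dividing by $\tau_{0,n}$ then yields \eqref{tau_unified_forward}. The case $s<k$ is identical with $\widetilde{\mathbf{M}}^{(\vartheta)}$ and $\Delta^*$-rows, shifted $n\to n-1$, giving \eqref{tau_unified_dual}.

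\textbf{Main obstacle.} The structure is clear; the hard part is the bookkeeping of signs and monomial prefactors. These come from three sources:
\begin{itemize}
\item commuting $\Lambda$ past $\mu^{\pm n}$ in the ordered product;
\item the row reversals $\lambda_{i-\gamma+1}$ and $\kappa_{k-i-\delta+1}$ in the determinant;
\item the signs $(-1)^{\delta_{a,b}}$ and $(-1)^{\delta_{a,2}}$ in the $X_a$.
\end{itemize}
I would first verify the exponents $\epsilon-1+\vartheta$, $-\epsilon-\vartheta$, etc., in the small cases $(s,k)=(1,0),(0,1),(1,1)$, and then track the sign by induction on adding one vertex operator at a time.
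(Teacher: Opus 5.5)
Your plan is the paper's proof. The paper likewise normal-orders the vertex operators to get \eqref{taumutidt}, rewrites the potentials in $\mathbf{M}^{(\vartheta)}$ and $\widetilde{\mathbf{M}}^{(\vartheta)}$ as products of shifted wave functions via Lemma \ref{Lem:eq} (equivalently \eqref{Xab1}) to obtain \eqref{eq:det_M_forward}--\eqref{eq:det_M_dual}, and substitutes into Theorem \ref{mTodahigerfayide}; your $\Delta^m\to\Lambda^m$ row reduction spells out a step the paper leaves tacit. There are two small bookkeeping slips: the pairwise factors you list make the normal-ordered product proportional to $\mathcal{A}_{\mu,\nu,\kappa,\lambda}$ itself, not $\mathcal{A}^{-1}$ (the $1/\mathcal{A}$ appears only after solving for $\tau_{\vartheta,n+\epsilon}(\mathbf{t}+\delta\mathbf{t})$), and the row factors extracted from $\Lambda^m\Psi_a$ are $\tau_{0,n+m}/\tau_{1,n+m}$, which is precisely why they cancel $\prod_m\tau_{1,n+m}/\tau_{0,n+m}$.
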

\begin{proof}First, the  vertex operator actions on the mToda tau pair  $(\tau_{0,n},\tau_{1,n})$,
\begin{align}
 &\left(\prod_{\gamma=i}^1 X_1^*(\lambda_\gamma) \prod_{\delta=k-i}^{1} X_2^*(\kappa_\delta) \prod_{\beta=s-l}^{1}X_2(\nu_\beta) \prod_{\alpha=l}^1 X_1(\mu_{\alpha})\right)(\tau_{\vartheta,n}) \nonumber\\
=& (-1)^{s-l+i}\mathcal{A}_{\mu,\nu,\kappa,\lambda}\cdot \big( \prod_{\gamma=1}^i \lambda_\gamma^{-n-\gamma+l+1}\big) \big( \prod_{\delta=1}^{k-i} \kappa_\delta^{n+\epsilon} \big)
\cdot \big( \prod_{\alpha=1}^l \mu_\alpha^{n-\alpha+\epsilon+1} \big) \big( \prod_{\beta=1}^{s-l} \nu_\beta^{-n-\epsilon+1} \big) \nonumber \\
&\cdot e^{ \xi (\mathbf{t}^{(1)}, \sum_{\alpha=1}^l \mu_\alpha - \sum_{\gamma=1}^i \lambda_\gamma)} e^{\xi(\mathbf{t}^{(2)}, \sum_{\beta=1}^{s-l} \nu_\beta - \sum_{\delta=1}^{k-i} \kappa_\delta)} \cdot \tau_{\vartheta,n+\epsilon}  \left(\mathbf{t}+\delta \mathbf{t}\right).\label{taumutidt}
\end{align}
On the other hand,  by Lemma \ref{Lem:eq}, we have:
\begin{align*}
&\Omega\Big(\Psi_1(n,\mathbf{t},\mu_\alpha),
\Delta\big(\Psi^*_1(n,\mathbf{t},\lambda_\gamma)\big)\Big)
=\Psi_1(n,\mathbf{t}+[\lambda^{-1}_\gamma]_1,\mu_\alpha)
\Psi^*_1(n,\mathbf{t},\lambda_\gamma),\\
&\Omega\Big(\Psi_2(n,\mathbf{t},\nu^{-1}_\beta),
\Delta\big(\Psi^*_1(n,\mathbf{t},\lambda_\gamma)\big)\Big)
=\Psi_2(n,\mathbf{t}+[\lambda^{-1}_\gamma]_1,\nu^{-1}_\beta)
\Psi^*_1(n,\mathbf{t},\lambda_\gamma),\\
&\Omega\Big(\Psi_1(n,\mathbf{t},\mu_\alpha),
\Delta\big(\Psi^*_2(n,\mathbf{t},\kappa^{-1}_\delta)\big)\Big)
=\Psi_1(n-1,\mathbf{t}+[\kappa^{-1}_\delta]_2,\mu_\alpha)
\Psi^*_2(n,\mathbf{t},\kappa^{-1}_\delta),\\
&\Omega\Big(\Psi_2(n,\mathbf{t},\nu^{-1}_\beta),
\Delta\big(\Psi^*_2(n,\mathbf{t},\kappa^{-1}_\delta)\big)\Big)
=\Psi_2(n-1,\mathbf{t}+[\kappa^{-1}_\delta]_2,\nu^{-1}_\beta)
\Psi^*_2(n,\mathbf{t},\kappa^{-1}_\delta).
\end{align*}
Thus, we can get
\begin{align}
    \det \!\big( \mathbf{M}^{(\vartheta)} \big) &= \Gamma_n(\mu,\nu,\lambda,\kappa) \cdot \left( \prod_{m=0}^{s-k-1+\vartheta} \frac{\tau_{0,n+m}}{\tau_{1,n+m}} \right) \cdot \det \!\big({M}^{(\vartheta)} \big), \label{eq:det_M_forward} \\[8pt]
    \det \!\big( \widetilde{\mathbf{M}}^{(\vartheta)} \big) &= \Gamma_{n+1}(\mu,\nu,\lambda,\kappa) \cdot \left( \prod_{m=0}^{k-s-1-\vartheta} \frac{\tau_{1,n-m}}{\tau_{0,n-m}} \right) \cdot \det \!\big(\widetilde{M}^{(\vartheta)} \big), \label{eq:det_M_dual}
\end{align}
where $\Gamma_n(\mu,\nu,\lambda,\kappa) :=  \prod_{\alpha=1}^l \mu_\alpha^n e^{\xi(\mathbf{t}^{(1)}, \mu_\alpha)} \prod_{\beta=1}^{s-l} \nu_\beta^{-n} e^{\xi(\mathbf{t}^{(2)}, \nu_\beta)} \prod_{\gamma=1}^i \lambda_\gamma^{-n} e^{-\xi(\mathbf{t}^{(1)}, \lambda_\gamma)} \prod_{\delta=1}^{k-i} \kappa_\delta^n e^{-\xi(\mathbf{t}^{(2)}, \kappa_\delta)}$.
Substituting  \eqref{taumutidt}--\eqref{eq:det_M_dual} into Theorem \ref{mTodahigerfayide}, we can get \eqref{tau_unified_forward} and \eqref{tau_unified_dual}.

\end{proof}

\begin{remark}
 According to Theorem \ref{Todaeigadjo},    the mToda generalized Fay identities can be viewed as a higher-dimensional generalization of the  Toda hierarchy.  For example, by imposing   parameter constraints  (e.g., setting $s=l=i=k$ and $s=l=k$, $i=0$ in \eqref{tau_unified_forward}, respectively), the  mToda generalized Fay identities  degenerate into the Toda   Fay identities, that is
\begin{align*}
&{\rm det}\left(\frac{\tau_{0,n}(\mathbf{t}-[u_i^{-1}]_1+[\lambda_j^{-1}]_1)}
{(\lambda_j-\mu_i)\tau_{0,n}(\mathbf{t})}\right)_{1\leq i,j\leq k}
=(-1)^{\frac{k(k-1)}{2}}\frac{\Delta(\lambda)\Delta(\mu)
}{\Pi_{i,j=1}^k(\lambda_i-\mu_j)}
\frac{\tau_{0,n}(\mathbf{t}+\sum_{j=1}^k[\lambda_j^{-1}]_1
-\sum_{i=1}^k[\mu_i^{-1}]_1)}
{\tau_{0,n}(\mathbf{t})},\\
&{\rm det}\left(\frac{\tau_{0,n-1}(\mathbf{t}-[u_i^{-1}]_1+[\lambda_j^{-1}]_2)}
{\tau_{0,n}(\mathbf{t})}\right)_{1\leq i,j\leq k}
={\Delta(\lambda^{-1})\Delta(\mu^{-1})
}
\frac{\tau_{0,n-k}(\mathbf{t}-\sum_{i=1}^k[\mu_i^{-1}]_1+\sum_{j=1}^k[\lambda_j^{-1}]_2
)}
{\tau_{0,n}(\mathbf{t})}.
\end{align*}
This proves that the core results in  such as Adler and van Moerbeke \cite{Adler1999} are actually specific, lower-dimensional projections of Theorem \ref{mTodahigerfayide} and this proposition.
 \end{remark}

\begin{example}
Here we present several examples in \eqref{tau_unified_forward} and \eqref{tau_unified_dual}.

    \paragraph{ {\bf {Examples for $\tau_{0,n}$:}}}
 \begin{itemize}
    \item $s=2, k=0, l=2, i=0$,
    \begin{align*}
        &(\mu_1 - \mu_2) \tau_{0,n+1}(\mathbf{t}) \tau_{0,n}(\mathbf{t}-[\mu_1^{-1}]_1-[\mu_2^{-1}]_1) \\
        &= \mu_1 \tau_{0,n}(\mathbf{t}-[\mu_2^{-1}]_1)\tau_{0,n+1}(\mathbf{t}-[\mu_1^{-1}]_1) - \mu_2 \tau_{0,n}(\mathbf{t}-[\mu_1^{-1}]_1)\tau_{0,n+1}(\mathbf{t}-[\mu_2^{-1}]_1).
    \end{align*}
 \item  $s=2, k=0, l=1, i=0$,
    \begin{align*}
        &\tau_{0,n}(\mathbf{t}-[\mu_1^{-1}]_1) \tau_{0,n+2}(\mathbf{t}-[\nu_1^{-1}]_2) \\
        &= \mu_1 \nu_1 \Big( \tau_{0,n+1}(\mathbf{t}-[\nu_1^{-1}]_2) \tau_{0,n+1}(\mathbf{t}-[\mu_1^{-1}]_1) - \tau_{0,n+1}(\mathbf{t}) \tau_{0,n+1}(\mathbf{t} - [\mu_1^{-1}]_1 - [\nu_1^{-1}]_2) \Big).
    \end{align*}
    \item   $s=2, k=0, l=0, i=0$,
    \begin{align*}
        &(\nu_1 - \nu_2) \tau_{0,n}(\mathbf{t}) \tau_{0,n+1}(\mathbf{t}) \tau_{0,n+2}(\mathbf{t} - [\nu_1^{-1}]_2 - [\nu_2^{-1}]_2) \\
        &= \nu_1 \tau_{0,n+1}(\mathbf{t}-[\nu_1^{-1}]_2)\tau_{0,n+2}(\mathbf{t}-[\nu_2^{-1}]_2) - \nu_2 \tau_{0,n+1}(\mathbf{t}-[\nu_2^{-1}]_2)\tau_{0,n+2}(\mathbf{t}-[\nu_1^{-1}]_2).
    \end{align*}

   \item  $s=2, k=1$, $l=2, i=1$, $\vartheta=0$,
\begin{align*}
    0 =& (\mu_1 - \mu_2)\tau_{0,n}(\mathbf{t} + [\mu_1^{-1}]_1 + [\mu_2^{-1}]_1)\tau_{0,n}(\mathbf{t} + [\lambda_1^{-1}]_1) \\
    &+ (\mu_2 - \lambda_1)\tau_{0,n}(\mathbf{t} + [\lambda_1^{-1}]_1 + [\mu_2^{-1}]_1)\tau_{0,n}(\mathbf{t} + [\mu_1^{-1}]_1) \\
    &+ (\lambda_1 - \mu_1)\tau_{0,n}(\mathbf{t} + [\lambda_1^{-1}]_1 + [\mu_1^{-1}]_1)\tau_{0,n}(\mathbf{t} + [\mu_2^{-1}]_1).
\end{align*}
This equation is the discrete KP (Hirota-Miwa) equation \cite{Savchenko20261,Miwa1982}.
   \item $s=2, k=1$, $l=1, i=0$,
\begin{align*}
    &\mu_1 \kappa_1 \nu_1 \tau_{0,n}(\mathbf{t}) \tau_{0,n}(\mathbf{t} - [\mu_1^{-1}]_1 + [\kappa_1^{-1}]_2 - [\nu_1^{-1}]_2) \nonumber \\
    &= \mu_1 \kappa_1 \nu_1 \tau_{0,n}(\mathbf{t}+[\kappa_1^{-1}]_2-[\nu_1^{-1}]_2)\tau_{0,n}(\mathbf{t}-[\mu_1^{-1}]_1) \nonumber \\
    &\quad - (\kappa_1-\nu_1) \tau_{0,n-1}(\mathbf{t}+[\kappa_1^{-1}]_2-[\mu_1^{-1}]_1)\tau_{0,n+1}(\mathbf{t}-[\nu_1^{-1}]_2).
\end{align*}

   \item $s=2, k=2$, $l=2, i=2$,
 \begin{align*}
        &(\mu_2-\mu_1)(\lambda_2-\lambda_1) \tau_{0,n}(\mathbf{t}) \tau_{0,n}(\mathbf{t}+[\lambda_1^{-1}]_1+[\lambda_2^{-1}]_1-[\mu_1^{-1}]_1-[\mu_2^{-1}]_1) \\
        &= (\lambda_1-\mu_1)(\lambda_2-\mu_2) \tau_{0,n}(\mathbf{t}+[\lambda_2^{-1}]_1-[\mu_1^{-1}]_1)\tau_{0,n}(\mathbf{t}+[\lambda_1^{-1}]_1-[\mu_2^{-1}]_1) \\
        &\quad - (\lambda_1-\mu_2)(\lambda_2-\mu_1) \tau_{0,n}(\mathbf{t}+[\lambda_2^{-1}]_1-[\mu_2^{-1}]_1)\tau_{0,n}(\mathbf{t}+[\lambda_1^{-1}]_1-[\mu_1^{-1}]_1).
    \end{align*}
   \item $s=1, k=2$, $l=1, i=2$,
   \begin{align*}
    &(\lambda_2-\lambda_1) \tau_{0,n}(\mathbf{t}) \tau_{0,n}(\mathbf{t}+[\lambda_1^{-1}]_1+[\lambda_2^{-1}]_1-[\mu_1^{-1}]_1) \nonumber \\
    &= (\lambda_2-\mu_1) \tau_{0,n}(\mathbf{t}-[\mu_1^{-1}]_1+[\lambda_1^{-1}]_1)\tau_{0,n}(\mathbf{t}+[\lambda_2^{-1}]_1) \nonumber \\
    &\quad - (\lambda_1-\mu_1) \tau_{0,n}(\mathbf{t}-[\mu_1^{-1}]_1+[\lambda_2^{-1}]_1)\tau_{0,n}(\mathbf{t}+[\lambda_1^{-1}]_1).
\end{align*}
 \end{itemize}

 \paragraph{ {\bf {Examples for $\tau_{1,n}$:}}}
  \begin{itemize}
     \item  $s=1, k=1, l=1, i=1$
    \begin{align*}
        &\mu_1 \tau_{0,n}(\mathbf{t}) \tau_{1,n}(\mathbf{t} + [\lambda_1^{-1}]_1 - [\mu_1^{-1}]_1) \\
        &= \lambda_1 \tau_{0,n}(\mathbf{t} + [\lambda_1^{-1}]_1 - [\mu_1^{-1}]_1) \tau_{1,n}(\mathbf{t}) - (\lambda_1 - \mu_1) \tau_{0,n}(\mathbf{t} - [\mu_1^{-1}]_1) \tau_{1,n}(\mathbf{t} + [\lambda_1^{-1}]_1).
    \end{align*}
   \item $s=1, k=2$, $l=1, i=2$
   \begin{align*}
    &\mu_1 (\lambda_2-\lambda_1) \tau_{0,n}(\mathbf{t}) \tau_{1,n}(\mathbf{t}+[\lambda_1^{-1}]_1+[\lambda_2^{-1}]_1-[\mu_1^{-1}]_1) \\
    &= \lambda_1(\lambda_2-\mu_1) \tau_{0,n}(\mathbf{t}-[\mu_1^{-1}]_1+[\lambda_1^{-1}]_1)\tau_{1,n}(\mathbf{t}+[\lambda_2^{-1}]_1) \\
    &\quad - \lambda_2(\lambda_1-\mu_1) \tau_{0,n}(\mathbf{t}-[\mu_1^{-1}]_1+[\lambda_2^{-1}]_1)\tau_{1,n}(\mathbf{t}+[\lambda_1^{-1}]_1).
\end{align*}
   \item $s=1, k=2$, $l=0, i=1$,
\begin{align*}
    &\nu_1 \kappa_1 \tau_{0,n}(\mathbf{t}) \tau_{1,n}(\mathbf{t}+[\lambda_1^{-1}]_1+[\kappa_1^{-1}]_2-[\nu_1^{-1}]_2) \\
    &= \nu_1 \kappa_1 \tau_{0,n}(\mathbf{t}+[\kappa_1^{-1}]_2-[\nu_1^{-1}]_2)\tau_{1,n}(\mathbf{t}+[\lambda_1^{-1}]_1) \\
    &\quad - (\kappa_1-\nu_1) \tau_{0,n+1}(\mathbf{t}+[\lambda_1^{-1}]_1-[\nu_1^{-1}]_2)\tau_{1,n-1}(\mathbf{t}+[\kappa_1^{-1}]_2).
\end{align*}

\end{itemize}
All the above cases correspond to the Fay identities of the mToda hierarchy. In particular, the derivations for $\tau_{0,n}$ also yield the  Fay identities of the Toda hierarchy \cite{Cheng2013,Teo2006}.

 \end{example}

\section{Conclusion and discussion}\label{section5}
In this paper, we have  investigated the  tau functions transformations and symmetries of the mToda hierarchy. The main results are summarized as follows:

\begin{itemize}
    \item \textbf{Relationship between Toda and mToda  tau  functions:} We have shown that the mToda tau pair $(\tau_{0,n},\tau_{1,n})$ can be regarded as two Toda tau functions satisfying the Toda bilinear equation, with their ratios corresponding to the Toda eigenfunctions and adjoint eigenfunctions        in Theorem \ref{Todaeigadjo}. Further, by the mToda eigenfunctions and adjoint eigenfunctions, we constructed the    transformations for the mToda  tau  pair, Lax operators, and wave functions in Theorem \ref{mTodaeigentran}. Moreover, we demonstrated how to generate new mToda  tau  pair  using the squared eigenfunction potential.

    \item \textbf{Symmetries and the ASvM formula:} We introduced the mToda squared eigenfunction symmetries  and proved that  the mToda additional flows $\pa^{*}_{\widetilde{q}_{n},\widetilde{r}_{n}}$ commute with  the time flows  in  Proposition \ref{pro1}. Via the actions of vertex operators, we   derived the ASvM formula, which connects  the additional flows on the wave functions with the Sato--B\"{a}cklund transformations of the tau functions in   Theorem \ref{ASvM}.

    \item \textbf{Multi-step  transformations and Fay identities:} We derived the mToda generalized addition  formulas (Fay identities)   by  the   equivalence between the multi--step Darboux transformations  and the   vertex operator actions on the  tau pair  $(\tau_{0,n},\tau_{1,n})$ in Theorem \ref{mTodahigerfayide}.  Further, the mToda generalized addition formulas (Fay identities)  can degenerate into classical Toda Fay identities under  parameter constraints in Proposition \ref{mTodaFa}.
\end{itemize}

In the context of integrable systems and coupled random matrices, the actions of vertex operators on Toda tau functions generate the Christoffel--Darboux kernels and Fredholm determinants \cite{Adler1999}. Furthermore, under the vacuum background reduction, the generalized Wronskian determinants constructed in this paper can  degenerate into Jacobi--Trudi identities. This  reduction provides an explicit algebraic framework to construct rational and polynomial solutions for the mToda hierarchy via Schur functions. Therefore, based on the determinant formulas in Theorem \ref{mTodahigerfayide}, we will further investigate  the   structures of the mToda hierarchy to expand the connections between integrable systems, representation theory, and random matrix theory.\\

\section{Appendix}
In this section, we will show that the squared eigenfunction potential (SEP) $\Omega_n := \Omega(f_n,g_n)$ is well-defined, where
\begin{align}
	\partial_{t_k^{(a)}} f_n &= B_k^{(a)}(f_n), \quad \partial_{t_k^{(a)}} g_n = -B_k^{(a)*}(g_n), \quad a = 1, 2, \label{eq:evolution_fg}
\end{align}
and  $B_k^{(a)}$ satisfy the zero-curvature equations
\begin{align}
	\partial_{t_l^{(b)}} B_k^{(a)} - \partial_{t_k^{(a)}} B_l^{(b)} + [B_k^{(a)}, B_l^{(b)}] &= 0, \quad b=1,2.\label{eq:zero_curvature}
\end{align}

 Before proving this, we introduce the following lemma.

\begin{lemma}\label{omedade}
For any difference operator $Q = \sum_{i=1}^{N} a_{i,n} P^i$ with $P \in \{\Delta, \Delta^*\}$, letting $f:=f_n(\mathbf{t})$ and $g:=g_n(\mathbf{t})$ be arbitrary functions, the following identity holds:
\begin{equation*}
    Q(f)g - f Q^*(g) = P \left( \mathrm{Res}_{P} \left( P^{-1} g Q f P^{-1} \right) \right),
\end{equation*}
where $\operatorname{Res}_P\sum_i a_i P^i=a_{-1}$.
\end{lemma}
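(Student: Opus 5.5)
The plan is to reduce to monomials and then compute both sides explicitly. Both sides of the identity are linear in $Q$, so it suffices to treat $Q=a_nP^i$ with $1\le i\le N$; the identity for general $Q$ follows by summing over $i$. I will do $P=\Delta$ in detail and obtain $P=\Delta^*$ by the symmetry $\Lambda\leftrightarrow\Lambda^{-1}$, which exchanges $\Delta$ and $\Delta^*$ and is compatible with taking adjoints since $\Delta^{*}$ is the formal adjoint of $\Delta$ and vice versa. Throughout, $Qf$ inside $\mathrm{Res}_P$ means operator composition, and $\mathrm{Res}_P$ is computed in the ring of pseudo-difference operators expanded in powers of $P$.

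\emph{Step 1 (left-hand side as a total difference).} The basic one-step identity is
\[
\Delta(u_n)v_n-u_n\Delta^*(v_n)=u_{n+1}v_n-u_nv_{n-1}=\Delta(u_nv_{n-1}).
\]
Iterating it by a discrete Green-formula telescoping gives
\[
\Delta^i(f)\,h-f\,(\Delta^*)^i(h)=\Delta\Big(\sum_{j=0}^{i-1}\Delta^{i-1-j}(f)\,\Lambda^{-1}\big((\Delta^*)^{j}(h)\big)\Big),
\]
which I will prove by induction on $i$. Applying it with $h=a_ng_n$ gives the left-hand side for $Q=a_n\Delta^i$, because $Q^*(g)=(\Delta^*)^i(a_ng)$. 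So it remains to show that the residue $\mathrm{Res}_\Delta(\Delta^{-1}g\,a\Delta^i f\,\Delta^{-1})$ equals the bracketed sum, possibly up to an additive constant in $n$, which does not matter once $\Delta$ is applied.

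\emph{Step 2 (the residue).} I will use the two projection rules already quoted in the proof of Proposition~\ref{pro1}: $(A_{\Delta,\ge1}f\Delta^{-1})_{\Delta,<0}=A_{\Delta,\ge1}(f)\,\Delta^{-1}$ and $(\Delta^{-1}fA_{\Delta,\ge1})_{\Delta,<0}=\Delta^{-1}(A_{\Delta,\ge1})^*(f)$. Only the negative part of $a\Delta^if\Delta^{-1}$ can contribute to the residue after left multiplication by $\Delta^{-1}g$. The first rule extracts $a\Delta^i(f)\Delta^{-1}$ as the purely negative piece of order $-1$. The remaining nonnegative part $\sum_{j\ge0}c_j\Delta^j$ has coefficients given by the Leibniz rule $\Delta(AB)=A_{n+1}\Delta(B)+\Delta(A)B$. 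Each of its terms then contributes through the second rule, via $\mathrm{Res}_\Delta(\Delta^{-1}g\,c_j\Delta^j)=\big((\Delta^*)^{j}\ldots\big)$-type expressions, producing exactly the terms $\Delta^{i-1-j}(f)\Lambda^{-1}(\Delta^*)^j(ag)$.

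The cleanest way to organize this is again an induction on $i$. Writing $a\Delta^{i+1}f\Delta^{-1}=a\Delta^{i}\big(f\Delta^{-1}\big)\Delta$ and expanding $\Delta f=f_{n+1}\Delta+\Delta(f)$, the residue for $i+1$ decomposes into the residue for $i$ with $f$ replaced by $\Delta(f)$, plus a term produced when $\Delta^{-1}g a\Delta^i$ is moved past the right-hand $\Delta$. By the adjoint rule, that extra term shifts the $g$-side by one application of $\Delta^*$. This reproduces the recursion satisfied by the telescoped sum in Step 1.

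\emph{Main obstacle.} The main difficulty is the bookkeeping in Step 2. One must keep the shift $\Lambda^{-1}$ and the order of operators straight: $\Delta^{-1}$ must be expanded consistently in negative powers of $\Delta$, which is the same expansion as $\iota_{\Lambda^{-1}}\Delta^{-1}$. Concretely, I must verify that the commutation $f\Delta^{-1}=\Delta^{-1}f_{n+1}-\Delta^{-1}\Delta(f)\Delta^{-1}$, iterated, produces precisely $\Lambda^{-1}$ acting on the $g$-factor. If this becomes messy, my fallback is to check the base case $i=1$ by a direct computation: $\Delta^{-1}ga\Delta f\Delta^{-1}=\Delta^{-1}ga f_{n+1}+\Delta^{-1}ga\Delta(f)\Delta^{-1}$, whose residue is $\Delta^{-1}$-coefficient $\Lambda^{-1}(ag)\,f$ up to a constant. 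Then I would run the induction purely at the level of the identity after applying $\Delta$, which avoids tracking integration constants.
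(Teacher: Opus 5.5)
Your reduction to monomials $Q=a\Delta^i$ is sound, and so is the telescoped Green formula of Step~1. So is the transfer to $P=\Delta^*$ by the reflection $n\mapsto -n$, which conjugates $\Lambda$ into $\Lambda^{-1}$, commutes with taking adjoints, and carries $\mathrm{Res}_\Delta$ to $\mathrm{Res}_{\Delta^*}$. The gap is in Step~2, which holds the entire content of the lemma. As written it is wrong in three places.

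First, the claim that only the negative part of $a\Delta^i f\Delta^{-1}$ contributes is backwards. That part is $a\Delta^i(f)\Delta^{-1}$, and $\Delta^{-1}g\,a\Delta^i(f)\Delta^{-1}$ has order at most $-2$, so it contributes nothing. The whole residue comes from the nonnegative part.

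Second, that nonnegative part is $\sum_{j=0}^{i-1}\binom{i}{j+1}a\,\Lambda^{j+1}(\Delta^{i-1-j}f)\,\Delta^j$, and its terms do not reproduce yours one by one. Already for $i=2$ the $j=0$ contribution is $2(ag)_{n-1}(\Delta f)_n$ rather than $(ag)_{n-1}(\Delta f)_n$; the two agree only after resumming.

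Third, the identity $a\Delta^{i+1}f\Delta^{-1}=a\Delta^{i}(f\Delta^{-1})\Delta$ on which you base the induction is false: its right-hand side is just $a\Delta^i f$. Also, the commutation rule is $f\Delta^{-1}=\Delta^{-1}\Lambda(f)+\Delta^{-1}\Delta(f)\Delta^{-1}$, with a plus sign.

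The recursion you describe in words is still the right one once the decomposition is done correctly. Put $h=ag$ and $R_i(f)=\mathrm{Res}_\Delta(\Delta^{-1}h\Delta^i f\Delta^{-1})$. From $\Delta\cdot f=\Lambda(f)\Delta+\Delta(f)$,
\[
\Delta^{-1}h\Delta^{i+1}f\Delta^{-1}=\Delta^{-1}h\Delta^{i}\Lambda(f)+\Delta^{-1}h\Delta^{i}\Delta(f)\Delta^{-1}.
\]
The second term has residue $R_i(\Delta(f))$. The first is $\Delta^{-1}hB$ with $B=\Delta^i\Lambda(f)$ of nonnegative order. By the adjoint rule (valid for any operator with only nonnegative powers of $\Delta$), its negative part is $\Delta^{-1}\cdot\big(\Lambda(f)\,(\Delta^*)^i(h)\big)$, whose residue is $f\,\Lambda^{-1}\big((\Delta^*)^i(h)\big)$. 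Hence $R_{i+1}(f)=R_i(\Delta(f))+f\,\Lambda^{-1}\big((\Delta^*)^i(h)\big)$ with $R_0=0$. This is exactly the recursion your telescoped sum obeys, so the two coincide identically, with no additive constant needed.

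With this repair your argument is a valid alternative to the paper's. The paper avoids both the monomial reduction and the induction. It writes $gQf=\sum_j b_{j,n}\Delta^j$ and observes that the left-hand side equals $(gQf)(1)-(gQf)^*(1)=\sum_{j\ge1}\Delta(-\Delta)^{j-1}(b_{j,n-j})$. It then reads off the same sum as the residue from $\Delta^{-1}b=\sum_{m\ge1}(-\Delta)^{m-1}(b_{n-m})\Delta^{-m}$. Because both sides are expressed through the same coefficients $b_j$, the term-matching difficulty you ran into never arises. Your route, in return, yields an explicit primitive written directly in terms of $f$ and $ag$.
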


\begin{proof}
Here we only prove the case for $P = \Delta$, as the proof for $P = \Delta^*$ follows similarly.
Let $A = g Q f$, which can be expanded as $A = \sum_{j=0}^{M} b_{j,n} \Delta^j$. Thus,
\begin{align*}
    Q(f)g - f Q^*(g) &= (g Q f)(1) - (f Q^* g)(1) = A(1) - A^*(1) = \sum_{j=1}^M \Delta(-\Delta)^{j-1}(b_{j,n-j}),
\end{align*}
where we have used $(\Delta^*)^j(b_n) = (-1)^j \Delta^j(b_{n-j})$.
On the other hand, by   $\Delta^{-1} \cdot b_{j,n} = \sum_{i=1}^{\infty} (-\Delta)^{i-1}(b_{j,n-i}) \Delta^{-i}$, we obtain:
\begin{align*}
    \Delta \left( \mathrm{Res}_{\Delta} \left( \Delta^{-1} g Q f \Delta^{-1} \right) \right)  = \Delta \Big( \mathrm{Res}_{\Delta} \big( \Delta^{-1} \sum_{j=0}^{M} b_{j,n} \Delta^{j-1} \big) \Big)  = \Delta \Big( \sum_{j=1}^{M} (-\Delta)^{j-1} b_{j,n-j} \Big).
\end{align*}
This completes the proof.
\end{proof}

\begin{lemma}\label{lemma:compatibility}
For arbitrary functions $f_n$ and $g_n$ satisfying \eqref{eq:evolution_fg}, we have
\begin{align}
    \partial_{t_k^{(1)}}(f_n g_n) &= \Delta \left( \mathrm{Res}_{\Delta}\big(\Delta^{-1} g_n B_k^{(1)} f_n \Delta^{-1}\big) \right), \label{eq:comp_t1} \\
    \partial_{t_k^{(2)}}(f_n g_n) &= - \Delta \Lambda^{-1} \left( \mathrm{Res}_{\Delta^*}\big(\Delta^{*-1} g_n B_k^{(2)} f_n \Delta^{*-1}\big) \right). \label{eq:comp_t2}
\end{align}
\end{lemma}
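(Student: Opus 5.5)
The plan is to prove \eqref{eq:comp_t1} by a direct product-rule computation followed by Lemma \ref{omedade}; \eqref{eq:comp_t2} is handled the same way, with the extra shift $\Delta\Lambda^{-1}$ coming from converting $\Delta^*$-derivatives into $\Delta$-derivatives.

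\textbf{The $t^{(1)}$ flow.}
\begin{itemize}
\item By \eqref{eq:evolution_fg},
\begin{align*}
\partial_{t_k^{(1)}}(f_ng_n)=B_k^{(1)}(f_n)\,g_n-f_n\,B_k^{(1)*}(g_n).
\end{align*}
\item The operator $B_k^{(1)}=(L_1^k)_{\Delta,\geq1}$ has the form $\sum_{i=1}^{N}a_{i,n}\Delta^i$ with no $\Delta^0$ term. This is exactly the hypothesis of Lemma \ref{omedade} with $P=\Delta$ and $Q=B_k^{(1)}$.
\item Lemma \ref{omedade} then gives
\begin{align*}
B_k^{(1)}(f_n)g_n-f_nB_k^{(1)*}(g_n)=\Delta\big(\mathrm{Res}_\Delta(\Delta^{-1}g_nB_k^{(1)}f_n\Delta^{-1})\big),
\end{align*}
which is \eqref{eq:comp_t1}.
\item One point needs checking. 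Here $\Delta^{-1}$ must be expanded as $\sum_{i\geq1}(-1)^{i-1}\Lambda^{-i}\cdots$, i.e.\ in negative powers of $\Delta$, as in the proof of Lemma \ref{omedade}. With that expansion the residue is a finite sum.
\end{itemize}

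\textbf{The $t^{(2)}$ flow.}
\begin{itemize}
\item Again by \eqref{eq:evolution_fg},
\begin{align*}
\partial_{t_k^{(2)}}(f_ng_n)=B_k^{(2)}(f_n)g_n-f_nB_k^{(2)*}(g_n),
\end{align*}
where now $B_k^{(2)}=(L_2^k)_{\Delta^*,\geq1}=\sum_{i\geq1}a_{i,n}(\Delta^*)^i$.
\item Lemma \ref{omedade} with $P=\Delta^*$ yields
\begin{align*}
\Delta^*\big(\mathrm{Res}_{\Delta^*}(\Delta^{*-1}g_nB_k^{(2)}f_n\Delta^{*-1})\big).
\end{align*}
\item It remains to rewrite $\Delta^*$ acting on a function. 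Since $\Delta^*=\Lambda^{-1}-1=-\Delta\Lambda^{-1}$ as operators, and applying this to a function $h_n$ gives $h_{n-1}-h_n=-\Delta(h_{n-1})=-\Delta\Lambda^{-1}(h_n)$, we obtain exactly the right-hand side of \eqref{eq:comp_t2}.
\end{itemize}

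\textbf{Main obstacle.} The only delicate point is the $\Delta^*$ version of Lemma \ref{omedade}, which the paper states without proof ("follows similarly"). There the identity $(\Delta^{*})^{*}=\Delta$ and the expansion of $\Delta^{*-1}$ in negative powers of $\Delta^*$ must be checked carefully, including the sign of the residue term. I would verify this by repeating the computation in the proof of Lemma \ref{omedade}, using $\Delta^j(b_n)=(-1)^j(\Delta^*)^j(b_{n+j})$.

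Once that is confirmed, the two displayed identities follow immediately. They are the ingredients for defining $\Omega(f_n,g_n)$ by $\Delta\Omega=f_ng_n$ together with the above flows, and they show that this definition is compatible.
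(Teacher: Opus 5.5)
Your proposal is correct and is essentially the paper's proof: product rule from \eqref{eq:evolution_fg}, Lemma~\ref{omedade} with $(P,Q)=(\Delta,B_k^{(1)})$ and $(P,Q)=(\Delta^*,B_k^{(2)})$, then $\Delta^*=-\Delta\Lambda^{-1}$ on functions; your planned check of the $\Delta^*$ case via $(\Delta^*)^*=\Delta$ and $\Delta^j(b_n)=(-1)^j(\Delta^*)^j(b_{n+j})$ does go through. Two small points: the alternating sign you attach to $\Delta^{-1}$ belongs to the commutation rule $\Delta^{-1}\cdot b_n=\sum_{i\geq1}(-\Delta)^{i-1}(b_{n-i})\Delta^{-i}$, not to $\iota_{\Lambda^{-1}}\Delta^{-1}=\sum_{i\geq1}\Lambda^{-i}$ (harmless, since only the $\Delta$-residue enters), and your closing compatibility remark goes beyond the statement, because the displayed identities only match each $t$-flow with $\Delta\Omega_n=f_ng_n$, while the paper's proof also addresses cross-flow consistency via \eqref{eq:zero_curvature}.
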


\begin{proof}
First, by \eqref{eq:evolution_fg}, we have $\partial_{t_k^{(a)}}(f_n g_n) = B_k^{(a)}(f_n)g_n - f_n B_k^{(a)*}(g_n)$ for $a=1,2$.
Applying Lemma \ref{omedade} with $P=\Delta$ and $Q=B_k^{(1)}$ for $a=1$, and similarly with $P=\Delta^*$ and $Q=B_k^{(2)}$   with  $\Delta^*=-\Delta \Lambda^{-1}$ for $a=2$  yields \eqref{eq:comp_t1} and \eqref{eq:comp_t2}.
Further, by direct computation:
\begin{align*}
    [\partial_{t_k^{(a)}}, \partial_{t_l^{(b)}}](f_n g_n) &= \left( \partial_{t_k^{(a)}}B_l^{(b)} - \partial_{t_l^{(b)}}B_k^{(a)} + [B_l^{(b)}, B_k^{(a)}] \right)(f_n)g_n \\
    &\quad - f_n\left( \partial_{t_k^{(a)}}B_l^{(b)} - \partial_{t_l^{(b)}}B_k^{(a)} + [B_l^{(b)}, B_k^{(a)}] \right)^*(g_n),
\end{align*}
according to \eqref{eq:zero_curvature}, we can get  $[\partial_{t_k^{(a)}}, \partial_{t_l^{(b)}}](f_n g_n)=0$.
\end{proof}
Due to Lemma \ref{lemma:compatibility}, we can define the squared eigenfunction potential $\Omega_n=\Omega(f_n,g_n)$ by:
\begin{align*}
     &\Delta \Omega_n= f_n g_n, \\
    &\partial_{t_k^{(1)}} \Omega_n = \mathrm{Res}_{\Delta}\left(\Delta^{-1} g_n B_k^{(1)} f_n \Delta^{-1}\right), \\
    &\partial_{t_k^{(2)}} \Omega_{n+1} = -\mathrm{Res}_{\Delta^*}\left(\Delta^{*-1} g_n B_k^{(2)} f_n \Delta^{*-1}\right).
\end{align*}\\

\noindent{\bf Conflict of Interest}:
The author has no conflicts to disclose.\\

\noindent{\bf Consent to Participate declaration}: not applicable.\\

\noindent{\bf Consent to Publish declaration}: not applicable.\\

\noindent{\bf Data availability}:
Data sharing is not applicable to this article as no new data were created or analyzed in this study.\\

\noindent{\bf Ethics declaration}: not applicable.\\

\noindent{\bf Funding declaration}:
This work is supported by the National Natural Science Foundation of China
(Grant Nos. 12571271 and 12261072) and   Huaqiao University Research Startup Funds (Nos. 26BS117).

\end{document}